\tikzstyle{Large Black}=[fill=black, draw=black, shape=circle, tikzit category=GR, tikzit fill=black]
\tikzstyle{Large Empty}=[fill=white, draw=black, shape=circle, tikzit category=GR, tikzit fill=white]
\tikzstyle{Large Empty Box}=[fill=white, draw=black, shape=rectangle, tikzit category=GR]
\tikzstyle{Small Black}=[fill=black, draw=black, shape=circle, inner sep=0pt, minimum size=6pt, tikzit category=GR]
\tikzstyle{Small Empty}=[fill=white, draw=black, shape=circle, inner sep=0pt, minimum size=6pt, tikzit category=GR]
\tikzstyle{Small Empty Box}=[fill=white, draw=black, shape=rectangle, inner sep=0pt, minimum size=8pt, tikzit category=GR]
\tikzstyle{Small Grey}=[fill={rgb,255: red,119; green,119; blue,119}, draw={rgb,255: red,119; green,119; blue,119}, shape=diamond, tikzit category=GR, tikzit fill={rgb,255: red,119; green,119; blue,119}, inner sep=0pt, minimum size=6pt]
\tikzstyle{Small Grey Empty}=[draw={rgb,255: red,119; green,119; blue,119}, shape=circle, tikzit category=GR, tikzit fill={rgb,255: red,119; green,119; blue,119}, inner sep=0pt, minimum size=6pt]
\tikzstyle{GreyTEXT}=[align=center, text={rgb,255: red,119; green,119; blue,119}]
\tikzstyle{Tiny Black}=[fill=black, draw=black, shape=circle, inner sep=0pt, minimum size=4pt, tikzit category=GR]
\tikzstyle{BlackTEXT}=[align=center]
\tikzstyle{empty}=[-, fill={rgb,255: red,191; green,191; blue,191}, draw={rgb,255: red,191; green,191; blue,191}]
\tikzstyle{brace edge}=[-, tikzit draw=blue, decorate, decoration={brace,amplitude=1mm,raise=-1mm}]
\tikzstyle{grey}=[-, draw={rgb,255: red,119; green,119; blue,119}, densely dashed]
\tikzstyle{directed}=[->]
\tikzstyle{thick}=[-, line width=1mm]
\newcommand{\optionaldesc}[2]{%
  \phantomsection
  #1\protected@edef\@currentlabel{#1}\label{#2}%
}
\theoremstyle{definition}
\newtheorem{theorem}{Theorem}[section]
\newtheorem{corollary}[theorem]{Corollary}
\newtheorem{lemma}[theorem]{Lemma}
\newtheorem*{lemma*}{Lemma}
\newtheorem*{proposition*}{Proposition}
\newtheorem*{remark*}{Remark}
\newtheorem{definition}[theorem]{Definition}
\newtheorem{observation}[theorem]{Observation}
\newtheorem*{rep@theorem}{\rep@title}
\newcommand{\newreptheorem}[2]{%
	\newenvironment{rep#1}[1]{%
    \def\rep@title{#2 \ref{##1} (restated)}%
		\begin{rep@theorem}}%
		{\end{rep@theorem}}}
\newcommand{\RM}{\mathrm}
\newcommand{\counting}[1][]{\mathord{\#}\mathrm #1} 
\newcommand{\customleft}{\left[\!\!\left\langle} 
\newcommand{\customright}{\right\rangle\!\!\right]} 
\newcommand{\bigO}{\mathcal{O}} 
\newcommand{\rowbasechange}[2]{{#1}_{#2}}
\newcommand{\colbasechange}[2]{{#1}^{#2}}
\newcommand{\starpart}{P}
\newcommand{\solved}{S}
\newcommand{\tosolve}{L}
\newcommand{\LS}{K_{LS}}
\newcommand{\ILS}{K_{ILS}}
\newcommand{\proc}[1]{#1}
\newcommand{\expl}[1]{{\emph{Explanation of the step:} #1}}
\DeclareMathOperator{\oddop}{Odd}
\newcommand{\odd}[1]{\oddop\mathopen{}\left(#1\right)\mathclose{}} 
\newcommand{\codd}[1]{\oddop\mathopen{}\customleft #1\customright\mathclose{}} 
\DeclareMathOperator{\rank}{rank}
\newcommand{\ind}[2]{\mathds{1}_{#1}^{#2}} 
\DeclareMathOperator{\support}{supp} 
\newcommand{\supp}[1]{\support\left(#1\right)} 
\newcommand{\symd}{\mathbin{\Delta}\xspace} 
\newcommand{\trl}{\triangleleft} 
\newcommand{\meas}[1]{\Lambda_{#1}} 
\newcommand{\measplanar}{\meas{planar}} 
\newcommand{\measPauli}{\meas{Pauli}} 
\newcommand{\comp}[1]{\bar{#1}} 
\newcommand{\ld}{\lambda}
\newcommand{\sse}{\subseteq}
\newcommand{\abs}[1]{\left| #1 \right|} 
\newcommand{\FF}{\mathbb{F}}
\newcommand{\someset}{\mathcal{A}}
\newcommand{\otherset}{\mathcal{D}}
\newcommand{\A}{{A}}
\newcommand{\Azz}{\A_{00}}
\newcommand{\Azo}{\A_{01}}
\newcommand{\Aoz}{\A_{10}}
\newcommand{\Aoo}{\A_{11}}
\newcommand{\Czz}{C_{00}}
\newcommand{\I}{Id}
\let\dl=\delta
\newcommand{\Xlike}{\mathcal{X}}
\newcommand{\Zlike}{\mathcal{Z}}
\newcommand{\internal}{\mathcal{B}}
\newcommand{\paulis}{\mathcal{P}}
\newcommand{\planar}{\mathcal{L}}
\newcommand{\fullN}{S}
\newcommand{\LOG}{labelled open graph}
\newcommand{\triplecase}[6]{
    #1 \quad & \text{and} \quad #2, \text{or}\\
    #3 \quad & \text{and} \quad #4, \text{or}\\
    #5 \quad & \text{and} \quad #6.
}
\title{An algebraic interpretation of Pauli flow, leading to faster flow-finding algorithms}
\author[1]{Piotr Mitosek}
\author[2]{Miriam Backens}
\affil[1]{School of Computer Science, University of Birmingham, UK}
\affil[2]{Inria \& Loria, Nancy, France}
\affil[1]{\textit {pbm148@student.bham.ac.uk}}
\begin{document}

\maketitle

\begin{abstract}
    \noindent The one-way model of quantum computation is an alternative to the circuit model. A one-way computation is driven entirely by successive adaptive measurements of a pre-prepared entangled resource state. For each measurement, only one outcome is desired; hence a fundamental question is whether some intended measurement scheme can be performed in a robustly deterministic way. So-called flow structures witness robust determinism by providing instructions for correcting undesired outcomes. Pauli flow is one of the broadest of these structures and has been studied extensively. It is known how to find flow structures in polynomial time when they exist; nevertheless, their lengthy and complex definitions often hinder working with them.

    We simplify these definitions by providing a new algebraic interpretation of Pauli flow. This involves defining two matrices arising from the adjacency matrix of the underlying graph: the flow-demand matrix $M$ and the order-demand matrix $N$. We show that Pauli flow exists if and only if there is a right inverse $C$ of $M$ such that the product $NC$ forms the adjacency matrix of a directed acyclic graph. From the newly defined algebraic interpretation, we obtain $\bigO(n^3)$ algorithms for finding Pauli flow, improving on the previous $\bigO(n^4)$ bound for finding generalised flow, a weaker variant of flow, and $\bigO(n^5)$ bound for finding Pauli flow. We also introduce a first lower bound for the Pauli flow-finding problem, by linking it to the matrix invertibility and multiplication problems over $\FF_2$.
\end{abstract}

\section{Introduction}\label{sec:introduction}

Quantum computing offers not only computational capabilities that go beyond those of classical computers, it also gives rise to entirely non-classical models of computation such as the measurement-based one-way model \cite{raussendorfOneWayQuantumComputer2001, raussendorfOnewayQuantumComputer2002}.
The one-way model makes use of quantum entanglement and of the property that quantum measurements change the state which is being measured.

A one-way computation thus consists of two parts:
Firstly, an entangled graph state is prepared as a resource.
There are universal families of graph states that enable any computation to be performed as long as the state is large enough \cite{raussendorfMeasurementbasedQuantumComputation2003,broadbentUniversalBlindQuantum2009,mhallaGraphStates2013}.
Secondly, the computation itself proceeds via successive adaptive single-qubit measurements on this resource state\footnote{It is also possible to intertwine the two steps, e.g.\ to re-use the same physical systems as different qubits over the course of a single computation, but for simplicity we will consider state preparation and measurement phase separately here.}.
Individual measurements are non-deterministic and only one of the two possible outcomes of a single-qubit measurement drives the computation in the desired direction.
Nevertheless, with an appropriate combination of resource\ and measurements, it is possible to modify later measurements to compensate for having seen an undesired outcome earlier.
In that way the computation as a whole is deterministic in a suitable sense, referred to as `strong uniform stepwise determinism' \cite{danosDeterminismOnewayModel2006,browneGeneralizedFlowDeterminism2007} or `robust determinism' \cite{perdrixDeterminismComputationalPower2017,mhallaCharacterisingDeterminismMBQCs2022}.

This combination of resource preparation and measurement means that the one-way model is particularly well suited to a client-server split: for example, the blind quantum computing scheme allows computations to be securely delegated to one or more quantum servers while requiring no or very limited quantum computational abilities on the part of the client \cite{broadbentUniversalBlindQuantum2009,fitzsimonsUnconditionally2017,kashefiMultiparty2017}.
Moreover, the one-way model also links well with quantum error-correcting codes such as the surface code \cite{raussendorfFaultTolerant2006,kashefiInformation2009}.

In addition to these practical applications, the one-way model is also of theoretical significance: the more flexible structure of graph states as compared to quantum circuits means it is often easier to optimise computations expressed in the one-way model \cite{broadbentParallelizingQuantumCircuits2009,duncanGraphtheoreticSimplificationQuantum2020,backensThereBackAgain2021,staudacherReducing2QuBitGate2023}.
One-way computations are formally described by \emph{measurement patterns} using the measurement calculus : sequences of commands for the preparation of the resource state, the measurements, and the potentially-required corrections \cite{danosMeasurementCalculus2007,danosExtendedMeasurementCalculus2009}.
Yet all the information required to determine the linear map implemented by a one-way computation, as well as to check whether it can be implemented deterministically, can also be expressed as a so-called labelled open graph together with an assignment of measurement angles for each non-output qubit.
The \emph{labelled open graph} $\Gamma$ underlying a one-way computation consists of:
\begin{itemize}[noitemsep,topsep=0pt,parsep=0pt,partopsep=0pt]
 \item the mathematical graph $G=(V,E)$ corresponding to the resource graph state,
 \item two subsets $I,O\sse V$ corresponding to the input and output qubits of the computation, and
 \item a function $\ld:\comp{O}\to\{X,Y,Z,XY,XZ,YZ\}$ assigning to each non-output qubit one of the given six measurement labels. (For an example, see the left-most part of Figure~\ref{fig:loG}.)
\end{itemize}
A one-way computation is robustly deterministic computation if and only if the associated labelled open graph satisfies a property called \emph{Pauli flow} \cite{browneGeneralizedFlowDeterminism2007,mhallaCharacterisingDeterminismMBQCs2022} (the measurement angles do not affect the presence or absence of flow).
If all the measurement labels are in the set $\{XY,XZ,YZ\}$ -- called `planar measurements' -- the necessary and sufficient condition is called \emph{(extended) gflow}\footnote{There is also a type of flow called `extended Pauli flow', but the relationship between `extended gflow' and `gflow' is different to that between `extended Pauli flow' and `Pauli flow': the plain term \emph{gflow} denotes either an extended gflow with only $XY$-measurements or it is a synonym for extended gflow, whereas extended Pauli flow is a proper generalisation of Pauli flow.
For measurement patterns (which implicitly define an order on the measurement operations) instead of labelled open graphs, it is only extended Pauli flow that is necessary \cite{mhallaCharacterisingDeterminismMBQCs2022}.} \cite{browneGeneralizedFlowDeterminism2007}.
Each measurement pattern has a unique underlying labelled open graph, but a labelled open graph generally supports multiple measurement patterns with different time orderings or correction procedures.
Up to information about the measurement angles (which is irrelevant to the question of robust determinism), the labelled open graph already determines which linear map is implemented.
For this reason, it is often useful to work directly with labelled open graphs instead of with measurement patterns.

Thus, a labelled open graph is said to have extended Pauli flow (or any other type of flow) if at least one of the measurement patterns it supports satisfies the extended Pauli flow property (or the relevant flow property).
It turns out that, whenever a labelled open graph has extended Pauli flow, it also supports measurement patterns with the more restricted properties of \emph{Pauli flow} and \emph{focused Pauli flow} \cite{simmonsRelatingMeasurementPatterns2021,mhallaCharacterisingDeterminismMBQCs2022}.
Similarly, in the case of planar measurements, whenever a labelled open graph has extended gflow it also has \emph{focused gflow} \cite{mhallaWhichGraphStates2014a,backensThereBackAgain2021}.
`Focused' here means that corrections are restricted in certain ways that will be made precise in Section~\ref{sec:background}.

There exist algorithms for finding different types of flow on a labelled open graph, or determining that no flow of the given type exists.
If all measurements are of $XY$-type, a so-called \emph{causal flow} can be found in polynomial time \cite{debeaudrapCompleteAlgorithmFind2007,deBeaudrapFinding2008} (yet this causal flow is not necessary for robust determinism \cite{browneGeneralizedFlowDeterminism2007}).
The algorithm for finding causal flow was improved to $\bigO(m)$ in \cite{mhallaFindingOptimalFlows2008a}, where $m$ is the number of edges.
If all measurements are planar, the gflow-finding algorithm runs in $\bigO(n^4)$, where $n$ is the number of vertices, and it always returns a focused gflow \cite{mhallaFindingOptimalFlows2008a,backensThereBackAgain2021}.
In the more general case, the Pauli flow-finding algorithm uses $\bigO(n^5)$ operations and is not guaranteed to return a focused flow \cite{simmonsRelatingMeasurementPatterns2021}.

For labelled open graphs with only $XY$-measurements, a particularly nice characterisation of focused gflow is known: if it exists, the focused gflow  corresponds to a right inverse of a certain submatrix of the graph adjacency matrix \cite{mhallaWhichGraphStates2014a}.
This inverse encodes the correction procedure and a partial time order: for the latter reason, it must be interpretable as a directed acyclic graph (DAG).
If the numbers of inputs and outputs are equal, then the above submatrix is square, hence any right inverse is a two-sided inverse and focused gflow on such labelled open graphs is unique and reversible \cite{mhallaWhichGraphStates2014a}.
The uniqueness of focused gflow for $\abs{I}=\abs{O}$ extends to the case where all planar measurements are allowed \cite{backensThereBackAgain2021}.
Similarly, focused Pauli flow is known to be unique whenever $\abs{I}=\abs{O}$ \cite{simmonsRelatingMeasurementPatterns2021}, although both results were shown using other techniques.
An extension of the linear algebraic technique to labelled open graphs with only $X$ and $Z$ measurements\footnote{The same paper also mentioned a to-be-published version for $XY$ and $X$ measurements: this paper is that publication.} was used by one of the authors to find measurement labellings compatible with Pauli flow for unlabelled open graphs \cite{mitosekPauliFlowOpen2024}.

In this paper, we give a linear algebraic characterisation for focused Pauli flow and for all labelled open graphs with $\abs{I}\leq\abs{O}$.
We show how to construct a `flow-demand matrix', which takes the place of the submatrix of the adjacency matrix.
A right inverse $C$ of this matrix specifies a correction procedure.
To see whether this correction procedure is compatible with a Pauli flow, we construct an `order-demand matrix' whose matrix product with $C$ is required to be interpretable as a DAG.

Given equal numbers of inputs and outputs, the flow-demand matrix is square, so $C$ (if it exists) is unique.
Therefore, we obtain an alternative proof that a focused Pauli flow is also unique when it exists.
Moreover, we prove that under these conditions focused Pauli flow is reversible, in the sense that the labelled open graph where the sets of inputs and outputs are swapped has focused Pauli flow if and only if the original one does.
For $XY$-measurements only, this reversibility holds in a very strong sense \cite{mhallaWhichGraphStates2014a}: if the original gflow has correction matrix $C$, then the reversed gflow has correction matrix $C^T$.
We show this strong reversibility extends to Pauli flow on labelled open graphs with $X$ and $Y$ measurements too.
Yet if there are any $XZ$, $YZ$, or $Z$ measurements, the relationship between the correction functions of the original and reversed Pauli flow becomes more complicated.
Nevertheless, the partial order relationships among planar-measured internal vertices are exactly inverted in the reversed Pauli flow.

Furthermore, we consider the case where there are more outputs than inputs, so the flow-demand matrix is not square and may have multiple right inverses.
A change of basis nevertheless allows us to efficiently find a $C$ that satisfies the DAG property with the order-demand matrix, or to determine that no such $C$ exists.

Overall, our linear-algebraic Pauli flow-finding algorithm runs in time $\bigO(n^3)$ on $n$-vertex labelled open graphs, both when $\abs{I}=\abs{O}$ and when $\abs{I}<\abs{O}$.
For the case $\abs{I}=\abs{O}$, we reduce from the problem of finding a matrix inverse over $\FF_2$ to show that the complexity must be at least $\Omega(n^2)$.
The worse complexity of finding Pauli flow (compared to gflow) had been an obstacle to using Pauli flow in circuit optimisation~\cite{staudacherReducing2QuBitGate2023}: this obstacle is now removed.
The faster flow-finding algorithm may also have practical impact on algorithms for translating one-way computations into quantum circuits, for a more in-depth discussion of this see Section~\ref{sec:summary and future work}.

In a labelled open graph with $\abs{I}<\abs{O}$, there exist sets of correction instructions whose net effect is trivial.
Such correction instructions correspond to sets of vertices called \emph{focused sets}, which were previously considered by Simmons \cite{simmonsRelatingMeasurementPatterns2021}.
We use the linear algebra perspective to gain more insight into the properties of focused sets and under which conditions they can be used to transform one focused Pauli flow into another focused Pauli flow.

The remainder of the paper is structured as follows. In Section~\ref{sec:background}, we present the background material, formally introduce Pauli flow and give some examples. Next, in Section~\ref{sec:new algebraic interpretation}, we introduce the new algebraic interpretation and show its properties. Then, in Section~\ref{sec:algorithms}, we use the algebraic interpretation to develop improved flow-finding algorithms. We conclude the paper in Section~\ref{sec:summary and future work}, where we also discuss ideas for future work.

\section{Background}\label{sec:background}

In this section, we introduce the essential definitions, including labelled open graphs, Pauli flow, and the focusing conditions.
We also give relevant examples.
General knowledge of the one-way model of measurement-based quantum computation (MBQC) is assumed, to refresh their knowledge the reader may consult e.g.~\cite{browneGeneralizedFlowDeterminism2007}.
As mentioned in the introduction, we do not use the measurement calculus but instead work directly with labelled open graphs.

To begin, an open graph is a graph with distinguished sets of input and output vertices.

\begin{definition}[Open graph]
    An \textit{open graph} is a triple $(G,I,O)$, where:\begin{itemize}
        \item $G = (V,E)$ is an undirected simple graph consisting of vertices $V$ and edges $E$, 
        \item $I \subseteq V$ is a set of \textit{inputs}, and
        \item $O \subseteq V$ is a set of \textit{outputs}.
    \end{itemize}
    When $(G,I,O)$ is fixed, we use the following notation:\begin{itemize}
        \item $\comp{O} := V \setminus O$ is the set of \textit{non-outputs},
        \item $\comp{I} := V \setminus I$ is the set of \textit{non-inputs},
        \item $n := |V|$ is the number of vertices,
        \item $n_I := |I|$, is the numbers of inputs, and
        \item $n_O := |O|$, is the number of outputs.
    \end{itemize}
    The \textit{odd neighbourhood} of a set of vertices $\someset$ is the set of vertices $v\in V$ which are adjacent to an odd number of elements of $\someset$, i.e.:
    \begin{equation*}
        \odd{\someset} := \{ v \in V : \abs{\{ u \in \someset \mid vu \in E \}} \text{ is odd} \}.
    \end{equation*}

    \noindent A \textit{closed odd neighbourhood} of a set of vertices $\someset$ is defined as follows, where $\symd$ stands for a symmetric difference of sets, i.e.\ $\otherset\symd \otherset' = (\otherset\cup \otherset') \setminus (\otherset\cap \otherset')$:
    \begin{align*}
        \codd{\someset} &:= \odd{\someset} \symd \someset.
    \end{align*}
\end{definition}

Having defined open graphs, which relate to the resource graph states of a measurement-based computation, we now move on to considering the measurements.
The one-way model allows measurements in three planes of the block sphere, spanned by the eigenstates of pairs of Pauli operators -- these are $XY$, $YZ$, and $XZ$ planar measurements. The intersections between the planes, the Pauli measurements $X$, $Y$, and $Z$, are considered separately for a total of six measurement types.

\begin{definition}[Measurement labelling]\label{def:labelling}
    A \textit{measurement labelling} for an open graph $(G,I,O)$ is a function:\begin{equation*}
        \lambda \colon \comp{O} \to \{ X, Y, Z, XY, XZ, YZ \}
    \end{equation*}
    \noindent satisfying $\forall i \in I \setminus O . \lambda(i) \in \{ X, Y, XY \}$.
    
    A \textit{labelled open graph} is a quadruple $(G,I,O,\lambda)$ where $(G,I,O)$ is an open graph and $\lambda$ is a measurement labelling on this open graph.

    We refer to the set of planar measured vertices as $\measplanar$, and to the set of Pauli measured vertices as $\measPauli$:\begin{align*}
        \measplanar &= \left\{ v \in \comp{O} \mid \lambda(v) \in \{ XY, YZ, XZ \} \right\}\\
        \measPauli  &= \left\{ v \in \comp{O} \mid \lambda(v) \in \{ X, Y, Z \} \right\}.
    \end{align*}
\end{definition}

See Figure~\ref{fig:loG} for an example of a labelled open graph; this will be a running example to illustrate the different concepts defined throughout this paper.

We also need to specify measurement angles to turn a labelled open graph into an implementable quantum computation.
However, we will soon define Pauli flow which enables and witnesses robustly deterministic computation.
Here, `robustly' means, among other things, that the computation must be deterministic for any choice of the angles.
Hence, we ignore the measurement angles throughout this paper.

\begin{figure}
    \centering
    \begin{subfigure}[b]{0.5\textwidth}
        \begin{equation*}
            \tikzfig{loG_example_smaller}
        \end{equation*}
    \end{subfigure}
    \begin{subfigure}[b]{0.4\textwidth}
        $\begin{array}{|c|c|c|}
        \toprule
        v & \lambda(v) & \text{Neighbourhood of }v \\
        \midrule
        i & XY & b \\
        \midrule
        a & XZ & d, o_2 \\
        b & Y & i, e, d, o_2 \\
        e & XY & b, d, o_1, o_2 \\
        d & Z &  a, b, c, o_2 \\
        \midrule
        o_1 & & e \\
        o_2 & & a, b, e, d \\
        \bottomrule
        \end{array}$
    \end{subfigure}
    \caption{Example of a labelled open graph with one input $i$ and two outputs $o_1, o_2$. (We skip the letter~$c$ when naming vertices to avoid confusion with the soon-to-be-defined correction function. The graph is a modification of \cite[Example D.18]{simmonsRelatingMeasurementPatterns2021}.)}
    \label{fig:loG}
\end{figure}

Due to the nature of quantum measurements, measuring a qubit may give two outcomes of which only one is desired. In the case of obtaining the undesired outcome, one must correct it.
In the one-way model, these corrections consist of applying Pauli $X$ and $Z$ operations so that after the application the system is in the state it would have been in if the desired outcome was obtained. This is achieved by choosing the Pauli operations in such a way, that together with the undesired outcome (which itself corresponds to the application of Pauli operation to the measured qubit), all the Paulis form a graph stabilizer \cite{danosDeterminismOnewayModel2006,browneGeneralizedFlowDeterminism2007,backensThereBackAgain2021}. When such corrections are possible, the computation given by the labelled open graph is called deterministic. It is important to ensure that the desired computation is deterministic, as otherwise, it may be impossible to perform in practice. There are different useful variants of determinism (see \cite{mhallaWhichGraphStates2014a}). For instance, uniform determinism means that the correction procedure is independent of the choice of measurement angles. The most widely used of these variants, `strong uniform stepwise determinism' on a labelled open graph is captured by the existence of Pauli flow \cite{mhallaCharacterisingDeterminismMBQCs2022}.
This flow consists of a correction function and a partial order. The correction function for each vertex defines the set of vertices to which one must apply $X$ in case of the undesired outcome, with the set of vertices requiring a $Z$ given by its odd neighbourhood.
The partial order is the order according to which vertices must be measured for the corrections to be possible.

\begin{definition}[Correction function]
    Let $(G,I,O,\lambda)$ be a labelled open graph. A \textit{correction function} for $(G,I,O,\lambda)$ is a function $c \colon \comp{O} \to \mathcal{P}(\comp{I})$.
\end{definition}

\begin{definition}[Pauli flow {\cite[Definition 5]{browneGeneralizedFlowDeterminism2007}}]\label{def:Pauli-flow} 
    A \textit{Pauli flow} for a labelled open graph $(G,I,O,\lambda)$ is a pair $(c,\prec)$, where $c$ is a correction function for $(G,I,O,\lambda)$ and $\prec$ is a strict partial order on $\comp{O}$, such that for all $u \in \comp{O}$:\begin{enumerate}
        \item[{\crtcrossreflabel{(P1)}[P1]}] $\forall v \in c(u) . u \ne v \wedge \lambda(v) \notin \{ X, Y \} \Rightarrow u \prec v$
        \item[{\crtcrossreflabel{(P2)}[P2]}] $\forall v \in \odd{c(u)} . u \ne v \wedge \lambda(v) \notin \{ Y, Z \} \Rightarrow u \prec v$
        \item[{\crtcrossreflabel{(P3)}[P3]}] $\forall v \in \comp{O} . \neg (u \prec v) \wedge u \ne v \wedge \lambda(v) = Y \Rightarrow v\notin\codd{c(u)}$
        \item[{\crtcrossreflabel{(P4)}[P4]}] $\lambda(u) = XY \Rightarrow u \notin c(u) \wedge u \in \odd{c(u)}$
        \item[{\crtcrossreflabel{(P5)}[P5]}] $\lambda(u) = XZ \Rightarrow u \in c(u) \wedge u \in \odd{c(u)}$
        \item[{\crtcrossreflabel{(P6)}[P6]}] $\lambda(u) = YZ \Rightarrow u \in c(u) \wedge u \notin \odd{c(u)}$
        \item[{\crtcrossreflabel{(P7)}[P7]}] $\lambda(u) = X \Rightarrow u \in \odd{c(u)}$
        \item[{\crtcrossreflabel{(P8)}[P8]}] $\lambda(u) = Z \Rightarrow u \in c(u)$
        \item[{\crtcrossreflabel{(P9)}[P9]}] $\lambda(u) = Y \Rightarrow u \in \codd{c(u)}$.
    \end{enumerate}
    The sets $c(v)$ for $v \in \comp{O}$ are called the \textit{correction sets}.
\end{definition}

See Figure~\ref{fig:pf} for an example of Pauli flow.

\begin{figure}
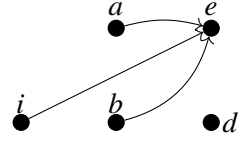

    \centering
    \begin{subfigure}[b]{0.4\textwidth}
        $\begin{array}{|c|c|c|c|}
            \toprule
             & \lambda & c(v) & \odd{c(v)} \\
            \midrule
            i & XY & b, e, o1 & i, b, o1 \\
            a & XZ & a, e, o1, o2 & {a, d, o1} \\
            b & Y & e & b, d, o1, o2 \\
            e & XY & o1 & e \\
            d & Z & d, o2 & d, o2 \\
            \bottomrule
        \end{array}$
    \end{subfigure}
    \begin{subfigure}[b]{0.5\textwidth}
        \begin{equation*}
            \tikzfig{loG_example_order_smaller}
        \end{equation*}
    \end{subfigure}
    \caption{A correction function and a partial order forming Pauli flow for the labelled open graph from Figure~\ref{fig:loG}. The partial order is given as a directed acyclic graph.}
    \label{fig:pf}
\end{figure}

There may be many different Pauli flows on one labelled open graph. However, the existence of any Pauli flow implies the existence of a special type of Pauli flow called focused Pauli flow.
A focused flow must satisfy the following focusing conditions.

\begin{definition}[Focused Pauli flow {\cite[part of Definition 4.3]{simmonsRelatingMeasurementPatterns2021}}]\label{def:focused} 
    The Pauli flow $(c,\prec)$ is \textit{focused} when for all $v \in \comp{O}$ the following hold:\begin{enumerate}
        \item[{\crtcrossreflabel{(F1)}[F1]}] $\forall w \in (\comp{O} \setminus \{ v \}) \cap c(v) . \lambda(w) \in \{ XY, X, Y \}$
        \item[{\crtcrossreflabel{(F2)}[F2]}] $\forall w \in (\comp{O} \setminus \{ v \}) \cap \odd{c(v)} . \lambda(w) \in \{ XZ, YZ, Y, Z \}$
        \item[{\crtcrossreflabel{(F3)}[F3]}] $\forall w \in (\comp{O} \setminus \{ v \}) . \lambda(w) = Y \Rightarrow w \notin \codd{c(v)}$, note $w \notin \codd{c(v)}$ is equivalent to $w \in c(v) \Leftrightarrow w \in \odd{c(v)}$.
    \end{enumerate}
\end{definition}

\begin{theorem}[{\cite[Lemma 4.6]{simmonsRelatingMeasurementPatterns2021}}]\label{foc flow exists}
    For any open labelled graph, if a Pauli flow exists then there also exists a focused Pauli flow.
\end{theorem}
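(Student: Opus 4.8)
The plan is to prove the statement by transforming an arbitrary Pauli flow into a focused one through a sequence of local modifications, each of which preserves the Pauli flow property while reducing some measure of ``non-focusedness.'' First I would set up the right algebraic bookkeeping: for a correction function $c$, encode each set $c(u)\sse\comp{I}$ as a vector over $\FF_2$, so that $\odd{c(u)}$ and $\codd{c(u)}$ become linear images of $c(u)$ under (submatrices of) the adjacency matrix of $G$. The focusing conditions \ref{F1}--\ref{F3} then say that the ``bad'' part of $c(u)$ (elements of $c(u)$ with label in $\{XZ,YZ,Z\}$), the ``bad'' part of $\odd{c(u)}$ (elements with label in $\{XY,X\}$), and certain $Y$-vertices, all vanish. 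The key observation is that if $(c,\prec)$ is a Pauli flow and $w$ is a non-output with $w\ne u$ that violates one of \ref{F1}--\ref{F3} for $u$, then by condition \ref{P1} or \ref{P2} we have $u\prec w$, and $w$ itself has its own correction set $c(w)$; one can then replace $c(u)$ by the symmetric difference $c(u)\symd c(w)$ (or $c(u)\symd c(w)\symd\{\text{something}\}$, depending on $\lambda(w)$), which removes the violating vertex $w$ without introducing violations by vertices $\prec u$.

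The main steps, in order, would be: (1) Define a ``focusing operation'' that, given a non-focused vertex $u$ and a minimal (with respect to $\prec$) violating vertex $w$ in $c(u)$, $\odd{c(u)}$, or $\codd{c(u)}$, updates $c(u)\mapsto c(u)\symd c(w)$. Check using \ref{P4}--\ref{P9} applied to $w$ that this update has the intended effect on membership of $w$ in $c(u)$, $\odd{c(u)}$, and $\codd{c(u)}$ (this is where the three cases by $\lambda(w)$ come in, mirroring the three focusing conditions). (2) Verify that after the update, $(c',\prec)$ is still a Pauli flow: conditions \ref{P4}--\ref{P9} for $u$ follow from those for $u$ and $w$ combined (using additivity of $\odd{\cdot}$ and the fact that $u\prec w$ so $u\notin c(w)$ etc.); conditions \ref{P1}--\ref{P3} for $u$ follow because every new vertex in $c'(u)$ or $\odd{c'(u)}$ came from $c(w)$ or $\odd{c(w)}$ with $u\prec w\prec(\cdot)$, hence $u\prec(\cdot)$ by transitivity, and no vertex $\preceq u$ is affected. (3) Show the process terminates: define a potential such as the sum over $u$ of $|\{w : w\text{ violates a focusing condition for }u\}|$, or better, proceed by induction on the order $\prec$ from the top down, fully focusing $c(u)$ for all $u$ in a given ``layer'' before moving to lower layers, so that later repairs never re-break earlier ones.

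I expect the main obstacle to be step (3), the termination/well-foundedness argument, together with the bookkeeping in step (1) for $Y$-labelled violating vertices. The subtlety is that a single focusing operation at $u$ might introduce a \emph{new} violation at $u$ caused by some vertex $w'$ that was not previously violating, so one must argue that all such $w'$ satisfy $u\prec w'$ and that the repair can be organized to converge --- the cleanest route is almost certainly a top-down induction on $\prec$-layers, repeatedly applying the focusing operation within a layer and showing (via a secondary potential, e.g.\ the total size of the symmetric-difference-distance to a focused configuration restricted to vertices $\succeq u$) that finitely many steps suffice. The $Y$-vertex case of \ref{F3} is fiddly because $\codd{c(u)}=\odd{c(u)}\symd c(u)$ mixes the two previous conditions, so one has to be careful that the operation chosen to fix an \ref{F3}-violation (which uses \ref{P9}: $w\in\codd{c(w)}$) does not reintroduce \ref{F1}- or \ref{F2}-violations; checking this amounts to the identity $\codd{c(u)\symd c(w)}=\codd{c(u)}\symd\codd{c(w)}$ and tracking $w$'s own membership.
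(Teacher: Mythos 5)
The paper does not prove this theorem itself---it is imported verbatim as Lemma~4.6 of the cited work of Simmons---and your outline reconstructs essentially that cited argument correctly: every vertex $w$ violating \ref{F1}--\ref{F3} for $u$ satisfies $u \prec w$ by \ref{P1}--\ref{P3}, the update $c(u) \mapsto c(u) \symd c(w)$ cancels $w$ via \ref{P4}--\ref{P9} applied to $w$ (no extra term beyond the symmetric difference is needed in any of the three cases), and all flow conditions are preserved using linearity of the odd neighbourhood over symmetric difference, asymmetry and transitivity of $\prec$. Your main worry, termination, dissolves if you process vertices in reverse order of a linear extension of $\prec$: once every $c(w)$ with $u \prec w$ is already focused, updating $c(u)$ by such a $c(w)$ introduces no new violators at all (by \ref{F1}--\ref{F3} for $w$, the only ``bad''-labelled element of $c(w)$, $\odd{c(w)}$, or $\codd{c(w)}$ is $w$ itself), so each violator of $u$ is removed in exactly one step and no secondary potential is required.
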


The focusing conditions are particularly useful in the case of an equal number of inputs and outputs (i.e.\ when $n_I = n_O$), as then the focused Pauli flow is unique if it exists. When the number of inputs and outputs differ, the focused flow is not necessarily unique.
The Pauli flow in the example of Figure~\ref{fig:pf} is focused.
If we change $c(i)$ to $\{ b, e \}$, then $\odd{c(i)} = \{ i, b, e, o_1 \}$. Keeping all other correction sets and partial order the same, we obtain a new Pauli flow. However, this flow is not focused -- it fails \ref{F2} as $e \in \odd{c(i)}$ and $\ld(e) = XY$.

The usual Pauli flow definition stated above describes the correction function and the partial order simultaneously. However, it will be more convenient to consider the two aspects separately, as captured in the following definitions.

\begin{definition}[Focused correction function]
    Let $(G,I,O,\lambda)$ be a labelled open graph and $c$ a correction function. We say that $c$ is \textit{focused} when it satisfies \ref{F1}, \ref{F2}, and \ref{F3}.
\end{definition}

\begin{definition}[Minimal order]
    Let $\Gamma = (G,I,O,\lambda)$ be a labelled open graph and let $(c,\prec)$ be a Pauli flow on $\Gamma$. We call $\prec$ \textit{minimal} when no proper subset of $\prec$ forms a Pauli flow with $c$.
\end{definition}

\begin{definition}[Extensive correction function]
    Let $\Gamma = (G,I,O,\lambda)$ be a labelled open graph and let $c$ be a correction function on $\Gamma$. We call $c$ \textit{extensive} when there exists $\prec$ such that $(c,\prec)$ is a Pauli flow on $\Gamma$. We call $c$ \textit{focused extensive} when there exists $\prec$ such that $(c,\prec)$ is a focused Pauli flow on $\Gamma$.
\end{definition}

We are not aware of the above definition for Pauli flow appearing in the literature. However, an analogous version for the case of only $XY$ planar measurements was considered in \cite{mhallaWhichGraphStates2014a}.

In the problem of finding Pauli flow, instead of looking for correction function and partial order, it now suffices to look only for a (focused) extensive correction function. This may not look like a simplification at first. However, given a correction function $c$ we can define a minimal relation $\trl_c$ induced by the Pauli flow definition, and consider whether this relation extends to a partial order.

\begin{definition}[Induced relation]
    Let $\Gamma = (G,I,O,\lambda)$ be a labelled open graph and let $c$ be a correction function on $\Gamma$. The \textit{induced relation} $\trl_c$ is the minimal relation on $\comp{O}$ implied by \ref{P1}, \ref{P2}, \ref{P3}. That is, $u \trl_c v$ if and only if at least one of the following holds:\begin{itemize}
        \item $v \in c(u) \wedge u \ne v \wedge \lambda(v) \notin \{ X, Y \}$ (corresponding to \ref{P1}),
        \item $v \in \odd{c(u)} \wedge u \ne v \wedge \lambda(v) \notin \{ Y, Z \}$ (corresponding to \ref{P2}),
        \item $v \in \codd{c(u)} \wedge u \ne v \wedge \lambda(v) = Y$ (corresponding to \ref{P3}).
    \end{itemize}
    We denote the transitive closure of $\trl_c$ as $\prec_c$.
\end{definition}

For example, the partial order given in Figure~\ref{fig:pf} is the relation $\trl_c$ for the correction function $c$ from the same figure. It happens to also be $\prec_c$, as it is already transitive.

In the following we prove some useful properties of $\trl_c$: Firstly, the transitive closure $\prec_c$ of the relation $\trl_c$ induced by the correction function $c$ is minimal. Further, if $c$ forms Pauli flow with any partial order, it also does so with $\trl_c$.

\begin{lemma}[Minimal order containment]\label{minimal order containment}
    Let $\Gamma = (G,I,O,\lambda)$ be a labelled open graph and let $(c,\prec)$ be a Pauli flow. Then ${\prec_c} \subseteq {\prec}$.
\end{lemma}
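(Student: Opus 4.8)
The statement is that the transitive closure $\prec_c$ of the induced relation $\trl_c$ is contained in any partial order $\prec$ that forms a Pauli flow with $c$. Since $\prec$ is a partial order, it is transitive; so it suffices to show that the (pre-transitive-closure) relation $\trl_c$ is contained in $\prec$, and then take transitive closures on both sides — transitivity of $\prec$ means its transitive closure is itself, so $\prec_c = \text{tc}(\trl_c) \subseteq \text{tc}(\prec) = \prec$. Thus the whole content reduces to: $u \trl_c v \implies u \prec v$.

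\medskip

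To prove $u \trl_c v \implies u \prec v$, I would simply unfold the definition of $\trl_c$ and match each of its three disjuncts against the corresponding clause of the Pauli flow definition. If $u \trl_c v$, then by definition one of the following holds: (i) $v \in c(u) \wedge u \ne v \wedge \lambda(v) \notin \{X,Y\}$, (ii) $v \in \odd{c(u)} \wedge u \ne v \wedge \lambda(v) \notin \{Y,Z\}$, or (iii) $v \in \codd{c(u)} \wedge u \ne v \wedge \lambda(v) = Y$. In case (i), condition \ref{P1} applied to $u$ (with this $v$) directly gives $u \prec v$. In case (ii), condition \ref{P2} gives $u \prec v$. In case (iii), I use the contrapositive of \ref{P3}: \ref{P3} says that $\neg(u \prec v) \wedge u \ne v \wedge \lambda(v) = Y \implies v \notin \codd{c(u)}$; since here $v \in \codd{c(u)}$, $u \ne v$, and $\lambda(v) = Y$, we must have $u \prec v$. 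In all three cases $u \prec v$, as required.

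\medskip

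There is essentially no obstacle here: the lemma is a direct bookkeeping consequence of how $\trl_c$ was defined — it was \emph{constructed} precisely as the minimal relation implied by \ref{P1}, \ref{P2}, \ref{P3}, so each generating pair of $\trl_c$ is forced into $\prec$ by exactly one of those conditions. The only mild subtlety worth spelling out is the contrapositive reading of \ref{P3} (it is phrased as an implication whose conclusion is a non-membership, rather than directly as an order constraint), and the observation that because $\prec$ is already transitive we get $\prec_c \subseteq \prec$ and not merely $\trl_c \subseteq \prec$. I would state the reduction to "$u \trl_c v \implies u \prec v$" explicitly, then dispatch the three cases in a short displayed or inline case analysis, and conclude by taking transitive closures.
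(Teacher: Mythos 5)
Your proposal is correct and follows essentially the same route as the paper's proof: show $\trl_c \subseteq \prec$ (the paper invokes minimality of $\trl_c$ by construction, while you spell out the three-case match against \ref{P1}--\ref{P3}, including the contrapositive reading of \ref{P3}) and then use transitivity of $\prec$ to conclude $\prec_c \subseteq \prec$. The extra explicitness is fine but adds nothing beyond what the paper's one-line argument already contains.
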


\begin{proof}
    By construction, $\trl_c$ is the minimal relation implied by \ref{P1}, \ref{P2}, \ref{P3}. Thus, any relation satisfying these conditions must contain $\trl_c$, thus $\trl_c \subseteq \prec$. Further, $\prec$ is a strict partial order, and hence it is transitive. Thus, it also contains the transitive closure of $\trl_c$, i.e.\ $\prec_c$, ending the proof.
\end{proof}

\begin{theorem}[Extending a correction function to a Pauli flow]\label{extending correction set to Pauli flow}
    Let $\Gamma = (G,I,O,\lambda)$ be a labelled open graph and let $c$ be a correction function on $\Gamma$. Then $c$ is (focused) extensive if and only if $(c, \prec_c)$ is a minimal (focused) Pauli flow.
\end{theorem}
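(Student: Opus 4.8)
The plan is to prove the two directions of the biconditional, where for brevity I treat the ``focused'' qualifier as an optional add-on that is carried along unchanged throughout (it only constrains $c$, not $\prec$, so it never interacts with the order-theoretic arguments).

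For the ``if'' direction, suppose $(c,\prec_c)$ is a minimal (focused) Pauli flow. Then in particular it is a (focused) Pauli flow, so by the very definition of \emph{extensive} (resp.\ \emph{focused extensive}), $c$ is (focused) extensive, witnessed by $\prec = \prec_c$. This direction is essentially immediate.

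For the ``only if'' direction, suppose $c$ is (focused) extensive, so there exists some strict partial order $\prec$ with $(c,\prec)$ a (focused) Pauli flow. The first step is to show that $\prec_c$ is itself a strict partial order. Irreflexivity and transitivity of $\prec_c$ hold by construction (it is a transitive closure of a relation $\trl_c$ that never relates a vertex to itself, since each clause defining $\trl_c$ carries the hypothesis $u \ne v$). The one thing to rule out is a cycle, i.e.\ we need antisymmetry/acyclicity. Here I invoke Lemma~\ref{minimal order containment}: since $(c,\prec)$ is a Pauli flow, ${\prec_c} \subseteq {\prec}$; as $\prec$ is a strict partial order it has no cycles, hence neither does $\prec_c$, so $\prec_c$ is a strict partial order. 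The second step is to check that $(c,\prec_c)$ satisfies all nine conditions \ref{P1}--\ref{P9} of Definition~\ref{def:Pauli-flow}. Conditions \ref{P4}--\ref{P9} depend only on $c$ and not on the order, so they are inherited directly from $(c,\prec)$. Conditions \ref{P1}, \ref{P2}, \ref{P3} are precisely the conditions that $\trl_c$ was designed to encode: if the left-hand side of (say) \ref{P1} holds for a pair $(u,v)$, then by definition $u \trl_c v$, hence $u \prec_c v$, so the required conclusion $u \prec v$ (with $\prec$ replaced by $\prec_c$) holds; similarly for \ref{P2} and \ref{P3}. Thus $(c,\prec_c)$ is a (focused) Pauli flow. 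The third step is minimality: suppose $\prec' \subsetneq {\prec_c}$ also formed a Pauli flow with $c$. Then $(c,\prec')$ is a Pauli flow, so by Lemma~\ref{minimal order containment} applied to $(c,\prec')$ we get ${\prec_c} \subseteq {\prec'}$, contradicting $\prec' \subsetneq {\prec_c}$. Hence $\prec_c$ is minimal, and $(c,\prec_c)$ is a minimal (focused) Pauli flow.

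I do not expect a serious obstacle here; the whole argument is a careful unpacking of the definitions of $\trl_c$, $\prec_c$, and \emph{minimal}, together with one application of the already-established Lemma~\ref{minimal order containment}. The only point requiring a moment's care is the verification that $\prec_c$ is genuinely a strict partial order (rather than merely a transitive relation) — this is exactly where the existence of \emph{some} valid order $\prec$ is used, via the containment lemma, to guarantee acyclicity. A minor subtlety worth stating explicitly is that the minimality notion in Definition~``Minimal order'' refers to minimality among orders forming a Pauli flow with the \emph{same} $c$, and that $\prec_c$ being the transitive closure of the minimal relation implied by \ref{P1}--\ref{P3} is what makes the minimality argument go through cleanly.
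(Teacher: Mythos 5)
Your proposal is correct and follows essentially the same route as the paper: both directions are handled identically, with Lemma~\ref{minimal order containment} supplying both the strict-partial-order property of $\prec_c$ (as a subset of the witnessing $\prec$) and minimality, while \ref{P4}--\ref{P9} and the focusing conditions are inherited because they depend only on $c$, and \ref{P1}--\ref{P3} hold because $\prec_c$ contains $\trl_c$. The only cosmetic remark is that irreflexivity of $\prec_c$ is not quite ``by construction'' from irreflexivity of $\trl_c$ (a transitive closure can acquire loops), but your subsequent acyclicity argument via the containment lemma covers this, exactly as the paper's does.
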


\begin{proof}
    $(\Rightarrow)$: Suppose that $c$ is (focused) extensive. Then there exists $\prec$ on $\comp{O}$ such that $(c,\prec)$ is a (focused) Pauli flow. Thus, $(c,\prec_c)$ satisfies all of \ref{P4}-\ref{P9} (and \ref{F1}-\ref{F3}), as those conditions only ask about $c$. Further, $\prec_c$ contains $\trl_c$, and so $(c,\prec_c)$ satisfies \ref{P1}-\ref{P3}. Finally, $\prec$ is a strict partial order; hence $\prec_c$ is also a strict partial order, as it is a subset of $\prec$ by \ref{minimal order containment}. Therefore $(c,\prec_c)$ is a (focused) Pauli flow. It is minimal by Lemma~\ref{minimal order containment}.

    $(\Leftarrow)$: Immediate from the definitions.
\end{proof}

The above theorem tells us that, to verify whether a correction function $c$ is extensive, we only need to test one order induced by $c$. In other words, we only need to search for an extensive correction function rather than searching for a correction function and a partial order forming a flow together. The induced order is particularly well-behaved in the case of the focused correction function. Then, all Pauli-measured vertices are initial in the partial order, as explained in the following lemma.

\begin{lemma}[All Paulis are initial]\label{all pauli are first}
    Let $(G,I,O,\lambda)$ be a labelled open graph and $c$ be a focused correction function. Then $\forall u \in \measPauli . \forall v \in \comp{O} . \neg v \prec_c u$.
\end{lemma}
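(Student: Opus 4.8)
The goal is to show that a Pauli-measured vertex $u \in \measPauli$ can never appear on the right-hand side of $\prec_c$, i.e.\ nothing is ever forced to be measured before $u$. Since $\prec_c$ is the transitive closure of $\trl_c$, it suffices to prove the claim for $\trl_c$ itself: if we show $\neg(v \trl_c u)$ for every $v \in \comp{O}$ and every $u \in \measPauli$, then $u$ has no $\trl_c$-predecessors, and hence no $\prec_c$-predecessors either (a transitive closure adds no new predecessors to a vertex that had none). So the whole lemma reduces to a case analysis on the three clauses in the definition of $\trl_c$.

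\textbf{Key steps.} Fix $u \in \measPauli$, so $\lambda(u) \in \{X, Y, Z\}$, and suppose for contradiction that $v \trl_c u$ for some $v \in \comp{O}$ with $v \neq u$. I would walk through the three disjuncts:
\begin{itemize}
\item First clause (from \ref{P1}): this requires $\lambda(u) \notin \{X,Y\}$, so we must have $\lambda(u) = Z$, and $u \in c(v)$. But $c$ is focused, so \ref{F1} applied to $v$ (with $w = u \in (\comp{O}\setminus\{v\})\cap c(v)$) forces $\lambda(u) \in \{XY, X, Y\}$, contradicting $\lambda(u) = Z$.
\item Second clause (from \ref{P2}): this requires $\lambda(u) \notin \{Y,Z\}$, so $\lambda(u) = X$, and $u \in \odd{c(v)}$. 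But \ref{F2} applied to $v$ forces $\lambda(u) \in \{XZ, YZ, Y, Z\}$, contradicting $\lambda(u) = X$.
\item Third clause (from \ref{P3}): this requires $\lambda(u) = Y$ and $u \in \codd{c(v)}$. But \ref{F3} applied to $v$ says exactly that no $Y$-labelled $w \in \comp{O}\setminus\{v\}$ lies in $\codd{c(v)}$, contradiction.
\end{itemize}
In each case the relevant focusing condition directly rules out the membership required to make $v \trl_c u$ hold, so no such $v$ exists.

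\textbf{Main obstacle.} There is essentially no hard step here — the lemma is a more-or-less mechanical consequence of matching each clause of the induced relation against the corresponding focusing condition. The only thing to be careful about is the reduction from $\prec_c$ to $\trl_c$: one should note explicitly that since $\prec_c$ is obtained from $\trl_c$ purely by transitive closure, any element with no $\trl_c$-predecessor also has no $\prec_c$-predecessor (a predecessor in the closure would have to arise from a chain, whose first step is a $\trl_c$-edge into $u$). After that observation the three-way case split closes immediately. The subtlety, if any, is simply making sure the labels in each disjunct's side-condition ($\lambda(u) \notin \{X,Y\}$ etc.) are correctly intersected with $\lambda(u) \in \{X,Y,Z\}$ before invoking the focusing condition.
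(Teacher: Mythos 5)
Your proof is correct, but it takes a different route from the paper: the paper does not prove this lemma internally at all, instead deferring entirely to a citation of Lemma B.11 of Simmons (\cite{simmonsRelatingMeasurementPatterns2021}). What you have written is essentially the self-contained argument that the citation stands in for, and it is the natural one given the definitions in this paper: the reduction from $\prec_c$ to $\trl_c$ is sound (an element with no incoming $\trl_c$-edge acquires no predecessors under transitive closure), and the three-way case split correctly pairs each disjunct of the induced relation with the focusing condition that kills it --- the \ref{P1}-clause forces $\lambda(u)=Z$ and is blocked by \ref{F1}, the \ref{P2}-clause forces $\lambda(u)=X$ and is blocked by \ref{F2}, and the \ref{P3}-clause forces $\lambda(u)=Y$ and is blocked by \ref{F3}; in each case $u\in\comp{O}\setminus\{v\}$ holds because $u\in\measPauli\subseteq\comp{O}$ and $u\ne v$ is built into the clause, so the focusing conditions genuinely apply with $w=u$. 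The only blemish is a wording slip: the $\trl_c$-edge into $u$ that a putative chain $v\prec_c u$ must contain is the \emph{last} step of the chain, not the first; this does not affect the argument. Your version has the advantage of being checkable entirely from the definitions stated in this paper, at the cost of a few lines; the paper's citation is shorter but forces the reader to consult an external appendix and translate notation.
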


\begin{proof}
    Implied by \cite[Lemma B.11]{simmonsRelatingMeasurementPatterns2021}.
\end{proof}

Note that, in particular, the above lemma tells us the condition \ref{P3} never puts any requirements on~$\trl_c$ (and hence on $\prec_c$) if $c$ is a focused correction function.

\section{New algebraic interpretation}\label{sec:new algebraic interpretation}

As may be seen from the previous section, Pauli flow can be hard to work with, as the relevant definitions are long and complicated. For instance, to check if a given pair $(c,\prec)$ forms a focused Pauli flow, one needs to verify the twelve conditions \ref{P1}-\ref{P9} and \ref{F1}-\ref{F3} for each vertex, which is a tedious procedure.

We now provide an alternative interpretation of Pauli flow: Given a labelled open graph $\Gamma$, we define two matrices constructed from the adjacency matrix -- a `flow-demand matrix' $M_{\Gamma}$ and an `order-demand matrix' $N_{\Gamma}$. We prove that $\Gamma$ has Pauli flow if and only if there exists a matrix $C$ (encoding the correction function) such that $M_{\Gamma}C$ is the identity matrix and $N_{\Gamma}C$ forms a DAG:

\newcommand{\statealgebraicinterp}{%
Let $\Gamma = (G,I,O,\lambda)$ be a labelled open graph, $c$ be a correction function, $M$ be the flow-demand matrix of $\Gamma$, and $N$ be the order-demand matrix of $\Gamma$. Then $c$ is focused extensive if and only if the following two facts hold for the corresponding correction matrix $C$:\begin{itemize}
        \item $M_{\Gamma}C = Id_{\comp{O}}$,
        \item $N_{\Gamma}C$ is the adjacency matrix of a DAG.
    \end{itemize}
}

\begin{theorem}\label{algebraic interpretation section start}
 \statealgebraicinterp
\end{theorem}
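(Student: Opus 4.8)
The plan is to first make precise the definitions that the theorem statement references but the excerpt has not yet given in detail: the \emph{flow-demand matrix} $M_\Gamma$, the \emph{order-demand matrix} $N_\Gamma$, and the \emph{correction matrix} $C$ associated with a correction function $c$. Presumably $C$ is the $\comp{I}\times\comp{O}$ matrix over $\FF_2$ whose column indexed by $u\in\comp{O}$ is the indicator vector of $c(u)\sse\comp{I}$. The matrix $M_\Gamma$ should be built from the adjacency matrix of $G$ so that the rows of $M_\Gamma C$ encode exactly the linear-algebraic content of conditions \ref{P4}--\ref{P9} together with the focusing conditions \ref{F1}--\ref{F3}: roughly, one block of rows records membership of $u$ in $c(u)$ (``the diagonal of $C$ restricted appropriately''), another block records membership of $u$ in $\odd{c(u)}$ (the diagonal of $AC$ where $A$ is the adjacency matrix), and further rows force $w\notin c(u)$ or $w\notin\odd{c(u)}$ for vertices $w$ of the wrong label type (this is what turns the ``$M_\Gamma C = Id$'' condition into an encoding of \emph{focused} rather than merely extensive correction functions). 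The matrix $N_\Gamma$ should be the companion construction picking out, for each pair $(u,v)$, the entries that \ref{P1}, \ref{P2}, \ref{P3} turn into order constraints, so that $(N_\Gamma C)_{vu}=1$ precisely when $u\trl_c v$. I would state these as definitions, verify they are well formed, and then the theorem becomes a bookkeeping translation.

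The proof itself I would organise as two implications, each proved by matching rows of the matrix equations against the combinatorial conditions. For the forward direction, assume $c$ is focused extensive; by Theorem~\ref{extending correction set to Pauli flow}, $(c,\prec_c)$ is a minimal focused Pauli flow. Then: (i) conditions \ref{P4}--\ref{P9} and \ref{F1}--\ref{F3} hold for every $u$, and I would check, label-type by label-type ($XY$, $XZ$, $YZ$, $X$, $Y$, $Z$, and output/non-output distinctions), that each row of the equation $M_\Gamma C = Id_{\comp{O}}$ is exactly one of these conditions rewritten over $\FF_2$ — e.g.\ the row encoding $u\in\odd{c(u)}$ for $u$ with $\lambda(u)\in\{XY,XZ,X\}$ asks for a $1$ on the diagonal, the row encoding $u\notin c(u)$ for $\lambda(u)=XY$ asks for a $0$, and the focusing rows forbid $w\in c(u)$ or $w\in\odd{c(u)}$ for ill-typed $w\in\comp{O}$, which is precisely why off-diagonal entries of $M_\Gamma C$ must vanish. (ii) For the DAG claim, $N_\Gamma C$ is the adjacency matrix of $\trl_c$ by construction, and $\prec_c$, its transitive closure, is a strict partial order, so $\trl_c$ is acyclic, hence $N_\Gamma C$ is a DAG adjacency matrix. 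Here I would use Lemma~\ref{all pauli are first} to dispatch the \ref{P3}/\ref{F3} interaction cleanly: for a focused $c$ the $Y$-vertex clause of $\trl_c$ imposes nothing new, so $N_\Gamma$ only needs the \ref{P1}, \ref{P2} rows. The reverse direction runs the same dictionary backwards: from $M_\Gamma C = Id_{\comp{O}}$ recover \ref{P4}--\ref{P9} and \ref{F1}--\ref{F3}, hence $c$ is a focused correction function; from $N_\Gamma C$ being a DAG adjacency matrix, let $\prec$ be any topological order extending it — then $(c,\prec)$ satisfies \ref{P1}--\ref{P3} because $\trl_c\sse\prec$, and it satisfies the rest because those are label conditions on $c$ alone; so $c$ is focused extensive.

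The main obstacle I expect is not the logical structure but getting the matrix definitions \emph{exactly} right so that the single clean equation $M_\Gamma C = Id_{\comp{O}}$ simultaneously (a) captures all of \ref{P4}--\ref{P9}, (b) captures all of \ref{F1}--\ref{F3}, and (c) does not accidentally impose anything stronger. In particular one must be careful that the ``$=Id$'' on the right-hand side is genuinely forced: the diagonal $1$'s come from the flow conditions \ref{P4}--\ref{P9}/\ref{P7}--\ref{P9} for the various label classes (and something must handle the possibility $u\in I$ where $c(u)\sse\comp{I}$ may exclude $u$), while the off-diagonal $0$'s come from focusing — so the construction of $M_\Gamma$ has to stack: an ``adjacency-type'' block (rows of $A$) for the $\odd{c(\cdot)}$ constraints, an ``identity-type'' block for the $c(\cdot)$-membership constraints, and selector rows that zero out the wrong-label columns; then one argues that the combined row space is arranged so the product is forced to be $Id_{\comp{O}}$. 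A secondary subtlety is the treatment of $Y$-measured vertices, where \ref{P9}, \ref{F3}, and the closed odd neighbourhood $\codd{c(u)}=\odd{c(u)}\symd c(u)$ all interact; I would handle $Y$ by using the identity noted in \ref{F3} ($w\notin\codd{c(u)}\iff(w\in c(u)\iff w\in\odd{c(u)})$), which over $\FF_2$ says the corresponding entry of $(A+I)C$ matches that of $C$, i.e.\ a single linear row, and invoke Lemma~\ref{all pauli are first} to confirm no extra order constraints sneak in. Once the definitions are pinned down, the rest is a finite, if lengthy, case check that I would present compactly as a table keyed by $\lambda(u)$.
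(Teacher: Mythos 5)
Your overall strategy is the paper's: encode the label-dependent conditions row-by-row so that $M_\Gamma C=Id_{\comp{O}}$ captures the ``pointwise'' flow conditions together with focusing, and so that $N_\Gamma C$ captures the order constraints; then route everything through Theorem~\ref{extending correction set to Pauli flow} and Lemma~\ref{all pauli are first}. (The paper packages the two halves as Theorems~\ref{MC} and~\ref{NC}.) Your guess at the shape of $M_\Gamma$ is essentially right once the ``stacked blocks'' are understood as a single row per non-output vertex whose type depends on $\ld$ --- there is no room for extra ``selector rows'', since $M_\Gamma$ must be $(n-n_O)\times(n-n_I)$ for $M_\Gamma C=Id_{\comp{O}}$ to typecheck; the off-diagonal zeros of the identity already do the selecting.

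The genuine gap is in your treatment of the diagonal of $N_\Gamma C$, equivalently of the second halves of \ref{P4}, \ref{P5}, \ref{P6}. Each of these conditions is a conjunction (Observation~\ref{P456 split}): e.g.\ for $\ld(u)=XY$, \ref{P4} demands both $u\in\odd{c(u)}$ (\ref{P4a}) and $u\notin c(u)$ (\ref{P4b}). The equation $M_\Gamma C=Id_{\comp{O}}$ only yields the first halves \ref{P4a}, \ref{P5a}, \ref{P6a} (plus \ref{P7}--\ref{P9} and \ref{F1}--\ref{F3}); it says nothing about, say, whether an $XY$-vertex lies in its own correction set, because the $u$-row of $M_\Gamma$ for $\ld(u)\in\{X,XY\}$ has a $0$ in the $u$-column. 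The missing halves \ref{P4b}, \ref{P5b}, \ref{P6b} are exactly what the \emph{diagonal} entries $(N_\Gamma C)_{u,u}$ encode, and their vanishing is the ``no loops'' part of the DAG condition (Theorem~\ref{NC}). Your proposal obscures this twice: in the reverse direction you claim to ``recover \ref{P4}--\ref{P9}'' from $M_\Gamma C=Id_{\comp{O}}$ alone, which is false; and in the forward direction you assert that $N_\Gamma C$ ``is the adjacency matrix of $\trl_c$ by construction'', hence loop-free --- but that identification is only valid off the diagonal (the paper's Lemma~\ref{NC is the induced order} is stated only for $u\ne v$), and the diagonal vanishes precisely \emph{because} \ref{P4b}, \ref{P5b}, \ref{P6b} hold for a Pauli flow, which is the thing being used. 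The fix is the paper's: split \ref{P4}--\ref{P6} into their two halves, assign the ``a'' halves to the diagonal of $M_\Gamma C$ and the ``b'' halves to the diagonal of $N_\Gamma C$, and only identify the off-diagonal part of $N_\Gamma C$ with $\trl_c$.
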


The remainder of this section is dedicated to defining terms from the above theorem and to proving this theorem.

\subsection{Matrix definitions}

First, we define the relevant matrices. All matrices are over the field $\mathbb{F}_2$. We drop the subscripts specifying the ((labelled) open) graph when it is clear from context. We start with the standard definition of the adjacency matrix.

It will sometimes be useful to refer to individual rows or columns of a matrix, we will use the notation $M_{v,*}$ to denote the $v$-th row of $M$ and the notation $C_{*,u}$ to denote the $u$-th column of $C$.

\begin{definition}[Adjacency matrix]
    Let $G = (V,E)$ be a simple graph.
    Its \textit{adjacency matrix} is the $n \times n$ matrix $Adj_G$ with rows and columns corresponding to elements of $v$, where:\begin{equation*}
        \left(Adj_G\right)_{u,v} = \begin{cases}
            1 & \text{if } uv \in E\\
            0 & \text{otherwise}
        \end{cases}
    \end{equation*}
\end{definition}

A particularly important submatrix of the adjacency matrix of an open graph is the so-called `reduced adjacency matrix'.

\begin{definition}[Reduced adjacency matrix {\cite[Definition 6]{mhallaWhichGraphStates2014a}} and {\cite[Definition 3.11]{mitosekPauliFlowOpen2024}}]
    Let $(G,I,O)$ be an open graph. Its \textit{reduced adjacency matrix} is the $(n - n_O) \times (n - n_I)$ submatrix ${Adj_G}\mid_{\comp{O}}^{\comp{I}}$ of the adjacency matrix $Adj_G$, keeping only the rows corresponding to the non-outputs $\comp{O}$ and the columns corresponding to the non-inputs $\comp{I}$.
\end{definition}

The above matrix was previously used in characterising focused gflow on labelled open graphs with only $XY$ planar measurements (gflow being the straightforward restriction of Pauli flow to that setting) \cite{mhallaWhichGraphStates2014a}.
In particular, it was shown that a labelled open graph with all-$XY$ measurements has Pauli flow if and only if its reduced adjacency matrix is right-invertible, and one of the right inverses is the adjacency matrix of a DAG.
A generalisation of this result, which additionally allows Pauli-$X$ measurements, was stated by one of the authors in \cite{mitosekPauliFlowOpen2024}, deferring the proof to an upcoming joint paper: which is this work.
This previous paper by one of the authors also used an algebraic interpretation to deal with Pauli $X$ and $Z$ measurements by showing that a particular matrix called `flow matrix' must be right-invertible \cite{mitosekPauliFlowOpen2024}.
We combine and indeed improve on those results here, by defining what we now call a `flow-demand matrix', which works for all six measurement types.

\begin{definition}[Flow-demand matrix]\label{flow-demand matrix definition}
    Let $\Gamma = (G,I,O,\lambda)$ be a labelled open graph. We define the \textit{flow-demand matrix} $M_{\Gamma}$ as the $(n-n_O) \times (n-n_I)$ matrix with rows corresponding to non-outputs $\comp{O}$ and columns corresponding to non-inputs $\comp{I}$, where the row $M_{v,*}$ corresponding to the vertex $v \in \comp{O}$ satisfies the following for any $w \in \comp{I} \setminus \{ v \}$
    :\begin{itemize}
        \item if $\lambda(v) \in \{ X, XY \}$, then $M_{v,v} = 0$ and $M_{v,w} = Adj_{v,w}$ i.e.\ the $v$ row encodes the neighbourhood of $v$,
        \item if $\lambda(v) \in \{Z, YZ, XZ\}$, then $M_{v,v} = 1$ and $M_{v,w} = 0$, i.e.\ the $v$ row contains a $1$ at the intersection with the $v$ column and is identically $0$ otherwise, and
        \item if $\lambda(v) \in \{ Y \}$, then $M_{v,v} = 1$ (provided that $v$ column exists) and $M_{v,w} = Adj_{v,w}$, i.e.\ the $v$ row encodes the neighbourhood of $v$ and also has a $1$ at the intersection with the $v$ column.
    \end{itemize}
\end{definition}

As we will soon prove, the right invertibility of the flow-demand matrix relates to most, but not all, of the conditions of the focused Pauli flow. For the remaining conditions, we need the following construction we call the `order-demand matrix'.

\begin{definition}[Order-demand matrix]\label{order-demand matrix definition}
     Let $\Gamma = (G,I,O,\lambda)$ be a labelled open graph. We define the \textit{order-demand matrix} $N_{\Gamma}$ as the $(n-n_O) \times (n-n_I)$ matrix with rows corresponding to non-outputs $\comp{O}$ and columns corresponding to non-inputs $\comp{I}$, where the row $N_{v,*}$ corresponding to the vertex $v \in \comp{O}$ satisfies the following for any $w \in \comp{I}$:\begin{itemize}
        \item if $\lambda(v) \in \{ X, Y, Z \}$, then $N_{v,*} = \mathbf{0}$, i.e.\ the $v$ row is identically $0$,
        \item if $\lambda(v) = YZ$, then $N_{v,v} = 0$ and $N_{v,w} = Adj_{v,w}$, i.e.\ the $v$ row encodes the neighbourhood of $v$,
        \item if $\lambda(v) = XZ$, then $N_{v,v} = 1$ and $N_{v,w} = Adj_{v,w}$, i.e.\ the $v$ row encodes the neighbourhood of $v$ and also has a $1$ at the intersection with the $v$ column, and
        \item if $\lambda(v) = XY$, then $N_{v,v} = 1$ (provided that the $v$ column exists) and $N_{v,w} = 0$, i.e.\ the $v$ row contains a $1$ at the intersection with the $v$ column and is identically $0$ otherwise.
    \end{itemize}
\end{definition}

In order-demand matrices, the rows of $XY$ measured vertices are encoded as rows of $Z$ measured vertices would be in the flow-demand matrix. Similar correspondence holds between $YZ$ and $X$ measurements and between $XZ$ and $Y$ measurements. In the definitions, when setting the intersection of $v$ row and $v$ column to $1$, we must exclude $v \in I$, as for such $v$ there exists a row, but not a column. Note, that rows of Pauli measurements and $XY$ measured inputs are identically $0$ in the order-demand matrix. See Figure~\ref{fig:M} for the flow-demand matrix of the running example and Figure~\ref{fig:N} for the corresponding order-demand matrix. See Algorithm~\ref{M and N pseudo} in Appendix~\ref{pseudocodes easy} for the pseudocode description of the construction of the flow-demand and the order-demand matrices.

\begin{figure}
    \centering
    \begin{subfigure}[b]{0.43\textwidth}
        \centering
        $\begin{array}{|c|cccccc|}
            \toprule
            M & a & b & e & d & o_1 & o_2 \\
            \midrule
            i & 0 & 1 & 0 & 0 & 0 & 0 \\
            a & 1 & 0 & 0 & 0 & 0 & 0 \\
            b & 0 & 1 & 1 & 1 & 0 & 1 \\
            e & 0 & 1 & 0 & 1 & 1 & 1 \\
            d & 0 & 0 & 0 & 1 & 0 & 0 \\
            \bottomrule
        \end{array}$
        \caption{Flow-demand matrix of the graph in Figure~\ref{fig:loG}.}
        \label{fig:M}
    \end{subfigure}
    \qquad
    \begin{subfigure}[b]{0.43\textwidth}
        \centering
        $\begin{array}{|c|cccccc|}
            \toprule
            N & a & b & e & d & o_1 & o_2 \\
            \midrule
            i & 0 & 0 & 0 & 0 & 0 & 0 \\
            a & 1 & 0 & 0 & 1 & 0 & 1 \\
            b & 0 & 0 & 0 & 0 & 0 & 0 \\
            e & 0 & 0 & 1 & 0 & 0 & 0 \\
            d & 0 & 0 & 0 & 0 & 0 & 0 \\
            \bottomrule
        \end{array}$
        \caption{Order-demand matrix of the graph in Figure~\ref{fig:loG}.}
        \label{fig:N}
    \end{subfigure}
    \begin{subfigure}[b]{0.43\textwidth}
        \centering
        $\begin{array}{|c|ccccc|}
            \toprule
            C & i & a & b & e & d \\
            \midrule
            a & 0 & 1 & 0 & 0 & 0 \\
            b & 1 & 0 & 0 & 0 & 0 \\
            e & 1 & 1 & 1 & 0 & 0 \\
            d & 0 & 0 & 0 & 0 & 1 \\
            o_1 & 1 & 1 & 0 & 1 & 0 \\
            o_2 & 0 & 1 & 0 & 0 & 1 \\
            \bottomrule
        \end{array}$
        \caption{Matrix of the correction function in Figure~\ref{fig:pf} for the labelled open graph in Figure~\ref{fig:loG}.}
        \label{fig:C}
    \end{subfigure}
    \qquad
    \begin{subfigure}[b]{0.43\textwidth}
        \centering
        $\begin{array}{|c|ccccc|}
            \toprule
            NC & i & a & b & e & d \\
            \midrule
            i & 0 & 0 & 0 & 0 & 0 \\
            a & 0 & 0 & 0 & 0 & 0 \\
            b & 0 & 0 & 0 & 0 & 0 \\
            e & 1 & 1 & 1 & 0 & 0 \\
            d & 0 & 0 & 0 & 0 & 0 \\
            \bottomrule
        \end{array}$
        \caption{Product of the order-demand matrix and the correction matrix in this figure.}
        \label{fig:NC}
    \end{subfigure}
    \caption{Various matrices encoding aspects of Pauli flow on the labelled open graph in Figure~\ref{fig:loG}.}
    \label{fig:mats}
\end{figure}

Instead of thinking about the correction function as a function, we can transform it, too, into a matrix.

\begin{definition}[Correction matrix]\label{def:correction-matrix}
    Let $\Gamma = (G,I,O,\lambda)$ be a labelled open graph and let $c$ be a correction function on $\Gamma$. We define the \textit{correction matrix $C$} encoding the function $c$ as the $(n-n_I) \times (n-n_O)$ matrix with rows corresponding to non-inputs $\comp{I}$ and columns corresponding to non-outputs $\comp{O}$, where $C_{u,v} = 1$ if and only if $u \in c(v)$.
\end{definition}

\begin{observation}
 Given a correction matrix $C$ encoding an unknown correction function $c$, the correction function can be recovered as $c(v) = \{ u \in \comp{I} \mid C_{u,v} = 1 \}$.
\end{observation}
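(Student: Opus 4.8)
The plan is to observe that this statement is an immediate consequence of the definition of the correction matrix (Definition~\ref{def:correction-matrix}), so the ``proof'' amounts to unwinding that definition and checking that no information is lost when we read off a single column. First I would recall that a correction function has the form $c \colon \comp{O} \to \mathcal{P}(\comp{I})$, so that $c(v) \subseteq \comp{I}$ for every $v \in \comp{O}$; this containment is the only point that genuinely needs to be noted, since it guarantees that every element of $c(v)$ is indexed by some row $u \in \comp{I}$ of $C$ and hence is captured when we scan that column.

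Next I would invoke the defining property of $C$ directly: by Definition~\ref{def:correction-matrix}, the rows of $C$ are indexed by $\comp{I}$, the columns by $\comp{O}$, and $C_{u,v} = 1$ if and only if $u \in c(v)$. Fixing an arbitrary $v \in \comp{O}$ and intersecting with the row-index set, this gives
\begin{equation*}
    \{ u \in \comp{I} \mid C_{u,v} = 1 \} = \{ u \in \comp{I} \mid u \in c(v) \} = c(v) \cap \comp{I} = c(v),
\end{equation*}
where the final equality uses $c(v) \subseteq \comp{I}$ from the previous step. Since $v$ was arbitrary, this recovers $c$ from $C$ as claimed. There is no substantive obstacle here: the statement is a bookkeeping observation recording that the correspondence between correction functions and correction matrices is a genuine bijection, and the only subtlety worth flagging explicitly is the domain constraint $c(v) \subseteq \comp{I}$, which ensures that restricting the index $u$ to $\comp{I}$ discards nothing.
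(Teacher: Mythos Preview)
Your proposal is correct and matches the paper's approach: the paper treats this observation as immediate from Definition~\ref{def:correction-matrix} and gives no formal proof, merely remarking afterwards that $C_{*,v} = \ind{c(v)}{\comp{I}}$ and $\supp{C_{*,v}} = c(v)$. Your write-up is a faithful (and slightly more explicit) unwinding of exactly this, with the only added care being the observation that $c(v)\subseteq\comp{I}$ ensures nothing is lost.
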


In other words, the $v$ column in $C$ encodes the characteristic function of set $c(v)$. That is $C_{*,v} = \ind{c(v)}{\comp{I}}$ and $\supp{C_{*,v}} = c(v)$. Here, $\ind{\someset}{\otherset}$ stands for the indicator function of the subset $\someset$ of $\otherset$ (as well as for the corresponding vector over $\mathbb{F}_2$), and $\support$ stands for the support. See Figure~\ref{fig:C} for a correction matrix corresponding to the running example.

\subsection{Properties of the flow-related matrices}

In this subsection, we prove useful properties of the newly-defined flow-demand and order-demand matrices. Some of these properties require us to split the conditions \ref{P4}, \ref{P5}, and \ref{P6} of the Pauli flow definition: They effectively have two parts each and it will be easier to consider each part on its own.

\begin{observation}\label{P456 split}
    Let $(G,I,O,\lambda)$ be a labelled open graph and let $c$ be a correction function. For all $u \in \comp{O}$, define the following conditions:
    \begin{enumerate}
        \item[{\crtcrossreflabel{(P4a)}[P4a]}] $\lambda(u) = XY \Rightarrow u \in \odd{c(u)}$
        \item[{\crtcrossreflabel{(P4b)}[P4b]}] $\lambda(u) = XY \Rightarrow u \notin c(u)$
        \item[{\crtcrossreflabel{(P5a)}[P5a]}] $\lambda(u) = XZ \Rightarrow u \in c(u)$
        \item[{\crtcrossreflabel{(P5b)}[P5b]}] $\lambda(u) = XZ \Rightarrow u \notin \codd{c(u)}$
        \item[{\crtcrossreflabel{(P6a)}[P6a]}] $\lambda(u) = YZ \Rightarrow u \in c(u)$
        \item[{\crtcrossreflabel{(P6b)}[P6b]}] $\lambda(u) = YZ \Rightarrow u \notin \odd{c(u)}$.
    \end{enumerate}
    Then, \ref{P4} is equivalent to the conjunction of \ref{P4a} and \ref{P4b}; \ref{P5} is equivalent to the conjunction of \ref{P5a} and \ref{P5b}; and \ref{P6} is equivalent to the conjunction of \ref{P6a} and \ref{P6b}.

    In particular, note that if $c$ satisfies \ref{P5}, then $u \in c(u)$ and $u \in \odd{c(u)}$, so $u \notin c(u) \symd \odd{c(u)} = \codd{u}$ and thus \ref{P5a} and \ref{P5b} follow. When $c$ satisfies \ref{P5a} and \ref{P5b}, then $u \in c(u)$ and $u \notin \codd{c(u)} = c(u) \symd \odd{c(u)}$ and hence $u \in \odd{c(u)}$, i.e.\ \ref{P5} follows.
\end{observation}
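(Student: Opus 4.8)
The plan is to treat the three equivalences separately, observing that the \ref{P4} and \ref{P6} cases are purely definitional while only \ref{P5} carries any content. First I would dispose of \ref{P4} and \ref{P6}. In each instance the consequent of the unsplit condition is already a conjunction of two statements, each of which is exactly the consequent of one split condition: the consequent $u \notin c(u) \wedge u \in \odd{c(u)}$ of \ref{P4} is \ref{P4b} conjoined with \ref{P4a}, and the consequent $u \in c(u) \wedge u \notin \odd{c(u)}$ of \ref{P6} is \ref{P6a} conjoined with \ref{P6b}. Since in each case the hypothesis ($\lambda(u)=XY$, respectively $\lambda(u)=YZ$) is common to all three conditions, the desired equivalence is just the distribution of a single implication over the conjunction of its consequents.

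The substantive case is \ref{P5}, where the split trades the clause $u \in \odd{c(u)}$ for $u \notin \codd{c(u)}$. Here I would use the definitional identity $\codd{c(u)} = \odd{c(u)} \symd c(u)$, which says that $u \notin \codd{c(u)}$ holds precisely when $u$ belongs to both or to neither of $c(u)$ and $\odd{c(u)}$. For the forward direction, assuming the consequent of \ref{P5} gives $u \in c(u)$ (which is \ref{P5a}) together with $u \in \odd{c(u)}$; as $u$ then lies in both sets it escapes their symmetric difference, so $u \notin \codd{c(u)}$, which is \ref{P5b}. Conversely, \ref{P5a} supplies $u \in c(u)$, and combined with $u \notin \codd{c(u)}$ from \ref{P5b} the identity forces $u \in \odd{c(u)}$, recovering the full consequent of \ref{P5}. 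This is exactly the chain of implications already sketched in the statement.

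I do not expect any genuine obstacle: the whole argument reduces to the elementary fact that membership in a symmetric difference records disagreement, so that non-membership in $\codd{c(u)}$ is precisely the agreement of membership in $c(u)$ and $\odd{c(u)}$. The reason the split is worth isolating is not difficulty but downstream convenience: once separated, the ``membership in $c(u)$'' halves and the ``(non-)membership in $\odd{c(u)}$'' halves can each be matched against the appropriate entries of the flow-demand and order-demand matrices in the proof of Theorem~\ref{algebraic interpretation section start}.
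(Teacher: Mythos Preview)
Your proposal is correct and matches the paper's own treatment: the paper likewise regards the \ref{P4} and \ref{P6} equivalences as immediate and only spells out the \ref{P5} case, using exactly the symmetric-difference argument you give.
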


Now, we prove different facts regarding matrix products involving flow-demand or order-demand matrices. First, we look at the product of the flow-demand matrix and a column vector.

\begin{lemma}\label{row by col meaning}
    Let $(G,I,O,\lambda)$ be a labelled open graph. Let $\someset \subseteq \comp{I}$. Let $u \in \comp{I}$. Then the product of the $u$-th row of $M$ with the indicator vector of $\someset$ satisfies $M_{u,*}\ind{\someset}{\comp{I}} = \left(M\ind{\someset}{\comp{I}}\right)_u = 1$ if and only if:\begin{align*}
        \triplecase
            {\lambda(u) \in \{ X, XY \}}{u \in \odd{\someset}}
            {\lambda(u) \in \{ XZ, YZ, Z \}}{u \in \someset}
            {\lambda(u) = Y}{u \in \codd{\someset}}
    \end{align*}
\end{lemma}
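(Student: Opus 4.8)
The plan is to unwind the definition of matrix multiplication over $\FF_2$ and match it against the definitions of the odd neighbourhood and the flow-demand matrix, splitting into the three cases that correspond to the three cases in Definition~\ref{flow-demand matrix definition}. First I would record the basic identity: since $C_{*,\someset} := \ind{\someset}{\comp{I}}$ has a $1$ exactly in the rows indexed by elements of $\someset$, we have
\begin{equation*}
    M_{u,*}\ind{\someset}{\comp{I}} = \sum_{w \in \comp{I}} M_{u,w} \cdot \ind{\someset}{\comp{I}}(w) = \sum_{w \in \someset} M_{u,w} \pmod 2 = \abs{\{ w \in \someset : M_{u,w} = 1 \}} \bmod 2.
\end{equation*}
So the product is $1$ iff an odd number of columns $w \in \someset$ have a $1$ in row $u$. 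It then remains to identify, in each of the three cases of the labelling of $u$, which $w$ these are.

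Next I would go case by case. If $\lambda(u) \in \{X, XY\}$, then by Definition~\ref{flow-demand matrix definition} the row $M_{u,*}$ is the indicator of the (open) neighbourhood of $u$ restricted to $\comp{I}$, with $M_{u,u} = 0$; but since $\odd{\someset}$ only counts adjacency and the graph is simple (no self-loops), restricting to $\comp{I}$ does not change the parity of $\abs{\{ w \in \someset : uw \in E \}}$ — here one has to be slightly careful that columns indexed by $I$ are genuinely absent, but a vertex is never adjacent to itself and any neighbour of $u$ lying in $I$ would also have to lie in $\someset \sse \comp{I}$ to contribute, so in fact every contributing $w$ is automatically in $\comp{I}$. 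Hence the sum counts $\abs{\{ w \in \someset : uw \in E\}} \bmod 2$, which is $1$ iff $u \in \odd{\someset}$. If $\lambda(u) \in \{XZ, YZ, Z\}$, the row has a single $1$, at column $u$ (which exists because such $u \notin I$), so the sum is $1$ iff $u \in \someset$. If $\lambda(u) = Y$, the row is the neighbourhood indicator plus an extra $1$ at column $u$, so the sum counts $\abs{\{ w \in \someset : uw \in E\}} + [u \in \someset] \bmod 2$, i.e.\ it is $1$ iff exactly one of ``$u \in \odd{\someset}$'' and ``$u \in \someset$'' holds, which is precisely $u \in \odd{\someset} \symd \someset = \codd{\someset}$.

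I do not expect a serious obstacle here; this is essentially a bookkeeping lemma. The one point that needs genuine care — and the place where I would slow down — is the handling of columns indexed by inputs: the flow-demand matrix only has columns for $\comp{I}$, so when I claim the restricted row still encodes the full odd neighbourhood I must justify that no neighbour of $u$ in $I$ is ever summed over. This is immediate because the summation index $w$ ranges over $\someset \sse \comp{I}$, so inputs never enter the sum regardless of whether $u$ is adjacent to them; the ``provided that $v$ column exists'' caveat in the $Y$ case similarly needs noting, but in that case $u \in \comp{I}$ by hypothesis so the column does exist. Once these edge cases are dispatched, the three cases each reduce to the one-line parity computation above, matching the three displayed cases of the lemma.
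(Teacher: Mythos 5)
Your proposal is correct and follows essentially the same route as the paper's proof: unwind the $\FF_2$ matrix--vector product, split into the three cases of Definition~\ref{flow-demand matrix definition}, and in the $Y$ case observe that the neighbourhood contribution and the diagonal contribution are disjoint (since $G$ is simple) so the sum is their XOR, yielding $\codd{\someset}$. Your extra care about the absent input-indexed columns and the existence of the $u$ column is a welcome but minor addition; the argument is otherwise identical.
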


\begin{proof}
    The fact $M_{u,*}\ind{\someset}{\comp{I}} = \left(M\ind{\someset}{\comp{I}}\right)_u$ follows immediately from the definition of matrix multiplication. Now, we consider different cases depending on $\lambda(u)$. Here, $w \in \comp{I}$ is any non-input.\begin{enumerate}
        \item Suppose $\lambda(u) \in \{ X, XY \}$. Then $M_{u,*}$ encodes $Adj(u)$ and thus:\begin{align*}
            M_{u,w}\left(\ind{\someset}{\comp{I}}\right)_w = \begin{cases}
                1 & \text{if } uw \in E \wedge w \in \someset\\
                0 & \text{otherwise}
            \end{cases}
        \end{align*}
        Recall that the matrices and computations are over $\mathbb{F}_2$.
        Then, we have:\begin{align*}
            \left(M\ind{\someset}{\comp{I}}\right)_u
            = \sum_{w \in \comp{I}} M_{u,w}\left(\ind{\someset}{\comp{I}}\right)_w
            &= \abs{\left\{w\in \comp{I} : uw \in E \wedge w \in \someset\right\}} \bmod{2} \\
            &= \abs{\left\{w\in \someset : uw \in E\right\}} \bmod{2}
            = \begin{cases}
                1 &\text{if } u \in \odd{\someset}\\
                0 &\text{otherwise}
            \end{cases}
        \end{align*}

        \item Suppose $\lambda(u) \in \{ XZ, YZ, Z \}$. Then $M_{u,*}$ equals $0$ except for $M_{u,u} = 1$ and thus:\begin{align*}
            M_{u,w}\left(\ind{\someset}{\comp{I}}\right)_w = \begin{cases}
                1 & \text{if } u = w \wedge w \in \someset\\
                0 & \text{otherwise}
            \end{cases}
        \end{align*}
        Note, that $u \in \comp{I}$ because inputs cannot be measured in $XZ, YZ, Z$ (cf.\ Definition~\ref{def:labelling}). Therefore:\begin{align*}
            \left(M\ind{\someset}{\comp{I}}\right)_u = \sum_{w \in \comp{I}} M_{u,w}\left(\ind{\someset}{\comp{I}}\right)_w = M_{u,u}\left(\ind{\someset}{\comp{I}}\right)_u = \begin{cases}
                1 & \text{if } u \in \someset\\
                0 & \text{otherwise}
            \end{cases}
        \end{align*}

        \item Suppose $\lambda(u) = Y$. Then $M_{u,*}$ encodes $Adj(u)$ except for $M_{u,u} = 1$ and thus:\begin{align*}
            M_{u,w}\left(\ind{\someset}{\comp{I}}\right)_w = \begin{cases}
                1 & \text{if } uw \in E \wedge w \in \someset\\
                1 & \text{if } u = w \wedge w \in \someset\\
                0 & \text{otherwise}
            \end{cases}
        \end{align*}
        The first two options are disjoint as the graph $G$ is simple, so it does not contain the edge $uu$. Therefore, the expression for $\left(M\ind{\someset}{\comp{I}}\right)_u$ is the sum of the expressions obtained in the previous two cases. Hence there are the following cases for $\left(M\ind{\someset}{\comp{I}}\right)_u = \sum_{w\in\comp{I}} M_{u,w}\left(\ind{\someset}{\comp{I}}\right)_w$:\begin{itemize}
            \item When $u \in \odd{\someset}$ and $u \in \someset$, we get $\left(M\ind{\someset}{\comp{I}}\right)_u = 1+1 = 0$ (recall that we work over $\mathbb{F}_2$),
            \item When $u \notin \odd{\someset}$ and $u \in \someset$, we get $\left(M\ind{\someset}{\comp{I}}\right)_u = 0+1 = 1$,
            \item When $u \in \odd{\someset}$ and $u \notin \someset$, we get $\left(M\ind{\someset}{\comp{I}}\right)_u = 1+0 = 1$,
            \item When $u \notin \odd{\someset}$ and $u \notin \someset$, we get $\left(M\ind{\someset}{\comp{I}}\right)_u = 0+0 = 0$.
        \end{itemize}
        Combining the conditions, we obtain:
        \begin{equation*}
            \left(M\ind{\someset}{\comp{I}}\right)_u = \begin{cases}
                1 & \text{if } u \in \codd{\someset}\\
                0 & \text{otherwise}
            \end{cases}
        \end{equation*}
    which ends the proof.\qedhere
    \end{enumerate}
\end{proof}

Using the above lemma, we can provide an interpretation of the product of the flow-demand matrix and the correction matrix.

\begin{lemma}\label{row meaning}
    Let $(G,I,O,\lambda)$ be a labelled open graph and let $c$ be a correction function. Let $u,v \in \comp{O}$.
    Then, $(MC)_{u,v} = 1$ if and only if:
    \begin{align*}
        \triplecase
            {\lambda(u) \in \{X, XY\}}{u \in \odd{c(v)}}
            {\lambda(u) \in \{XZ, YZ, Z \}}{u \in c(v)}
            {\lambda(u)=Y}{u \in \codd{c(v)}}
        \intertext{Further, $(NC)_{u,v} = 1$ if and only if:}
        \triplecase
            {\lambda(u) = YZ}{u \in \odd{c(v)}}
            {\lambda(u) = XY}{u \in c(v)}
            {\lambda(u) = XZ}{u \in \codd{c(v)}}
    \end{align*}
\end{lemma}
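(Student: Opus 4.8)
The plan is to prove Lemma~\ref{row meaning} by expressing the matrix product $(MC)_{u,v}$ (and similarly $(NC)_{u,v}$) in terms of the column $C_{*,v}$ of the correction matrix, which is precisely the indicator vector $\ind{c(v)}{\comp{I}}$ of the correction set $c(v)$. From the definition of matrix multiplication, $(MC)_{u,v} = M_{u,*}C_{*,v} = M_{u,*}\ind{c(v)}{\comp{I}}$, and this is exactly the quantity analysed in Lemma~\ref{row by col meaning} with $\someset = c(v)$. So the first statement of the lemma is an immediate consequence: substituting $\someset = c(v)$ into Lemma~\ref{row by col meaning} gives $(MC)_{u,v} = 1$ if and only if one of the three listed conditions (depending on $\lambda(u)$) holds, which is precisely the claim.

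For the second statement, concerning $(NC)_{u,v}$, I would prove an analogue of Lemma~\ref{row by col meaning} for the order-demand matrix $N$ (or simply observe the pattern directly from the definitions). Comparing Definition~\ref{order-demand matrix definition} with Definition~\ref{flow-demand matrix definition}, the rows of $N$ have exactly the same three \emph{shapes} as the rows of $M$ --- ``neighbourhood'', ``single $1$ on the diagonal'', and ``neighbourhood plus diagonal $1$'' --- but distributed over different measurement labels: an $XY$ row of $N$ looks like a $\{Z,YZ,XZ\}$ row of $M$ (diagonal $1$ only), a $YZ$ row of $N$ looks like an $\{X,XY\}$ row of $M$ (pure neighbourhood), an $XZ$ row of $N$ looks like a $Y$ row of $M$ (neighbourhood plus diagonal), and the Pauli rows of $N$ are identically zero so they contribute $0$ to any product. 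Running through the same case analysis as in the proof of Lemma~\ref{row by col meaning} --- or, more economically, invoking that lemma's case computations verbatim for each matching shape --- yields: $(NC)_{u,v} = N_{u,*}\ind{c(v)}{\comp{I}} = 1$ iff $\lambda(u) = YZ$ and $u \in \odd{c(v)}$, or $\lambda(u) = XY$ and $u \in c(v)$, or $\lambda(u) = XZ$ and $u \in \codd{c(v)}$, which is the second claim. (For the $XY$-input edge case, $u \in I$ means the $u$-column of $N$ does not exist, but then the diagonal entry is simply absent and the row is all-zero, consistent with the Pauli-measured-inputs remark after Definition~\ref{order-demand matrix definition}; in any case such $u$ never lie in $c(v) \subseteq \comp{I}$, so the stated equivalence still holds vacuously.)

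There is essentially no obstacle here: the whole lemma is a bookkeeping translation of Lemma~\ref{row by col meaning} into the language of the correction matrix, plus the observation that $N$'s row-shapes coincide with $M$'s row-shapes under a relabelling of measurement types. The only point requiring any care is making sure the case $u \in \comp{O} \setminus \comp{I}$ (i.e.\ $u \in I$) is handled correctly in both products --- there the relevant diagonal column does not exist, but Lemma~\ref{row by col meaning} was already stated for $u \in \comp{I}$, so one should note that when $\lambda(u) \in \{X, XY\}$ and $u \in I$, the argument still goes through since $u$'s row of $M$ encodes its neighbourhood regardless, and similarly for $N$ when $\lambda(u) = YZ$; for $\lambda(u) \in \{XZ, YZ, Z\}$ in the $M$-case and $\lambda(u) \in \{XY, XZ\}$ in the $N$-case, Definition~\ref{def:labelling} forbids $u \in I$, so these subcases do not arise. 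I would state this briefly and then conclude.
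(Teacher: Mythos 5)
Your proposal is correct and matches the paper's own proof essentially verbatim: both reduce $(MC)_{u,v} = M_{u,*}C_{*,v}$ to Lemma~\ref{row by col meaning} with $\someset = c(v)$ and then treat $(NC)_{u,v}$ by the analogous case analysis on the rows of $N$. Your extra care with the $u \in I$ edge cases is a small bonus over the paper's terser ``analogous'' remark, but the argument is the same.
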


\begin{proof}
    We start with $(MC)_{u,v} = M_{u,*} C_{*,v}$. Columns of $C$ encode $c(v)$, i.e.\ $C_{*,v} = \ind{c(v)}{\comp{I}}$. Hence, from Lemma~\ref{row by col meaning}, we have that $M_{u,*} C_{*,v}$ equals $1$ if and only if:\begin{align*}
        \triplecase
            {\lambda(v) \in \{X, XY\}}{v \in \odd{c(v)}}
            {\lambda(v) \in \{XZ, YZ, Z \}}{v \in c(v)}
            {\lambda(v)=Y}{v \in \codd{c(v)}}
    \end{align*}
    The proof for $(NC)_{u,v}$ is analogous, taking into account the construction of each row of $N$ as given in Definition~\ref{order-demand matrix definition}.
\end{proof}

We can now relate several of the Pauli flow conditions from Definition~\ref{def:Pauli-flow} and Observation~\ref{P456 split}, as well as the focusing conditions from Definition~\ref{def:focused}, to the matrix product $MC$.

\begin{theorem}\label{MC}
    Let $\Gamma = (G,I,O,\lambda)$ be a labelled open graph and let $c$ be a correction function on $\Gamma$. The following two statements are equivalent:\begin{itemize}
        \item $MC = Id_{\comp{O}}$, where $C$ is the correction matrix encoding $c$ according to Definition~\ref{def:correction-matrix} and $M$ is the flow-demand matrix of $\Gamma$,
        \item $c$ satisfies \ref{P4a}, \ref{P5a}, \ref{P6a}, \ref{P7}, \ref{P8}, \ref{P9}, \ref{F1}, \ref{F2}, and \ref{F3}.
    \end{itemize}
\end{theorem}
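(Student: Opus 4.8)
The plan is to read the entries of $MC$ directly off Lemma~\ref{row meaning} and match the requirement $MC = Id_{\comp{O}}$ against the listed conditions one entry at a time. Since $MC$ is indexed by $\comp{O}\times\comp{O}$, the equation $MC = Id_{\comp{O}}$ is equivalent to the conjunction of the diagonal requirements $(MC)_{u,u}=1$ for every $u\in\comp{O}$ together with the off-diagonal requirements $(MC)_{u,v}=0$ for every pair $u\ne v$ in $\comp{O}$. I would handle these two families separately, and in each case the content is already supplied by Lemma~\ref{row meaning}.

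For the diagonal, specialise Lemma~\ref{row meaning} to $v=u$: then $(MC)_{u,u}=1$ holds precisely when one of ``$\lambda(u)\in\{X,XY\}$ and $u\in\odd{c(u)}$'', ``$\lambda(u)\in\{XZ,YZ,Z\}$ and $u\in c(u)$'', or ``$\lambda(u)=Y$ and $u\in\codd{c(u)}$'' is true. Because every non-output carries exactly one of the six labels, demanding $(MC)_{u,u}=1$ for \emph{all} $u\in\comp{O}$ amounts to demanding, label by label, the corresponding membership statement: this is exactly \ref{P4a} for $XY$, \ref{P7} for $X$, \ref{P5a} for $XZ$, \ref{P6a} for $YZ$, \ref{P8} for $Z$, and \ref{P9} for $Y$. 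Hence the diagonal of $MC$ being identically $1$ is equivalent to the conjunction of \ref{P4a}, \ref{P5a}, \ref{P6a}, \ref{P7}, \ref{P8}, \ref{P9}.

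For the off-diagonal entries, invoke Lemma~\ref{row meaning} with $u\ne v$: now $(MC)_{u,v}=0$ is the negation of all three cases, i.e.\ ``$\lambda(u)\in\{X,XY\}\Rightarrow u\notin\odd{c(v)}$'', ``$\lambda(u)\in\{XZ,YZ,Z\}\Rightarrow u\notin c(v)$'', and ``$\lambda(u)=Y\Rightarrow u\notin\codd{c(v)}$''. Renaming $u$ to $w$ and quantifying over all $w\in\comp{O}\setminus\{v\}$ and all $v\in\comp{O}$, the second statement is the contrapositive of \ref{F1} (using $\{X,Y,Z,XY,XZ,YZ\}\setminus\{XY,X,Y\}=\{Z,XZ,YZ\}$), the first is the contrapositive of \ref{F2} (using that the complement of $\{XZ,YZ,Y,Z\}$ is $\{X,XY\}$), and the third is literally \ref{F3}. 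Therefore all off-diagonal entries of $MC$ vanish if and only if $c$ satisfies \ref{F1}, \ref{F2}, \ref{F3}. Combining the two paragraphs gives the claimed equivalence.

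The argument is essentially bookkeeping once Lemma~\ref{row meaning} is available, so there is no serious obstacle. The only points requiring care are verifying that the set-complement reformulations of \ref{F1} and \ref{F2} coincide exactly with the ``$=0$'' cases of Lemma~\ref{row meaning}, and that the six-way split over labels is exhaustive and mutually exclusive, so that a single universally-quantified entry condition decomposes cleanly into the named pointwise conditions with no overlap or omission.
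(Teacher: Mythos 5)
Your proposal is correct and follows essentially the same route as the paper's proof: both reduce the statement to Lemma~\ref{row meaning} and then match diagonal entries to \ref{P4a}, \ref{P5a}, \ref{P6a}, \ref{P7}, \ref{P8}, \ref{P9} and off-diagonal entries to the contrapositives of \ref{F1}, \ref{F2}, \ref{F3}. The only difference is organisational (you split by diagonal versus off-diagonal and argue the equivalence at once, whereas the paper proves the two implications separately, each with a three-way case split on $\lambda(u)$), which does not change the substance.
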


\begin{proof}
    $(\Leftarrow)$: We must show that:\begin{equation*}
        (MC)_{u,v} = \begin{cases}
            1 & \text{if } u = v\\
            0 & \text{otherwise}
        \end{cases}
    \end{equation*}
    We consider three cases depending on $\lambda(u)$.
    \begin{enumerate}
        \item $\lambda(u) \in \{ X, XY \}$. By Lemma~\ref{row meaning}:\begin{equation*}
            (MC)_{u,v} = \begin{cases}
                1 & \text{if } u \in \odd{c(v)}\\
                0 & \text{otherwise}
            \end{cases}.
        \end{equation*}
        Now, we have $u \in \odd{c(u)}$ by \ref{P4a} for $\ld(u)=XY$ and by \ref{P7} for $\lambda(u) = X$. Furthermore, $u \notin \odd{c(v)}$ for $u \ne v$ by \ref{F2} for either label.
        Thus, $(MC)_{u,v}=1$ if and only if $u=v$.
    
        \item $\lambda(u) \in \{ XZ, YZ, Z \}$. By Lemma~\ref{row meaning}:\begin{equation*}
            (MC)_{u,v} = \begin{cases}
                1 & \text{if } u \in c(v)\\
                0 & \text{otherwise}
            \end{cases}.
        \end{equation*}
        Now, we have $u \in c(u)$ by \ref{P5a} for $\ld(u)=XZ$, by \ref{P6a} for $\ld(u)=YZ$, and by \ref{P8} for $\ld(u)=X$.
        Furthermore, $u \notin c(v)$ for $u \ne v$ by \ref{F1}.
        Thus, $(MC)_{u,v}=1$ if and only if $u=v$.

        \item $\lambda(u) = Y$. By Lemma~\ref{row meaning}:\begin{equation*}
            (MC)_{u,v} = \begin{cases}
                1 & \text{if } u \in \codd{c(v)}\\
                0 & \text{otherwise}
            \end{cases}.
        \end{equation*}
        Now, we have $u \in \codd{c(u)}$ by \ref{P9}. Finally, $u \notin \codd{c(v)}$ for $v \ne u$ by \ref{F3}.
        Thus, again, $(MC)_{u,v}=1$ if and only if $u=v$.
    \end{enumerate}
    Hence, indeed we have \begin{equation*}
        (MC)_{u,v} = \begin{cases}
            1 & \text{if } u = v\\
            0 & \text{otherwise}
        \end{cases}
    \end{equation*}
    for any $u,v \in \comp{O}$, which ends the proof.

    $(\Rightarrow)$: This direction is very similar. By assumption, $MC = Id_{\comp{O}}$, so:\begin{equation}\label{id meaning}
        (MC)_{u,v} = \begin{cases}
            1 & \text{if } u = v \\
            0 & \text{otherwise}
        \end{cases}
    \end{equation}
    Again, we consider three cases depending on $\lambda(u)$.\begin{enumerate}
        \item $\lambda(u) \in \{X, XY \}$. By Lemma~\ref{row meaning}, we get:\begin{equation*}
            (MC)_{u,v} = \begin{cases}
                1 & \text{if } u \in \odd{c(v)} \\
                0 & \text{otherwise}
            \end{cases}
        \end{equation*}
        Combining it with equation \ref{id meaning}, we get $u \in \odd{c(u)}$ for any $u$ measured in $X, XY$ and thus \ref{P4a}, \ref{P7} hold. Further, we get that $u \notin \odd{c(v)}$ when $u \ne v$ for any $u$ measured in $X, XY$ and thus $\odd{c(v)}$ may only contain $v$ and vertices measured in $\{ XZ, YZ, Y, Z \}$, i.e.\ \ref{F2} holds.

        \item $\lambda(u) \in \{XZ, YZ, Z\}$. By Lemma~\ref{row meaning}, we get:\begin{equation*}
            (MC)_{u,v} = \begin{cases}
                1 & \text{if } u \in c(v) \\
                0 & \text{otherwise}
            \end{cases}
        \end{equation*}
        Combining it with equation \ref{id meaning}, we get $u \in c(u)$ for any $u$ measured in $XZ, YZ, Z$ and thus \ref{P5a}, \ref{P6a}, \ref{P8} hold. Further, we get that $u \notin c(v)$ when $u \ne v$ for any $u$ measured in $XZ, YZ, Z$ and thus $c(v)$ may only contain $v$ and vertices measured in $\{ X, XY, Y \}$, i.e.\ \ref{F1} holds.

        \item $\lambda(u) = Y$. By Lemma~\ref{row meaning}, we get:\begin{equation*}
            (MC)_{u,v} = \begin{cases}
                1 & \text{if } u \in \codd{c(v)} \\
                0 & \text{otherwise}
            \end{cases}
        \end{equation*}
       Combining it with equation \ref{id meaning}, we get $u \in \codd{c(u)}$ for all $u$ measured in $Y$ and thus \ref{P9} holds. Further, we get that $u \notin \codd{c(v)}$ when $u \ne v$ for all $u$ measured in $Y$, i.e.\ \ref{F3} holds.\qedhere
    \end{enumerate}
\end{proof}

Similarly, the product $NC$ can be related to the remaining condition from the flow definition. First, we show that $NC$ without its main diagonal encodes precisely the induced relation $\trl_c$.

\begin{lemma}\label{NC is the induced order}
    Let $\Gamma = (G,I,O,\lambda)$ be a labelled open graph and let $c$ be a focused correction function on $\Gamma$. Then, for any $u, v \in \comp{O}$ such that $u \ne v$, we have $(NC)_{u,v} = 1$ if and only if $v \trl_c u$. In other words, the off-diagonal part of the matrix $NC$ encodes the relation $\trl_c$.
\end{lemma}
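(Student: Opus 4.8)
The plan is to unfold the definition of $\trl_c$ and compare it term by term with the characterisation of $(NC)_{u,v}$ from Lemma~\ref{row meaning}, using the fact that $c$ is focused to discard the cases that do not contribute. Recall that $v \trl_c u$ holds exactly when $u \ne v$ and one of three clauses holds, each gated by the measurement label of the \emph{target} vertex $u$: (i) $u \in c(v)$ with $\lambda(u) \notin \{X,Y\}$; (ii) $u \in \odd{c(v)}$ with $\lambda(u) \notin \{Y,Z\}$; or (iii) $u \in \codd{c(v)}$ with $\lambda(u) = Y$. Meanwhile Lemma~\ref{row meaning} says $(NC)_{u,v} = 1$ iff $\lambda(u) = YZ$ and $u \in \odd{c(v)}$, or $\lambda(u) = XY$ and $u \in c(v)$, or $\lambda(u) = XZ$ and $u \in \codd{c(v)}$. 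So both objects are defined by a case split on $\lambda(u)$, and I would simply run through the possible labels of $u$ and check the equivalence holds in each.

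First I would handle the Pauli labels. If $\lambda(u) \in \{X,Y,Z\}$, then the corresponding row $N_{u,*}$ is identically $\mathbf 0$ by Definition~\ref{order-demand matrix definition}, so $(NC)_{u,v} = 0$ trivially. On the $\trl_c$ side, Lemma~\ref{all pauli are first} tells us that for a focused $c$ no vertex precedes a Pauli-measured vertex in $\prec_c$, hence also never in $\trl_c$; so $v \trl_c u$ is false as well. (Alternatively one can argue directly: for $\lambda(u) = X$ clause (i) needs $\lambda(u) \notin \{X,Y\}$ which fails, clause (iii) needs $\lambda(u) = Y$ which fails, and clause (ii) needs $u \in \odd{c(v)}$ which is forbidden by \ref{F2} for $u \ne v$; similar small arguments handle $Y$ and $Z$. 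Either way the equivalence reads $0 = 0$.) Then I would do the three planar cases. For $\lambda(u) = XY$: clause (i) is the only possibly-active one since $\lambda(u) = XY \notin \{X,Y\}$, clause (ii) is killed because $XY \notin \{Y,Z\}$ is true so we'd need $u \in \odd{c(v)}$ — but \ref{F2} forbids an $XY$ vertex $u \ne v$ from lying in $\odd{c(v)}$, and clause (iii) is irrelevant since $\lambda(u) \ne Y$. So $v \trl_c u \iff u \in c(v)$, matching the $NC$ characterisation exactly. For $\lambda(u) = YZ$: clause (i) needs $u \in c(v)$, forbidden by \ref{F1} for $u \ne v$; clause (ii) is active ($YZ \notin \{Y,Z\}$) giving $u \in \odd{c(v)}$; clause (iii) irrelevant. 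So $v \trl_c u \iff u \in \odd{c(v)}$, again matching. For $\lambda(u) = XZ$: clause (i) gives $u \in c(v)$, clause (ii) gives $u \in \odd{c(v)}$ ($XZ \notin \{Y,Z\}$), clause (iii) irrelevant; so $v \trl_c u \iff u \in c(v) \lor u \in \odd{c(v)}$, which I need to equate with $u \in \codd{c(v)} = c(v) \symd \odd{c(v)}$. The containments $c(v) \subseteq \codd{c(v)} \cup (c(v)\cap\odd{c(v)})$ etc.\ are handled by noting that for $u \ne v$ with $\lambda(u) = XZ$, \ref{F3} is not directly about $XZ$, so I instead argue: $u \in c(v) \cap \odd{c(v)}$ would be needed for the symmetric difference to differ from the union, but \ref{F1} forbids $u \in c(v)$ and \ref{F2} forbids $u \in \odd{c(v)}$ — wait, that over-kills; actually \ref{F2} \emph{permits} $XZ$ vertices in $\odd{c(v)}$ and \ref{F1} \emph{forbids} $XZ$ vertices in $c(v)$. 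So for $u \ne v$ of type $XZ$ we have $u \notin c(v)$, hence $u \in \codd{c(v)} \iff u \in \odd{c(v)} \iff u \in c(v) \lor u \in \odd{c(v)}$, and the equivalence closes.

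The main obstacle — really the only delicate point — is the bookkeeping in the $XZ$ (and symmetrically making sure the $XY$, $YZ$) cases that the three different set-membership conditions ($c(v)$, $\odd{c(v)}$, $\codd{c(v)}$) appearing in $\trl_c$ versus in $NC$ genuinely coincide once we restrict to $u \ne v$ and invoke focusing. Concretely, \ref{F1} and \ref{F2} are exactly what is needed to collapse ``$u \in c(v)$ or $u \in \odd{c(v)}$'' to a single condition for each planar label, and \ref{F3} is not needed here (its role was in Theorem~\ref{MC}). I would also remark, as the lemma statement anticipates, that this establishes that the off-diagonal entries of $NC$ are precisely the incidence matrix of $\trl_c$ (with the convention that $(NC)_{u,v} = 1$ encodes $v \trl_c u$, i.e.\ $NC$ is the transpose of the relation viewed as rows-precede-columns), which is exactly what will be needed to connect the DAG condition on $NC$ to acyclicity of $\prec_c$.
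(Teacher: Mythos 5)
Your proposal is correct and follows essentially the same route as the paper's proof: a case split on $\lambda(u)$, using Lemma~\ref{row meaning} for the $(NC)_{u,v}$ side, Lemma~\ref{all pauli are first} for the Pauli-measured cases, and the focusing conditions \ref{F1}/\ref{F2} to collapse the competing clauses of $\trl_c$ (in particular to reduce $\codd{c(v)}$ to $\odd{c(v)}$ in the $XZ$ case). If anything, you are slightly more explicit than the paper in ruling out the inactive clauses of the induced relation for each planar label, which is a point the paper's write-up leaves partly implicit.
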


\begin{proof}
    Let $u,v \in \comp{O}$ such that $u \ne v$. By Lemma~\ref{all pauli are first}, if $v \trl_c u$ then $\lambda(u) \in \{ XY, YZ, XZ \}$. Now, we consider three cases on $\lambda(u)$.\begin{enumerate}
        \item $\lambda(u) = XY$. Then, by Lemma~\ref{row meaning}:\begin{equation*}
            (NC)_{u,v} = \begin{cases}
                1 & \text{if } u \in c(v)\\
                0 & \text{otherwise}
            \end{cases}
        \end{equation*}
        Since $u \ne v$ and $\lambda(u) \notin \{ X, Y \}$, by \ref{P1} we get $v \trl_c u$ if and only if $u \in c(v)$, as desired.
        
        \item $\lambda(u) = YZ$. Then, by Lemma~\ref{row meaning}:\begin{equation*}
            (NC)_{u,v} = \begin{cases}
                1 & \text{if } u \in \odd{c(v)}\\
                0 & \text{otherwise}
            \end{cases}
        \end{equation*}
        Since $u \ne v$ and $\lambda(u) \notin \{ Y, Z \}$, by \ref{P2} we get $v \trl_c u$ if and only if $u \in \odd{c(v)}$, as desired.

        \item $\lambda(u) = XZ$. Then, by Lemma~\ref{row meaning}:\begin{equation*}
            (NC)_{u,v} = \begin{cases}
                1 & \text{if } u \in \codd{c(v)}\\
                0 & \text{otherwise}
            \end{cases}
        \end{equation*}
        Since $u \ne v$ and $\lambda(u) \notin \{ Y, Z \}$, by \ref{P2} we get $v \trl_c u$ if and only if $u \in \odd{c(v)}$. As $c$ is focused, it satisfies \ref{F1} and thus $u \notin c(v)$. Hence $u \in \codd{c(v)} \Leftrightarrow u \in \odd{c(v)}$. Combining with the above, we again get the desired equivalence. \qedhere
    \end{enumerate}
\end{proof}

Now, we can relate the matrix product $NC$ to the Pauli flow conditions.

\begin{theorem}\label{NC}
    Let $\Gamma = (G,I,O,\lambda)$ be a labelled open graph and let $c$ be a focused correction function on $\Gamma$. The following two statements are equivalent:\begin{itemize}
        \item $NC$ is the adjacency matrix of a DAG, where $C$ is the correction matrix encoding $c$ according to Definition~\ref{def:correction-matrix} and $N$ is the order-demand matrix of $\Gamma$,
        \item $\prec_c$ is a partial order on $\comp{O}$ and the tuple $(c, \prec_c)$ satisfies \ref{P1}, \ref{P2}, \ref{P3}, \ref{P4b}, \ref{P5b}, and \ref{P6b}.
    \end{itemize}
\end{theorem}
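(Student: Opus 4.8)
The plan is to decompose the condition ``$NC$ is the adjacency matrix of a DAG'' into two orthogonal requirements and match each to half of the right-hand side. A square $\FF_2$-matrix indexed by $\comp{O}$ is the adjacency matrix of a DAG exactly when (a) its diagonal is identically $0$ (the digraph has no self-loops) and (b) the digraph it defines has no directed cycle of length $\geq 2$. I would prove that (a) is equivalent to the conjunction of \ref{P4b}, \ref{P5b}, \ref{P6b}, and that (b) is equivalent to ``$\prec_c$ is a strict partial order''. The remaining conditions \ref{P1}, \ref{P2}, \ref{P3} then hold for $(c,\prec_c)$ essentially for free from the construction of $\trl_c$, so they are carried along rather than doing any work.

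For part (a), apply Lemma~\ref{row meaning} to the diagonal entry $(NC)_{u,u}$, splitting on $\lambda(u)$. If $u$ is Pauli-measured then the row $N_{u,*}$ is zero, so $(NC)_{u,u}=0$; likewise if $\lambda(u)=XY$ and $u\in I$, since then $u\notin c(u)$ trivially (as $c(u)\subseteq\comp{I}$) and Lemma~\ref{row meaning} gives $(NC)_{u,u}=0$. For $\lambda(u)=XY$ with $u\notin I$, for $\lambda(u)=XZ$, and for $\lambda(u)=YZ$, Lemma~\ref{row meaning} gives $(NC)_{u,u}=1$ iff $u\in c(u)$, iff $u\in\codd{c(u)}$, and iff $u\in\odd{c(u)}$ respectively; these are precisely the negations of \ref{P4b}, \ref{P5b}, \ref{P6b}. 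Hence the diagonal of $NC$ vanishes if and only if $c$ satisfies \ref{P4b}, \ref{P5b}, and \ref{P6b}.

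For part (b), I would use Lemma~\ref{NC is the induced order}: for $u\neq v$ the entry $(NC)_{u,v}$ equals $1$ iff $v\trl_c u$, so the off-diagonal part of $NC$ is the transpose of the $\FF_2$-matrix representing the relation $\trl_c$ on $\comp{O}$. Since transposition (i.e.\ reversing all edges) preserves acyclicity of a finite digraph, and since $\prec_c$ is by definition the transitive closure of $\trl_c$, the digraph underlying the off-diagonal part of $NC$ is acyclic iff $\trl_c$ has no directed cycle, iff $\prec_c$ is irreflexive, iff $\prec_c$ is a strict partial order (using that $\prec_c$ is transitive, hence asymmetric once irreflexive). This is the step where the hypothesis that $c$ is focused is actually consumed, since it is what licenses the appeal to Lemma~\ref{NC is the induced order} (through Lemma~\ref{all pauli are first}).

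Assembling the two directions: for $(\Rightarrow)$, if $NC$ is the adjacency matrix of a DAG then its diagonal is $0$, so \ref{P4b}, \ref{P5b}, \ref{P6b} hold by (a), and the digraph is acyclic, so $\prec_c$ is a strict partial order by (b); moreover $(c,\prec_c)$ satisfies \ref{P1}, \ref{P2}, \ref{P3} because every ordered pair these conditions demand lies in $\trl_c\subseteq\prec_c$ by the definition of the induced relation. For $(\Leftarrow)$, \ref{P4b}, \ref{P5b}, \ref{P6b} force the diagonal of $NC$ to be $0$ by (a), and $\prec_c$ being a strict partial order forces the off-diagonal digraph to be acyclic by (b); combining, $NC$ has no self-loops and no longer cycles, i.e.\ it is the adjacency matrix of a DAG. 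I do not expect a genuine obstacle here once Lemmas~\ref{row meaning} and~\ref{NC is the induced order} are in hand; the only care needed is the bookkeeping around the direction reversal between $NC$ and $\trl_c$, the separation of the diagonal from the rest, and the label edge cases ($XY$-measured inputs and Pauli-measured vertices whose $N$-rows vanish).
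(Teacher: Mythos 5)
Your proposal is correct and follows essentially the same route as the paper's proof: both split $NC$ into its diagonal (matched to \ref{P4b}, \ref{P5b}, \ref{P6b} via Lemma~\ref{row meaning}) and its off-diagonal part (identified with $\trl_c$ via Lemma~\ref{NC is the induced order}, so that acyclicity corresponds to $\prec_c$ being a strict partial order), with \ref{P1}--\ref{P3} holding automatically since $\trl_c$ is by construction the minimal relation they impose. Your extra care with the zero rows for Pauli-measured vertices and $XY$-measured inputs, and with the transposition between $NC$ and the relation matrix, is sound bookkeeping that the paper leaves implicit.
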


\begin{proof}
    $(\Rightarrow)$: By Lemma~\ref{NC is the induced order}, $\trl_c$ is encoded entirely within $NC$. Since $NC$ is a DAG, $\trl_c$ is an acyclic relation and thus its transitive closure is a partial order. Hence, $\prec_c$ satisfies \ref{P1}, \ref{P2}, \ref{P3} and is a partial order. What remains is to prove \ref{P4b}, \ref{P5b}, \ref{P6b}. Let $u \in \comp{O}$ be any planar measured vertex. Since $NC$ is a DAG, it cannot contain a one-cycle (i.e.\ a loop) for any vertex $u$, and thus:\begin{equation}\label{main diag of NC}
        (NC)_{u,u} = 0
    \end{equation}
    We consider three cases depending on $\ld(u)$.\begin{enumerate}
        \item $\lambda(u) = XY$. By Lemma~\ref{row meaning}:\begin{equation*}
            (NC)_{u,u} = \begin{cases}
                1 & \text{if } u \in c(u)\\
                0 & \text{otherwise}
            \end{cases}
        \end{equation*}
        Combining this with \ref{main diag of NC}, we find $u \notin c(u)$ for any $u$ measured in $XY$, i.e.\ \ref{P4b} holds.

        \item $\lambda(u) = XZ$. By Lemma~\ref{row meaning}:\begin{equation*}
            (NC)_{u,u} = \begin{cases}
                1 & \text{if } u \in \codd{c(u)}\\
                0 & \text{otherwise}
            \end{cases}
        \end{equation*}
        Combining this with \ref{main diag of NC}, we find $u \notin \codd{c(u)}$ for any $u$ measured in $XZ$, i.e.\ \ref{P5b} holds.

        \item $\lambda(u) = YZ$. By Lemma~\ref{row meaning}:\begin{equation*}
            (NC)_{u,u} = \begin{cases}
                1 & \text{if } u \in \odd{c(u)}\\
                0 & \text{otherwise}
            \end{cases}
        \end{equation*}
        Combining this with \ref{main diag of NC}, we find $u \notin \odd{c(u)}$ for any $u$ measured in $YZ$, i.e.\ \ref{P6b} holds.
    \end{enumerate}

    $(\Leftarrow)$: Similarly to the other direction, from \ref{P4b}, \ref{P5b}, and \ref{P6b}, we get $(NC)_{u,u} = 0$ for all planar measured $u$. Further, the off-diagonal entries of $NC$ encode $\trl_c$ by Lemma~\ref{NC is the induced order}. Hence, as $\prec_c$ is a partial order, $\trl_c$ is acyclic and thus the off-diagonal part of $NC$ is acyclic. Combining the two facts, we indeed find that $NC$ is the adjacency matrix of a DAG.
\end{proof}

Combining the previous theorems, we obtain the desired algebraic interpretation of Pauli flow: the main result of this paper.

\begin{repthm}{algebraic interpretation section start}
 \statealgebraicinterp
\end{repthm}

\begin{proof}
    $(\Rightarrow)$: Let $c$ be focused extensive. By Theorem~\ref{extending correction set to Pauli flow}, this means $(c,\prec_c)$ is a minimal focused Pauli flow. Then by Theorem~\ref{MC}, $MC = Id_{\comp{O}}$. Moreover, by Theorem~\ref{NC}, $NC$ is the adjacency matrix of a DAG.

    $(\Leftarrow)$: This follows immediately from Theorems~\ref{MC} and~\ref{NC}, as well as Observation~\ref{P456 split}.
\end{proof}

As an example, one can verify that the product of the flow-demand matrix in Figure~\ref{fig:M} and the correction matrix in Figure~\ref{fig:C} is an identity.
Moreover, as shown in Figure~\ref{fig:NC}, the product of the order-demand matrix in Figure~\ref{fig:N} and the previous correction matrix is a DAG (in fact, it is the same DAG as the one in Figure~\ref{fig:pf}).

When the numbers of inputs and outputs are equal, i.e.\ $n_I=n_O$, the flow-demand matrix is square. Thus, it has a unique right inverse (if any), which is the inverse. By Theorem~\ref{algebraic interpretation section start}, to verify the existence of flow we only need to check that such inverse exists and that the product of the order-demand matrix and the inverse of the flow-demand matrix forms a DAG. For an example of this $n_I=n_O$ case, see Figure~\ref{fig:loGnoflow}.

The case of equal numbers of inputs and outputs has another interesting aspect -- it is possible to switch the sets of inputs and outputs, obtaining another labelled open graph which may have Pauli flow (if, on the other hand, the numbers of inputs and outputs are different, then it is immediate that at most one of the pair of labelled open graphs could have flow).
Indeed, this reversal transformation turns out to preserve Pauli flow, as we show in the next subsection.

\begin{figure}
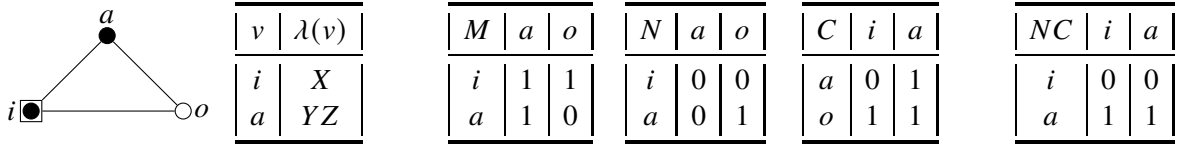

    \centering
    \begin{subfigure}[b]{0.20\textwidth}
        \begin{equation*}
            \tikzfig{loG_example_minimal}
        \end{equation*}
    \end{subfigure}
    \begin{subfigure}[b]{0.14\textwidth}
        $\begin{array}{|c|c|}
            \toprule
            v & \lambda(v) \\
            \midrule
            i & X \\
            a & YZ \\
            \bottomrule
        \end{array}$
    \end{subfigure}
    \hfill
    \begin{subfigure}[b]{0.14\textwidth}
        $\begin{array}{|c|c|c|}
            \toprule
            M & a & o \\
            \midrule
            i & 1 & 1 \\
            a & 1 & 0 \\
            \bottomrule
        \end{array}$
    \end{subfigure}
    \begin{subfigure}[b]{0.14\textwidth}
        $\begin{array}{|c|c|c|}
            \toprule
            N & a & o \\
            \midrule
            i & 0 & 0 \\
            a & 0 & 1 \\
            \bottomrule
        \end{array}$
    \end{subfigure}
    \begin{subfigure}[b]{0.14\textwidth}
        $\begin{array}{|c|c|c|}
            \toprule
            C & i & a \\
            \midrule
            a & 0 & 1 \\
            o & 1 & 1 \\
            \bottomrule
        \end{array}$
    \end{subfigure}
    \hfill
    \begin{subfigure}[b]{0.15\textwidth}
        $\begin{array}{|c|c|c|}
            \toprule
            NC & i & a \\
            \midrule
            i & 0 & 0 \\
            a & 1 & 1 \\
            \bottomrule
        \end{array}$
    \end{subfigure}
    \caption{Example of a labelled open graph with one input $i$ and one output $o$, together with its flow-demand matrix $M$, order-demand matrix $N$, and the unique inverse $C$ of $M$. The product $NC$ contains a loop at $a$ and is therefore not a DAG, so the labelled open graph has no flow.}
    \label{fig:loGnoflow}
\end{figure}

\subsection{Reversibility of Pauli flow}

For labelled open graphs where the number of inputs is equal to the number of outputs and all measurements are of $XY$ type, the previous algebraic interpretation was used to show two properties of focused gflow: it is unique, and it is reversible \cite[Theorem~4]{mhallaWhichGraphStates2014a}.
Here, reversing a labelled open graph means swapping the roles of inputs and outputs.
Reversibility of the flow then means that the the reverse \LOG\ has gflow if and only if the original \LOG\ has gflow.
Moreover, the correction matrix of the gflow on the reverse \LOG{} was shown to be $C^T$, where $C$ is the correction matrix of the original gflow.

Using the new algebraic interpretation for all measurement types, it has been straightforward to show the uniqueness of focused Pauli flow on \LOG{}s with equal numbers of inputs and outputs.
We now prove that under these conditions, focused Pauli flow can also be reversed, although the relationship between the reversed and the original flow is less straightforward if there are measurements of type $XZ,YZ$, or $Z$.

We will assume throughout this subsection that all labelled open graphs satisfy $I \cap O = \emptyset$.
By \cite[Theorem~3.10]{mitosekPauliFlowOpen2024}, this is without loss of generality regarding the existence of Pauli flow.

\begin{definition}
    Let $\Gamma=(G,I,O,\ld)$ be a labelled open graph.
    Define the following subsets of vertices:
    \begin{itemize}
        \item $\internal:=V\setminus(I\cup O)$, the set of internal vertices,
        \item $\Xlike := \{v\in \internal\mid \ld(v)\in\{XY,X,Y\}\}$, the set of `$X$-like' measured internal vertices,
        \item $\Zlike := \{v\in \internal\mid \ld(v)\in\{XZ,YZ,Z\}\}$, the set of `$Z$-like' measured internal vertices,
        \item $\paulis := \{v\in \internal\mid \ld(v)\in\{X,Y,Z\}\}$ the set of Pauli measured internal vertices, and
        \item $\planar := \{v\in \internal\mid \ld(v)\in\{XY,XZ,YZ\}\}$ the set of planar measured internal vertices.
    \end{itemize}
\end{definition}

\begin{definition}
    Let $\Gamma=(G,I,O,\ld)$ be a labelled open graph which satisfies $\abs{I}=\abs{O}$.
    Then its \emph{reverse labelled open graph} is $\Gamma'=(G,O,I,\ld')$ where the role of inputs and outputs is swapped, and
    \begin{equation*}
        \ld'(v) =
        \begin{cases}
        \ld(v) &\text{if } v\in \internal \\
        XY &\text{if } v\in O.
        \end{cases}
    \end{equation*}
\end{definition}

\begin{definition}
    Let $\Gamma=(G,I,O,\ld)$ be a labelled open graph.
    Define the \emph{extended adjacency matrix} $\A$ of $\Gamma$ as
    \begin{equation*}
        \A_{vw}
        := \begin{cases}
          1 &\text{if } \{v,w\}\in E \vee (v=w \wedge \ld(v)\in\{Y,XZ\}) \\
          0 &\text{otherwise.}
         \end{cases}
    \end{equation*}
    We will need four disjoint submatrices of the extended adjacency matrix (which do not quite cover the full matrix):
    \begin{itemize}
        \item Let $\Azz$ be the submatrix of $\A$ whose rows are given by $I\cup\Xlike$ and whose columns are given by $\Xlike\cup O$.
        \item Let $\Azo$ be the submatrix of $\A$ whose rows are given by $I\cup\Xlike$ and whose columns are given by $\Zlike$.
        \item Let $\Aoz$ be the submatrix of $\A$ whose rows are given by $\Zlike$ and whose columns are given by $\Xlike\cup O$.
        \item Let $\Aoo$ be the submatrix of $\A$ whose rows and columns are both given by $\Zlike$.
    \end{itemize}
    In summary, where $*$ indicates unnamed blocks of the extended adjacency matrix:
    \begin{equation}\label{eq:extended-adjacency}
        \A =
        \begin{pNiceMatrix}[first-row, last-col]
            I & \Xlike\cup O & \Zlike & \\
            * & \Azz        & \Azo   & I\cup\Xlike \\
            * & *           & *      & O \\
            * & \Aoz        & \Aoo   & \Zlike
        \end{pNiceMatrix}
    \end{equation}
\end{definition}

\begin{observation}\label{obs:flow-demand-block}
    Let $\Gamma=(G,I,O,\ld)$ be a labelled open graph which satisfies $\abs{I}=\abs{O}$, and let $\Gamma'$ be its reverse.
    Then the flow demand matrices $M$ of $\Gamma$ and $M'$ of $\Gamma'$ are given by
    \begin{equation*}
        M = \begin{pNiceMatrix}[first-row, last-col]
            \Xlike\cup O & \Zlike & \\
            \Azz        & \Azo   & I\cup\Xlike \\
            0           & \I     & \Zlike
          \end{pNiceMatrix}
        \qquad\text{and}\qquad
        M' = \begin{pNiceMatrix}[first-row, last-col]
            I\cup\Xlike & \Zlike   & \\
            (\Azz)^T    & (\Aoz)^T & \Xlike\cup O \\
            0           & \I       & \Zlike
           \end{pNiceMatrix}
        = \begin{pmatrix} \Azz & 0 \\ \Aoz & \I \end{pmatrix}^T
    \end{equation*}
\end{observation}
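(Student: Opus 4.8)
The plan is to compute the flow-demand matrix of $\Gamma$ directly from Definition~\ref{flow-demand matrix definition}, organised according to the block structure of the extended adjacency matrix in \eqref{eq:extended-adjacency}, and then to observe that the reverse labelled open graph $\Gamma'$ has the same vertex set and graph, with inputs and outputs swapped and output vertices relabelled $XY$, so that its flow-demand matrix is obtained by the analogous computation with the roles of $I$ and $O$ interchanged. First I would recall that for $\Gamma$ the rows of $M$ are indexed by $\comp{O}=I\cup\internal = I\cup\Xlike\cup\Zlike$ and the columns by $\comp{I}=\internal\cup O=\Xlike\cup\Zlike\cup O$, and then split both into the `$X$-like' part $I\cup\Xlike$ versus the `$Z$-like' part $\Zlike$ on the row side, and $\Xlike\cup O$ versus $\Zlike$ on the column side. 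This gives a $2\times 2$ block decomposition of $M$ matching the named blocks $\Azz,\Azo,\Aoz,\Aoo$.

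The key step is then a case check on the four blocks using Definition~\ref{flow-demand matrix definition}. For a row $v$ with $\lambda(v)\in\{X,XY\}$ (i.e.\ $v\in I\cup\Xlike$), the row records the neighbourhood of $v$ with $M_{v,v}=0$; for a row $v$ with $\lambda(v)=Y$ it records the neighbourhood together with $M_{v,v}=1$ (when the $v$-column exists); and for a row $v$ with $\lambda(v)\in\{Z,YZ,XZ\}$ (i.e.\ $v\in\Zlike$) it is the standard basis vector $e_v$. Comparing with the definition of the extended adjacency matrix $\A$, whose diagonal carries a $1$ exactly at $Y$ and $XZ$ vertices, one sees that: the $(I\cup\Xlike)\times(\Xlike\cup O)$ block of $M$ is exactly $\Azz$ (the $Y$-diagonal entries are absorbed into $\A$, and the $XZ$-diagonal entries do not occur here since $XZ$ rows lie in $\Zlike$); the $(I\cup\Xlike)\times\Zlike$ block is exactly $\Azo$; the $\Zlike\times(\Xlike\cup O)$ block is $0$ (a $Z$-like row has support only on its own column, which lies in $\Zlike$); and the $\Zlike\times\Zlike$ block is the identity $\I$ (the diagonal $1$ from the flow-demand definition, noting that for $XZ$ the extended-adjacency diagonal $1$ would be the same entry, so there is no conflict). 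This establishes the stated form of $M$. For $\Gamma'$ one repeats the computation: now $\comp{O'}=O\cup\internal$ and $\comp{I'}=\internal\cup I$, the vertices in $O$ are $XY$-measured hence `$X$-like' with zero diagonal, and the vertices of $\internal$ keep their labels, so $\Xlike,\Zlike$ are unchanged while $I$ and $O$ swap places. The same block analysis yields the $(O\cup\Xlike)\times(\Xlike\cup I)$ block of $M'$ equal to the submatrix of $\A$ with those rows and columns; since $\A$ is symmetric, this submatrix is the transpose of the submatrix with rows $I\cup\Xlike$ and columns $\Xlike\cup O$, i.e.\ $(\Azz)^T$, and likewise the $(O\cup\Xlike)\times\Zlike$ block is the transpose of $\Aoz$, the $\Zlike\times(\Xlike\cup I)$ block is $0$, and the $\Zlike\times\Zlike$ block is $\I$. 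Recognising $(\Azz)^T$, $0$, $(\Aoz)^T$, $\I$ as the blocks of $\bigl(\begin{smallmatrix}\Azz & 0\\ \Aoz & \I\end{smallmatrix}\bigr)^T$ finishes the proof.

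The main obstacle — really the only subtlety — is bookkeeping the diagonal entries correctly: one must check that the $1$'s placed on the diagonal by the flow-demand-matrix definition (at $Y$ rows, and at $XZ,YZ,Z$ rows) are consistent with the way the extended adjacency matrix already carries diagonal $1$'s at $Y$ and $XZ$ vertices, and that for a $Z$-like row the off-diagonal neighbourhood entries genuinely do not appear in $M$ even though they appear in $\A$. Concretely this means being careful that the block called $\Azz$ is the one whose rows are $I\cup\Xlike$ (where only the $Y$-diagonal matters) and the block called $\Aoo$ plays no role in $M$ at all; and that when $v\in O$ in $\Gamma'$ the $v$-column does exist in $\comp{I'}=\internal\cup I$ only if $v$ also lies in that column index set — but $O$ is disjoint from $\internal\cup I$ under the standing assumption $I\cap O=\emptyset$, so $O$-rows have no diagonal entry to worry about, consistent with their $XY$ label forcing $M'_{v,v}=0$ anyway. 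Once these diagonal conventions are pinned down, each block identification is immediate from symmetry of $\A$ and the definitions, with no computation beyond matching indices.
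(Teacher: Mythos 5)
Your proposal is correct and matches the paper's (implicit) argument: the paper states this as an Observation with no written proof precisely because it follows by the direct case check on $\lambda(v)$ in Definition~\ref{flow-demand matrix definition} against the diagonal convention of the extended adjacency matrix, which is exactly what you carry out, including the symmetry of $\A$ for the transposed blocks of $M'$ and the point that $O$-rows of $M'$ have no diagonal entry under $I\cap O=\emptyset$. The only nitpick is the parenthetical ``(i.e.\ $v\in I\cup\Xlike$)'' attached to the case $\lambda(v)\in\{X,XY\}$, since $Y$-labelled vertices also lie in $I\cup\Xlike$; your separate treatment of the $Y$ case makes this harmless.
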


\begin{observation}\label{obs:order-demand-block}
    Let $\Gamma=(G,I,O,\ld)$ be a labelled open graph which satisfies $\abs{I}=\abs{O}$, and let $\Gamma'$ be its reverse.
    The flow demand matrix $N$ of $\Gamma$ is a matrix whose rows are labelled by non-outputs, whose columns are labelled by non-inputs, and where the rows corresponding to inputs or Pauli-measured vertices are identically 0.
    We can thus write $N = P \fullN$, where $P$ is a diagonal matrix whose rows and columns are both labelled by non-outputs, such that:
    \begin{equation*}
        P_{vw}
        = \begin{cases}
         \dl_{vw} &\text{if } v\in\planar \\
         0 &\text{otherwise,}
        \end{cases}
        \qquad\text{and}\qquad
        \fullN
        = \begin{pNiceMatrix}[first-row, last-col]
         \Xlike\cup O & \Zlike & \\
         J           & 0      & I\cup\Xlike \\
         \Aoz        & \Aoo   & \Zlike
        \end{pNiceMatrix}
    \end{equation*}
    with $J_{vw}=\dl_{vw}$.
    Note that, while $J$ is square and takes values in $\{0,1\}$, it is not an identity matrix because the set of row labels is not the same as the set of column labels.
    
    Similarly, $N'= P'\fullN'$, where $P'$ is a diagonal matrix whose rows and columns are both labelled by non-inputs, such that:
    \begin{equation*}
        P'_{vw}
        = \begin{cases}
         \dl_{vw} &\text{if } v\in\planar \\
         0 &\text{otherwise,}
        \end{cases}
        \qquad\text{and}\qquad
        \fullN'
        = \begin{pNiceMatrix}[first-row, last-col]
         I\cup\Xlike & \Zlike & \\
         J^T         & 0      & \Xlike\cup O \\
         (\Azo)^T    & \Aoo   & \Zlike
        \end{pNiceMatrix}
        = \begin{pmatrix} J & \Azo \\ 0 & \Aoo \end{pmatrix}^T
    \end{equation*}
    because $\Aoo$ is symmetric.
\end{observation}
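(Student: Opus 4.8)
The plan is to prove this directly from Definition~\ref{order-demand matrix definition} (the construction of $N$) and the definition of the extended adjacency matrix $\A$, by reading off the rows of $N$ one vertex at a time; the argument runs in parallel to the one behind Observation~\ref{obs:flow-demand-block}.

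\emph{Step 1: the zero rows of $N$.} First I would check that $N_{v,*}=\mathbf 0$ exactly when $v\in I$ or $v$ is Pauli-measured. If $\ld(v)\in\{X,Y,Z\}$ this is the first bullet of Definition~\ref{order-demand matrix definition}; if $v\in I$ then $\ld(v)\in\{X,Y,XY\}$ by Definition~\ref{def:labelling}, and in the remaining case $\ld(v)=XY$ the $v$-column does not exist (columns are indexed by $\comp{I}$), so the last bullet of Definition~\ref{order-demand matrix definition} forces the whole row to vanish. Conversely, if $v\in\planar\subseteq\internal$ the $v$-column exists and each of the last three bullets produces a nonzero entry. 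Hence, with $P$ the diagonal matrix indexed by non-outputs having $P_{vv}=1$ iff $v\in\planar$, we get $N=P\fullN$ for \emph{any} matrix $\fullN$ whose rows agree with those of $N$ on $\planar$: indeed $(P\fullN)_{v,*}=P_{vv}\fullN_{v,*}$ equals $\fullN_{v,*}$ when $v\in\planar$ and equals $\mathbf 0=N_{v,*}$ otherwise, since $\comp{O}\setminus\planar = I\cup\paulis$.

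\emph{Step 2: identifying $\fullN$.} The pairs $(I\cup\Xlike,\Zlike)$ and $(\Xlike\cup O,\Zlike)$ partition $\comp{O}$ and $\comp{I}$ respectively, using $I\cap O=\emptyset$ and $\internal=\Xlike\cup\Zlike$ disjointly, so the block shape in the statement is well defined; and $J$ is square because $\abs{I\cup\Xlike}=n_I+\abs{\Xlike}=n_O+\abs{\Xlike}=\abs{\Xlike\cup O}$. I then verify agreement on $\planar$ by cases on $\ld(v)$. If $\ld(v)=XY$ (so $v\in\Xlike\subseteq\internal$, and the $v$-column exists), Definition~\ref{order-demand matrix definition} gives $N_{v,w}=\dl_{vw}$ for all $w\in\comp{I}$, which is exactly the $v$-th row of the block matrix $(J\mid 0)$. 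If $\ld(v)\in\{XZ,YZ\}$ (so $v\in\Zlike$), the $v$-th row of $N$ is the neighbourhood of $v$ together with a $1$ on the diagonal iff $\ld(v)=XZ$; this is precisely the row $\A_{v,*}$ restricted to $\comp{I}$, because $\A_{vw}=Adj_{vw}$ for $v\neq w$ and $\A_{vv}=1$ iff $\ld(v)\in\{Y,XZ\}$, which for $v\in\Zlike$ amounts to $\ld(v)=XZ$; restricting $\A_{v,*}$ to $\comp{I}=(\Xlike\cup O)\cup\Zlike$ gives the $v$-th row of $(\Aoz\mid\Aoo)$. This establishes $N=P\fullN$.

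\emph{Step 3: the reverse graph.} For $\Gamma'$ I would run the same argument, relying on three facts: $\Gamma'$ has the same edge set as $\Gamma$; $\ld'$ agrees with $\ld$ on $\internal$, so the derived sets $\Xlike,\Zlike,\paulis,\planar$ are the same for $\Gamma'$ and the extended adjacency matrices $\A'$ of $\Gamma'$ and $\A$ of $\Gamma$ agree on every entry whose row index lies in $\internal$; and the former outputs $O$, now inputs of $\Gamma'$ carrying label $XY$, contribute identically-zero rows to $N'$ exactly as $I$ did for $N$. The non-outputs of $\Gamma'$ partition as $(\Xlike\cup O,\Zlike)$ and its non-inputs as $(I\cup\Xlike,\Zlike)$, so Steps 1--2 applied to $\Gamma'$ give $N'=P'\fullN'$ with $P'$ as stated and with $\fullN'$ equal to: the partial identity in rows $\Xlike\cup O$, columns $I\cup\Xlike$, which is $J^T$ since $\dl_{vw}=\dl_{wv}$; a zero block in columns $\Zlike$; and, in rows $\Zlike$, the restriction of $\A'$ to columns $I\cup\Xlike$ and to columns $\Zlike$, which by symmetry of $\A$ (off the diagonal, the relevant row and column sets being disjoint) together with $\A'|_{\Zlike\times\Zlike}=\A|_{\Zlike\times\Zlike}$ equal $(\Azo)^T$ and $\Aoo$ respectively. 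Since $\Aoo$ is symmetric, $\fullN'=\begin{pmatrix} J & \Azo \\ 0 & \Aoo \end{pmatrix}^T$.

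I expect no conceptual obstacle here; the only thing demanding care is the index bookkeeping --- tracking which partition of $\comp{O}$ and $\comp{I}$ each block lives over, keeping the ``$+1$ on the diagonal'' convention of Definition~\ref{order-demand matrix definition} (which applies only when the relevant column exists, i.e.\ the vertex is a non-input) separate from the diagonal of $\A$, and making sure that passing to $\Gamma'$ swaps the roles of $I$ and $O$ while leaving the edges, the internal labels, and all derived vertex subsets unchanged.
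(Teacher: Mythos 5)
Your proposal is correct, and it is precisely the direct definition-unfolding that the paper leaves implicit for this Observation (which is stated without proof): identify the zero rows of $N$, match the planar rows against the blocks of the extended adjacency matrix, and repeat for the reverse graph using the symmetry of $\A$. The only quibble is your aside that every row of $N$ indexed by $\planar$ is nonzero --- a $YZ$-measured internal vertex all of whose neighbours are inputs would yield a zero row --- but this claim is never used, since the factorisation $N=P\fullN$ only requires that the non-planar rows of $N$ vanish and that the planar rows of $\fullN$ agree with those of $N$, both of which you establish.
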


The decomposition of the order-demand matrix in the above Observation highlights the symmetry between flow-demand and order-demand matrix for the same \LOG:
In each case, one row of the block form is taken directly from the extended adjacency matrix, while the other row has an identity-like block on the diagonal and an all-zero block on the off-diagonal.
The big difference is that some of the rows in the order-demand matrix are then set to zero via multiplication by the diagonal matrix $P$.

By comparing with the extended adjacency matrix $\A$ in \eqref{eq:extended-adjacency} and recalling that $\A$ is symmetric by construction, we note furthermore that the matrices $M$ and $\fullN$ encode almost all the information of the extended adjacency matrix: the only parts missing are edges among inputs and edges among outputs (which can straightforwardly be seen to be irrelevant to the existence of flow).

\begin{theorem}
    Let $\Gamma=(G,I,O,\ld)$ be a labelled open graph which satisfies $\abs{I}=\abs{O}$.
    Let $\Gamma'$ be the reverse of $\Gamma$.
    Then $\Gamma$ has Pauli flow if and only if $\Gamma'$ has Pauli flow.
\end{theorem}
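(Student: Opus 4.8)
The plan is to reduce the statement to a linear-algebraic claim, using the algebraic interpretation of Pauli flow together with the block decompositions of the flow- and order-demand matrices from Observations~\ref{obs:flow-demand-block} and~\ref{obs:order-demand-block}. Since $\abs{I}=\abs{O}$, both the flow-demand matrix $M$ of $\Gamma$ and the flow-demand matrix $M'$ of $\Gamma'$ are square, and every $(n-n_I)\times(n-n_O)$ matrix over $\FF_2$ encodes a correction function. So, combining Theorems~\ref{foc flow exists}, \ref{extending correction set to Pauli flow} and~\ref{algebraic interpretation section start}: $\Gamma$ has Pauli flow iff some correction matrix $C$ satisfies $MC=\I_{\comp{O}}$ with $NC$ a DAG, and for square $M$ this forces $M$ invertible and $C=M^{-1}$. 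Hence $\Gamma$ has Pauli flow iff $M$ is invertible and $NM^{-1}$ is the adjacency matrix of a DAG, and symmetrically $\Gamma'$ has Pauli flow iff $M'$ is invertible and $N'(M')^{-1}$ is the adjacency matrix of a DAG. It therefore suffices to prove that these two conditions are equivalent.

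First I would settle invertibility. By Observation~\ref{obs:flow-demand-block}, $M = \begin{pmatrix}\Azz & \Azo\\ 0 & \I\end{pmatrix}$ and $M' = \begin{pmatrix}(\Azz)^T & (\Aoz)^T\\ 0 & \I\end{pmatrix}$, both block upper-triangular with a square diagonal block ($\Azz$, resp.\ $(\Azz)^T$) and an identity block; hence $M$ invertible $\iff$ $\Azz$ invertible $\iff$ $M'$ invertible. From now on assume $\Azz$ is invertible; inverting the block-triangular forms over $\FF_2$ gives $M^{-1} = \begin{pmatrix}\Azz^{-1} & \Azz^{-1}\Azo\\ 0 & \I\end{pmatrix}$ and $(M')^{-1} = \begin{pmatrix}(\Azz^{-1})^T & (\Aoz\Azz^{-1})^T\\ 0 & \I\end{pmatrix}$.

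The core is the following computation. Writing $N = P\fullN$ and $N' = P'\fullN'$ as in Observation~\ref{obs:order-demand-block}, the matrix $P$ annihilates exactly the rows of $\comp{O}$ indexed by inputs and Pauli-measured internal vertices, i.e.\ by $\comp{O}\setminus\planar$; a vertex whose out-row is zero lies on no directed cycle and carries no loop, so deleting all such vertices preserves acyclicity. Hence $NM^{-1}$ is a DAG iff its restriction to rows and columns in $\planar$ is a DAG, and on those rows $P$ acts as the identity, so this restriction equals $(\fullN M^{-1})|_{\planar}^{\planar}$; the same holds for $\Gamma'$ with $(\fullN'(M')^{-1})|_{\planar}^{\planar}$. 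Multiplying out $\fullN M^{-1}$ and $\fullN'(M')^{-1}$ blockwise — using that the identity-like blocks $J$ and $J^T$ act as genuine identities once the rows are restricted to $\planar\cap\Xlike$, and that $\Aoo$ is symmetric — and tracking which block of the symmetric extended adjacency matrix each entry ultimately comes from, one finds $(\fullN'(M')^{-1})|_{\planar}^{\planar} = \big((\fullN M^{-1})|_{\planar}^{\planar}\big)^T$.

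Finally, reversing every edge of a directed acyclic graph yields a directed acyclic graph, so a $\{0,1\}$-matrix is the adjacency matrix of a DAG iff its transpose is; applied to the identity just obtained this gives ``$NM^{-1}$ is a DAG $\iff$ $N'(M')^{-1}$ is a DAG'', which together with the invertibility equivalence proves the theorem. I expect the main obstacle to be the bookkeeping in the core step: the rows and columns of $M,N$ are labelled by non-outputs and non-inputs of $\Gamma$, whereas for $M',N'$ the roles of $I$ and $O$ are swapped, and the block $J$ is ``identity-like'' without having matching row and column labels, so one must check carefully that transposing $\Gamma$ exchanges the $\Azo$-type and $\Aoz$-type data in exactly the way needed for the block transposition to close.
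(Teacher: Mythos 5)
Your proposal is correct and follows essentially the same route as the paper: the same block decompositions of $M$, $M'$, $N=P\fullN$, $N'=P'\fullN'$, the same reduction of the DAG condition to the $\planar\times\planar$ submatrix, and the same key identity that this submatrix for $\Gamma'$ is the transpose of the one for $\Gamma$. The only cosmetic difference is that you phrase everything as a chain of equivalences (which neatly sidesteps the paper's double-reversal argument for the converse direction, where one must note that $\Gamma''$ equals $\Gamma$ only up to irrelevant input measurement labels).
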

\begin{proof}
    Without loss of generality, assume $I \cap O = \emptyset$.
    
    $(\Rightarrow)$: First, suppose $\Gamma$ has Pauli flow.
    By Theorem~\ref{algebraic interpretation section start}, the inverse $C$ of its flow demand matrix $M$ exists and the product $NC$ with the order-demand matrix $N$ forms a DAG.
    
    Now by Observation~\ref{obs:flow-demand-block}, the flow demand matrix for $\Gamma$ is
    \begin{equation*}
        M = \begin{pNiceMatrix}[first-row, last-col]
           \Xlike\cup O & \Zlike & \\
           \Azz        & \Azo   & I\cup\Xlike \\
           0           & \I     & \Zlike
          \end{pNiceMatrix}
    \end{equation*}
    Break its inverse $C = M^{-1}$ into analogous blocks, i.e.\ let
    \begin{equation*}
        C = \begin{pNiceMatrix}[first-row, last-col]
            I\cup\Xlike & \Zlike & \\
            C_{00}      & C_{01} & \Xlike\cup O \\
            C_{10}      & C_{11} & \Zlike
        \end{pNiceMatrix}
    \end{equation*}
    such that
    \begin{equation*}
        \I = MC
        = \begin{pNiceMatrix}[first-row, last-col]
             I\cup\Xlike & \Zlike & \\
             \Azz C_{00} + \Azo C_{10} & \Azz C_{01} + \Azo C_{11} & I\cup\Xlike \\
             C_{10} & C_{11} & \Zlike
        \end{pNiceMatrix}
    \end{equation*}
    Then we have $C_{10} = 0$ and $C_{11}=\I$ by inspection.
    Thus the top left block implies $\I =\Azz\Czz$, i.e.\ $C_{00} = (\Azz)^{-1}$.
    The top right block becomes $0 = \Azz C_{01} + \Azo$.
    Hence $\Azz C_{01} = \Azo$ (as we are working over $\mathbb{F}_2$) and finally $C_{01} = (\Azz)^{-1} \Azo = \Czz \Azo$.
    To summarise,
    \begin{equation*}
        C = \begin{pNiceMatrix}[first-row, last-col]
            I\cup\Xlike & \Zlike   & \\
            \Czz        & \Czz\Azo & \Xlike\cup O \\
            0           & \I       & \Zlike
        \end{pNiceMatrix}
    \end{equation*}
    Again by Observation~\ref{obs:flow-demand-block}, the flow demand matrix for $\Gamma'$ is
    \begin{equation*}
        M'
        = \begin{pmatrix} \Azz & 0 \\ \Aoz & \I \end{pmatrix}^T
    \end{equation*}
    As $\Azz$ is invertible (we found $\Czz = (\Azz)^{-1}$) and because of its block structure, $M'$ is also invertible.
    By an argument analogous to the above, we find that
    \begin{equation*}
        C' = (M')^{-1} = \begin{pmatrix} \Czz & 0 \\ \Aoz\Czz & \I \end{pmatrix}^T
    \end{equation*}
    with $\Czz$ the same matrix as above.
    
    It remains to show that $N'C'$ is a DAG, knowing that $NC$ is a DAG.
    First, by Observation~\ref{obs:order-demand-block}:
    \begin{equation*}
        NC = P\fullN C
        = P \begin{pmatrix} J & 0 \\ \Aoz & \Aoo \end{pmatrix} \begin{pmatrix} \Czz & \Czz\Azo \\ 0 & \I \end{pmatrix}
        = P \begin{pmatrix}
            J\Czz & J\Czz\Azo \\
            \Aoz\Czz & \Aoz\Czz\Azo + \Aoo
        \end{pmatrix}
    \end{equation*}
    and similarly,
    \begin{align*}
        N'C'
        &= P'\fullN'C'
        = P' \begin{pmatrix} J & \Azo \\ 0 & \Aoo \end{pmatrix}^T \begin{pmatrix} \Czz & 0 \\ \Aoz\Czz & \I \end{pmatrix}^T
        = P' \left(\begin{pmatrix} \Czz & 0 \\ \Aoz\Czz & \I \end{pmatrix}\begin{pmatrix} J & \Azo \\ 0 & \Aoo \end{pmatrix} \right)^T \\
        &=P' \begin{pmatrix} \Czz J & \Czz\Azo \\ \Aoz\Czz J & \Aoz\Czz\Azo + \Aoo \end{pmatrix}^T
    \end{align*}
    Note that $(NC)_{vw} = 0$ whenever $v\in I\cup\paulis$: inputs and Pauli measured vertices are always initial in the partial order.
    This implies that the values in the \emph{rows} labelled by $I\cup\paulis$ are irrelevant to determining whether $NC$ is a DAG: it suffices to consider only the submatrix of $NC$ whose rows and columns are both labelled by $\planar$.
    The same holds for $N'C'$.
    
    Now, the three matrices $P$, $P'$, and $J$ are simply the identity matrix when restricted to rows and columns that are labelled as $\planar$.
    Thus the $\planar$ times $\planar$ submatrices of the two products satisfy:
    \begin{equation*}
        (N'C')_{\planar}^{\planar}
        = \left( \begin{pmatrix} \Czz J & \Czz\Azo \\ \Aoz\Czz J & \Aoz\Czz\Azo + \Aoo \end{pmatrix}^T \right)_{\planar}^{\planar}
        = \left( \begin{pmatrix} \Czz & \Czz\Azo \\ \Aoz\Czz & \Aoz\Czz\Azo + \Aoo \end{pmatrix}_{\planar}^{\planar} \right)^T
        = \left( (NC)_{\planar}^{\planar} \right)^T
    \end{equation*}
    since taking submatrices commutes with taking the transpose.
    By assumption of flow existence for $\Gamma$, $NC$ is a DAG.
    Hence $N'C'$ is a DAG and by Theorem~\ref{algebraic interpretation section start}, $\Gamma'$ has Pauli flow, ending the proof.
    
    $(\Leftarrow)$: Note that different measurement labels of the inputs do not change the flow-demand and order-demand matrices of $\Gamma$, and hence by Theorem~\ref{algebraic interpretation section start} they do not change the existence of flow.
    Thus without loss of generality, we can assume that $\lambda(v) = XY$ for all $v \in I$.
    Using the previous part of the proof, if $\Gamma'$ has Pauli flow, then so does $\Gamma''$.
    However, $\Gamma'' = \Gamma$ (up to a potential and irrelevant difference in measurement labels of inputs), so $\Gamma$ has Pauli flow, ending the proof.
\end{proof}

\begin{corollary}
    Let $\Gamma=(G,I,O,\ld)$ be a labelled open graph which satisfies $\abs{I}=\abs{O}$.
    Suppose $\Gamma$ has focused Pauli flow $(c,\prec_c)$ and its reverse $\Gamma'$ has focused Pauli flow $(c',\prec_{c'})$, then:
    \begin{itemize}
        \item For all $u\in I\cup\Xlike$ and $w\in \Xlike\cup O$, we have $w\in c(u) \Leftrightarrow u\in c'(w)$.
        \item For all $u,w\in\planar$, we have $u\prec_c w \Leftrightarrow w\prec_{c'} u$.
    \end{itemize}
\end{corollary}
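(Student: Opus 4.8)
The plan is to obtain both statements by reading off the explicit block forms of the correction matrices computed in the proof of the preceding theorem (throughout we keep the assumption $I\cap O=\emptyset$ used there). First I would record that a focused Pauli flow on a \LOG{} with $\abs{I}=\abs{O}$ is rigid: since $(c,\prec_c)$ is a focused Pauli flow, $c$ is focused extensive, so by Theorem~\ref{algebraic interpretation section start} its correction matrix satisfies $MC=Id_{\comp O}$, and as $M$ is square this forces $C=M^{-1}$ --- exactly the matrix analysed in that proof; likewise $C'=(M')^{-1}$. Recalling the block partitions of Observations~\ref{obs:flow-demand-block} and~\ref{obs:order-demand-block}, that proof gives $C=\begin{pmatrix}\Czz & \Czz\Azo\\0 & \I\end{pmatrix}$ and $C'=\begin{pmatrix}\Czz & 0\\\Aoz\Czz & \I\end{pmatrix}^{T}$ with a common $\Czz=(\Azz)^{-1}$ whose rows are indexed by $\Xlike\cup O$ and columns by $I\cup\Xlike$; it also established $(N'C')_{\planar}^{\planar}=\bigl((NC)_{\planar}^{\planar}\bigr)^{T}$.

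For the first bullet, for $u\in I\cup\Xlike$ and $w\in\Xlike\cup O$ the entry $C_{w,u}$ sits in the top-left block $\Czz$ of $C$, while $C'_{u,w}$ sits in the top-left block $(\Czz)^{T}$ of $C'$; hence $C_{w,u}=(\Czz)_{w,u}=\bigl((\Czz)^{T}\bigr)_{u,w}=C'_{u,w}$, and unwinding Definition~\ref{def:correction-matrix} yields $w\in c(u)\iff C_{w,u}=1\iff C'_{u,w}=1\iff u\in c'(w)$. This part is pure bookkeeping once the block forms are in hand.

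For the second bullet I would first restrict attention to $\planar$, which relies on the auxiliary fact that, for a focused correction function, $a\trl_c b$ forces $b\in\planar$: if $a\trl_c b$ then $b\neq a$, and inspecting the three clauses of the definition of $\trl_c$ --- the \ref{P3}-clause is excluded by \ref{F3}; the \ref{P1}-clause, together with \ref{F1}, forces $\lambda(b)=XY$; the \ref{P2}-clause, together with \ref{F2}, forces $\lambda(b)\in\{XZ,YZ\}$ --- in each surviving case $b$ carries a planar measurement label, hence is a non-output, and is not an input (in the $XY$ case because $b\in c(a)\sse\comp I$, and in the $XZ,YZ$ cases because inputs cannot carry those labels by Definition~\ref{def:labelling}), so $b\in\planar$. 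Consequently a $\trl_c$-chain between two vertices of $\planar$ stays inside $\planar$, so the restriction of $\prec_c$ to $\planar$ is the transitive closure of the restriction of $\trl_c$ to $\planar$ (and likewise for $\prec_{c'}$). By Lemma~\ref{NC is the induced order}, the off-diagonal $\planar\times\planar$ part of $NC$ is precisely $\trl_c$ restricted to $\planar$, with zero diagonal there, and similarly for $N'C'$ and $\trl_{c'}$; so the transpose identity $(N'C')_{\planar}^{\planar}=\bigl((NC)_{\planar}^{\planar}\bigr)^{T}$ says that $\trl_{c'}$ restricted to $\planar$ is the converse of $\trl_c$ restricted to $\planar$. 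Since passing to the transitive closure commutes with reversing a relation, it follows that $u\prec_c w\iff w\prec_{c'}u$ for all $u,w\in\planar$.

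The step that I expect to require the most care is the auxiliary fact that $a\trl_c b$ forces $b\in\planar$: without it a $\trl_c$-chain between two planar vertices could in principle detour through inputs or outputs, and the $\planar\times\planar$ submatrix of $NC$ would fail to capture the restricted order; every other step is a direct transcription of block computations already performed for the preceding theorem.
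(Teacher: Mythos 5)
Your proposal is correct and follows exactly the route the paper intends: the paper states this corollary without an explicit proof, leaving it to be read off from the block forms of $C$, $C'$ and the identity $(N'C')_{\planar}^{\planar}=\bigl((NC)_{\planar}^{\planar}\bigr)^{T}$ established in the proof of the preceding reversibility theorem, which is precisely what you do. Your auxiliary observation that $a\trl_c b$ forces $b\in\planar$ for a focused correction function (so that $\trl_c$-chains between planar vertices never leave $\planar$ and the restricted transitive closures match) is a detail the paper glosses over, and it is verified correctly.
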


The complexity of finding a reverse flow requires matrix multiplication, so it is not fundamentally easier than finding a flow from scratch.
Yet if a labelled open graph with equal numbers of inputs and outputs is dominated by vertices in $\Xlike$, then it will be more efficient to find the flow of the reverse graph by taking the matrix $\Czz$ (which describes the flow on the $X$-like part of the original labelled open graph) and multiplying it with the (smaller) submatrices of the extended adjacency matrix.

When the number of inputs is smaller than the number of outputs, the flow cannot be reversed -- the flow-demand matrix of the reverse labelled open graph must have more rows than columns and thus cannot be right-reversible.
For example, the flow from Figure~\ref{fig:pf} cannot be reversed.
Similarly, the flow in such a case does not need to be unique as the flow-demand matrix may have many different right inverses.
In the future algorithm, we parametrise all such right inverses by relating them via focussed sets which study next.

\subsection{Focused sets}

The flow-demand matrix has another connection to Pauli flow that we explore in this subsection, this time not via its inverse but via its kernel. First, we define focused sets: sets with trivial net effect when used for corrections.

\begin{definition}[Focused set {\cite[part of Definition 4.3]{simmonsRelatingMeasurementPatterns2021}}]\label{def:focused set} 
    Given the labelled open graph $\Gamma = (G,I,O,\lambda)$, a set $\someset \subseteq \comp{I}$ is \emph{focused over} $S \subseteq{\comp{O}}$ if:\begin{enumerate}
        \item[{\crtcrossreflabel{(Fs1)}[Fs1]}] $\forall w \in S \cap \someset . \lambda(w) \in \{ XY, X, Y \}$
        \item[{\crtcrossreflabel{(Fs2)}[Fs2]}] $\forall w \in S \cap \odd{\someset} . \lambda(w) \in \{ XZ, YZ, Y, Z \}$
        \item[{\crtcrossreflabel{(Fs3)}[Fs3]}] $\forall w \in S . \lambda(w) = Y \Rightarrow w \notin \codd{\someset}$, note $w \notin \codd{\someset}$ is equivalent to $w \in \someset \Leftrightarrow w \in \odd{\someset}$.
    \end{enumerate}
    A set is simply called \emph{focused} if it is focused over $\comp{O}$. The set of all focused sets for $\Gamma$ is denoted $\mathfrak{F}_{\Gamma}$.
\end{definition}

The focusing conditions for the Pauli flow (cf.\ Definition~\ref{def:focused}) equivalently say that the correction set $c(v)$ is focused over $\comp{O}\setminus\{v\}$ for all $v \in \comp{O}$. The above definition was introduced by Simmons, who also proved that the focused sets for a given labelled open graph form a group \cite[Lemma B.8]{simmonsRelatingMeasurementPatterns2021}. Here, we provide an alternative interpretation for the collection of focused sets, by providing an isomorphism with the kernel of the flow-demand matrix, so that the focused sets form a vector space. First, we show that given some set $\someset\sse\comp{I}$ one can find a maximal set over which $\someset$ is focused by considering the product $M\ind{\someset}{\comp{I}}$.

\begin{lemma}\label{M product gives focusing condition}
    Let $\Gamma = (G,I,O,\lambda)$ be a labelled open graph and $\someset \subseteq \comp{I}$ be any subset of non-inputs. Then, $\someset$ is focused over $\supp{\ind{\comp{O}}{\comp{O}} + M\ind{\someset}{\comp{I}}}$ and $\someset$ is not focused over any larger subset of the non-outputs.
\end{lemma}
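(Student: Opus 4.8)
The plan is to read off an explicit description of the set $T := \supp{\ind{\comp{O}}{\comp{O}} + M\ind{\someset}{\comp{I}}} \sse \comp{O}$ from Lemma~\ref{row by col meaning}, and then to match it directly against the three focusing conditions of Definition~\ref{def:focused set}. First I would note that, since we work over $\FF_2$, for each $u \in \comp{O}$ the $u$-th coordinate of $\ind{\comp{O}}{\comp{O}} + M\ind{\someset}{\comp{I}}$ is $1 + (M\ind{\someset}{\comp{I}})_u$, so $u \in T$ if and only if $(M\ind{\someset}{\comp{I}})_u = 0$. By Lemma~\ref{row by col meaning}, $(M\ind{\someset}{\comp{I}})_u = 1$ precisely in the three labelled cases there; hence $u \notin T$ if and only if one of the following holds: $\lambda(u) \in \{X, XY\}$ and $u \in \odd{\someset}$; or $\lambda(u) \in \{XZ, YZ, Z\}$ and $u \in \someset$; or $\lambda(u) = Y$ and $u \in \codd{\someset}$.

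Next I would verify that $\someset$ is focused over $T$ by checking \ref{Fs1}, \ref{Fs2}, \ref{Fs3} in turn, each by contraposition against the description of $\comp{O}\setminus T$ above. For \ref{Fs1}, a vertex $w \in T \cap \someset$ with $\lambda(w) \notin \{XY, X, Y\}$ would satisfy $\lambda(w) \in \{XZ, YZ, Z\}$ together with $w \in \someset$, hence $w \notin T$, a contradiction. For \ref{Fs2}, a vertex $w \in T \cap \odd{\someset}$ with $\lambda(w) \notin \{XZ, YZ, Y, Z\}$ would have $\lambda(w) \in \{X, XY\}$ and $w \in \odd{\someset}$, hence $w \notin T$. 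For \ref{Fs3}, a vertex $w \in T$ with $\lambda(w) = Y$ and $w \in \codd{\someset}$ falls into the third case, so $w \notin T$. In each instance the contradiction yields the desired condition, so $\someset$ is focused over $T$.

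For maximality, I would take any $S \sse \comp{O}$ with $T \subsetneq S$, pick $w \in S \setminus T$, and use the characterisation of $\comp{O} \setminus T$ to exhibit a violated focusing condition for $S$: in the first case, $\lambda(w) \in \{X, XY\}$ with $w \in S \cap \odd{\someset}$ violates \ref{Fs2} (since $X, XY \notin \{XZ, YZ, Y, Z\}$); in the second case, $\lambda(w) \in \{XZ, YZ, Z\}$ with $w \in S \cap \someset$ violates \ref{Fs1}; and in the third case, $\lambda(w) = Y$ with $w \in S$ and $w \in \codd{\someset}$ violates \ref{Fs3} directly. Thus $\someset$ is not focused over $S$, establishing that $T$ is the (unique) maximal subset of the non-outputs over which $\someset$ is focused.

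I do not expect a genuine obstacle: the argument is essentially bookkeeping, translating Lemma~\ref{row by col meaning} into the language of Definition~\ref{def:focused set}. The only point requiring a moment of care is the degenerate situation in which $u$ is a $Y$-measured input, where the $u$-column of $M$ does not exist; but since $\someset \sse \comp{I}$ forces $u \notin \someset$, we have $u \in \codd{\someset} \Leftrightarrow u \in \odd{\someset}$, so the third case of the characterisation remains the correct one, and this is already subsumed by the cited lemma.
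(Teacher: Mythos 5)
Your proposal is correct and follows essentially the same route as the paper: both reduce the claim to the pointwise characterisation of $(M\ind{\someset}{\comp{I}})_u$ from Lemma~\ref{row by col meaning} and match it against \ref{Fs1}--\ref{Fs3} vertex by vertex (the paper packages this as ``$v$ lies in the support iff $\someset$ is focused over $\{v\}$'' plus the observation that focusing over $S$ and $S'$ is equivalent to focusing over $S\cup S'$, which is your maximality step in disguise). Your remark on $Y$-measured inputs is a nice extra check that the paper leaves implicit.
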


\begin{proof}
    Before beginning the proof, we will explain the notation $\supp{\ind{\comp{O}}{\comp{O}} + M\ind{\someset}{\comp{I}}}$, which can be unpacked as follows: $M\ind{\someset}{\comp{I}}$ is the result of multiplying the flow-demand matrix by the vector corresponding to the set $\someset$. This vector is in the space $\mathbb{F}_2^{\comp{O}}$. Next, we add to it $\ind{\comp{O}}{\comp{O}}$, i.e.\ the vector of all-$1$s in $\mathbb{F}_2^{\comp{O}}$. Hence, we are flipping all values in $M\ind{\someset}{\comp{I}}$. Next, we take the support of this vector: these are the vertices $v$ such that the flipped vector has a $1$ for its $v$-th entry. This means, $M\ind{\someset}{\comp{I}}$ has a $0$ for its $v$-th entry. Thus, the lemma claims that $\someset$ is focused over the set of vertices corresponding to the $0$-valued entries of $M\ind{\someset}{\comp{I}}$.
    Considering a vector as a linear map via the inner product, these entries correspond to the \textit{kernel of} $M\ind{\someset}{\comp{I}}$.
    We are now ready to proceed to the proof itself.

    Let $F_{\someset} = \supp{\ind{\comp{O}}{\comp{O}} + M\ind{\someset}{\comp{I}}}$.
    By inspecting conditions \ref{Fs1}-\ref{Fs3}, for any sets $S,S'\sse\comp{O}$, $\someset$ is simultaneously focused over $S$ and focused over $S'$ if and only if it is focused over $S \cup S'$.
    Therefore, it is sufficient to show that for all $v \in \comp{O}$: $v \in F_{\someset}$ if and only if $\someset$ is focused over $\{v\}$.

    $(\Rightarrow)$: Let $v \in F_{\someset}$. Then $1 = \left(\ind{\comp{O}}{\comp{O}} + M\ind{\someset}{\comp{I}}\right)_v  = 1 + \left(M\ind{\someset}{\comp{I}}\right)_v$. Hence, $0 = \left(M\ind{\someset}{\comp{I}}\right)_v$. Therefore $M_{v,*} \ind{\someset}{\comp{I}} = 0$ and by Lemma~\ref{row by col meaning}, we get:\begin{align*}
        \triplecase
            {\lambda(v) \in \{X, XY\}}{v \notin \odd{\someset}}
            {\lambda(v) \in \{XZ, YZ, Z \}}{v \notin \someset}
            {\lambda(v)=Y}{v \notin \codd{\someset}}
    \end{align*}
    Hence $\someset$ satisfies all three conditions \ref{Fs1}, \ref{Fs2}, and \ref{Fs3} over the set $\{v\}$, as required.

    $(\Leftarrow)$: Let $\someset$ be focused over $\{v\}$. Then, by \ref{Fs2}:\begin{align*}
        v \notin \odd{\someset}\quad& \text{when}\quad \lambda(v) \in \{ X, XY \}.
    \intertext{Similarly, by \ref{Fs1}:}
        v \notin \someset\quad& \text{when}\quad \lambda(v) \in \{ XZ, YZ, Z \},
    \intertext{and by \ref{Fs3}:}
        v \notin \codd{\someset}\quad& \text{when}\quad \lambda(v) = Y.
    \end{align*}
    Hence, again using Lemma~\ref{row by col meaning}, we get $M_{v,*} \ind{\someset}{\comp{I}} = 0$ and thus $v \in F_{\someset}$.
\end{proof}

Using the above lemma, we find an isomorphism between the kernel of the flow-demand matrix and the collection of focused sets.

\begin{theorem}\label{focused sets are kernel of M}
    Let $\Gamma = (G,I,O,\lambda)$ be a labelled open graph and let $M$ be its flow-demand matrix. Then $\mathfrak{F}_{\Gamma} \cong \ker M_{\Gamma}$.
\end{theorem}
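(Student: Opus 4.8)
The plan is to exhibit the isomorphism explicitly as the restriction of the standard identification of subsets of $\comp{I}$ with $\FF_2$-vectors indexed by $\comp{I}$. Concretely, consider the map $\phi \colon \mathcal{P}(\comp{I}) \to \FF_2^{\comp{I}}$, $\someset \mapsto \ind{\someset}{\comp{I}}$. This is a well-known bijection, and it is $\FF_2$-linear once $\mathcal{P}(\comp{I})$ is given its vector-space structure with addition $\symd$ (symmetric difference) and scalar multiplication by $\FF_2$, because $\ind{\someset \symd \someset'}{\comp{I}} = \ind{\someset}{\comp{I}} + \ind{\someset'}{\comp{I}}$ over $\FF_2$. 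So it suffices to show that $\phi$ restricts to a bijection between $\mathfrak{F}_{\Gamma}$ and $\ker M_{\Gamma}$; linearity of the restriction, hence the fact that it is a vector-space isomorphism, is then automatic, and in particular this re-derives that $\mathfrak{F}_{\Gamma}$ is a subspace.

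First I would pin down the key equivalence: for $\someset \sse \comp{I}$, we have $\someset \in \mathfrak{F}_{\Gamma}$ (i.e.\ $\someset$ is focused over $\comp{O}$) if and only if $\ind{\someset}{\comp{I}} \in \ker M_{\Gamma}$. For this I invoke Lemma~\ref{M product gives focusing condition}, which states that $\someset$ is focused over $F_{\someset} := \supp{\ind{\comp{O}}{\comp{O}} + M\ind{\someset}{\comp{I}}}$ and over no strictly larger subset of $\comp{O}$. Hence $\someset$ is focused over all of $\comp{O}$ precisely when $F_{\someset} \supseteq \comp{O}$, i.e.\ $F_{\someset} = \comp{O}$. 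But $F_{\someset} = \comp{O}$ means every entry of $\ind{\comp{O}}{\comp{O}} + M\ind{\someset}{\comp{I}}$ equals $1$; since we work over $\FF_2$ and $\ind{\comp{O}}{\comp{O}}$ is the all-ones vector, this is equivalent to $M\ind{\someset}{\comp{I}} = \mathbf{0}$, i.e.\ $\ind{\someset}{\comp{I}} \in \ker M_{\Gamma}$.

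From this equivalence the rest is routine: $\phi$ maps $\mathfrak{F}_{\Gamma}$ into $\ker M_{\Gamma}$ and, conversely, every $x \in \ker M_{\Gamma}$ is $\ind{\someset}{\comp{I}}$ for $\someset = \supp{x} \sse \comp{I}$, which then lies in $\mathfrak{F}_{\Gamma}$ by the equivalence; injectivity is inherited from $\phi$. Thus $\phi$ restricts to a bijective linear map $\mathfrak{F}_{\Gamma} \to \ker M_{\Gamma}$, which is the claimed isomorphism. I do not anticipate a genuine obstacle here; the only point requiring care is the direction of Lemma~\ref{M product gives focusing condition} used — namely that the maximality clause is what forces $F_{\someset} = \comp{O}$ rather than merely $F_{\someset} \supseteq \comp{O}$ — and the elementary but easy-to-slip observation that "all entries $1$" is the same as "$M\ind{\someset}{\comp{I}} = \mathbf{0}$" only because the ambient field is $\FF_2$.
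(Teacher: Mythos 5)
Your proposal is correct and follows essentially the same route as the paper: both use Lemma~\ref{M product gives focusing condition} to identify ``focused over $\comp{O}$'' with $M\ind{\someset}{\comp{I}}=\mathbf{0}$, and both realise the isomorphism via the indicator function $\ind{\bullet}{\comp{I}}$ with inverse $\support$. Your version is marginally more explicit about the linear structure (symmetric difference as addition) and about why the maximality clause forces $F_{\someset}=\comp{O}$, but there is no substantive difference.
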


\begin{proof}
    Suppose $\someset \in \mathfrak{F}$.
    By Lemma~\ref{M product gives focusing condition}, any set $\someset$ is focused over $\supp{\ind{\comp{O}}{\comp{O}} + M\ind{\someset}{\comp{I}}}$. Combining these, we find $\supp{\ind{\comp{O}}{\comp{O}} + M\ind{\someset}{\comp{I}}} = \comp{O}$, which implies $\left(M\ind{\someset}{\comp{I}}\right)_v = 0$ for all $v \in \comp{O}$.
    In other words, $\ind{\someset}{\comp{I}} \in \ker M$.

    Now suppose $w\in\ker M$.
    Again by Lemma~\ref{M product gives focusing condition}, this implies that $\supp{w}$ is focused over $\comp{O}$, i.e.\ $\supp{w}$ is a focused set.

    Therefore, an isomorphism between $\mathfrak{F}$ and $\ker M$ is given by the indicator function $\ind{\bullet}{\comp{I}}$ and its inverse is given by the support $\supp{\bullet}$.
\end{proof}

For example, the kernel of the flow-demand matrix in Figure~\ref{fig:M} consists of two vectors:\begin{equation*}
    \begin{pmatrix}
        0 & 0 & 0 & 0 & 0 & 0
    \end{pmatrix}^T\text{ and }\begin{pmatrix}
        0 & 0 & 1 & 0 & 1 & 1
    \end{pmatrix}^T
\end{equation*}
corresponding to focused sets $\emptyset$ and $\{ e,o_1,o_2 \}$, respectively.

\section{Algorithms}\label{sec:algorithms}

We provide an important application for the newly defined algebraic interpretation -- we improve the existing flow-finding algorithms. First, we consider a special case where the numbers of inputs and outputs are equal. Afterwards, we consider the general case. While it would be sufficient in principle to consider the general case only, it is nevertheless worth distinguishing the two cases, due to the remarkably simple algorithm in the case of $n_I = n_O$. For the latter case, we also provide a lower bound on the computational complexity of the problem of finding Pauli flow.

Firstly, we reformulate the Theorem~\ref{algebraic interpretation section start} to remove the assumption of a known correction function, in order to provide the groundwork for the flow-finding algorithms.

\begin{corollary}\label{algebraic interpretation corollary}
    Let $\Gamma = (G,I,O,\lambda)$ be a labelled open graph.
    Let $M$ be the flow-demand matrix of $\Gamma$ and let $N$ be the order-demand matrix of $\Gamma$.
    Then $\Gamma$ has Pauli flow if and only if there exists a correction matrix $C$ such that:\begin{itemize}
        \item $C$ has shape $(n-n_I) \times (n-n_O)$ with rows corresponding to $\comp{I}$ and columns corresponding to $\comp{O}$,
        \item $MC = Id_{\comp{O}}$, and
        \item $NC$ is the adjacency matrix of some directed acyclic graph.
    \end{itemize}
    Furthermore, for any such matrix $C$, the corresponding minimal focused Pauli flow $(c,\prec_c)$ is given as follows, where $v,w \in \comp{O}$:
    \begin{itemize}
     \item $c(v) := \{ u \in \comp{I} \mid C_{u,v} = 1 \}$,
     \item $v \trl_c w \Leftrightarrow (NC)_{w,v} = 1$, and
     \item $\prec_c$ is the transitive closure of $\trl_c$.
    \end{itemize}

\end{corollary}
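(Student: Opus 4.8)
The plan is to derive this corollary essentially by assembling Theorem~\ref{algebraic interpretation section start} with the structural results established earlier in this section, since the only new content here is dropping the hypothesis that a correction function is supplied in advance.

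For the forward implication I would start from the assumption that $\Gamma$ has Pauli flow and invoke Theorem~\ref{foc flow exists} to obtain a \emph{focused} Pauli flow $(c,\prec)$; in particular $c$ is a focused extensive correction function. Applying Theorem~\ref{algebraic interpretation section start} to this $c$ then yields a correction matrix $C$ with $MC = Id_{\comp{O}}$ and $NC$ the adjacency matrix of a DAG, while Definition~\ref{def:correction-matrix} fixes its shape as $(n-n_I)\times(n-n_O)$ with rows labelled by $\comp{I}$ and columns by $\comp{O}$. For the converse, I would take any $C$ with the three stated properties and use the observation following Definition~\ref{def:correction-matrix} to read off the correction function it encodes, namely $c(v) = \{u\in\comp{I}\mid C_{u,v}=1\}$. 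Theorem~\ref{algebraic interpretation section start} immediately gives that this $c$ is focused extensive, and Theorem~\ref{extending correction set to Pauli flow} upgrades this to the statement that $(c,\prec_c)$ is a minimal focused Pauli flow; in particular $\Gamma$ has Pauli flow. This already establishes both the equivalence and the first and third bullets of the ``furthermore'' clause.

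For the remaining bullet, the explicit description of $\trl_c$ in terms of $NC$, I would note that since $(c,\prec_c)$ is a focused Pauli flow, $c$ is a focused correction function, so Lemma~\ref{NC is the induced order} applies: for distinct $u,v\in\comp{O}$ we have $(NC)_{u,v}=1 \iff v \trl_c u$, and relabelling $u$ as $w$ gives $v\trl_c w \iff (NC)_{w,v}=1$ whenever $v\neq w$. On the diagonal, $\trl_c$ is irreflexive by definition, which matches $(NC)_{w,w}=0$ --- the latter holding because $NC$, being the adjacency matrix of a DAG, has no loops. Finally, $\prec_c$ is the transitive closure of $\trl_c$ by the definition of $\prec_c$.

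There is no real obstacle here; the only point requiring a little care is bookkeeping around Lemma~\ref{NC is the induced order} --- both the index flip (it is $v\trl_c u$, not $u\trl_c v$, that corresponds to $(NC)_{u,v}=1$) and checking that its hypothesis of a \emph{focused} correction function is in force, which it is because Theorem~\ref{extending correction set to Pauli flow} has already produced a focused flow $(c,\prec_c)$.
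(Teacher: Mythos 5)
Your proof is correct and is essentially an expanded version of the paper's own argument, which simply states that the corollary ``follows by inspection from the theorems in Section~\ref{sec:new algebraic interpretation}''. You correctly identify the one non-obvious ingredient --- invoking Theorem~\ref{foc flow exists} to pass from an arbitrary Pauli flow to a focused one before Theorem~\ref{algebraic interpretation section start} can be applied --- and you handle the index convention of Lemma~\ref{NC is the induced order} and the diagonal case of $\trl_c$ carefully.
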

\begin{proof}
    By inspection, follows from the theorems in Section~\ref{sec:new algebraic interpretation}.
\end{proof}

\subsection{Equal number of inputs and outputs}

Suppose, that we are given labelled open graph $\Gamma = (G,I,O,\lambda)$ with $n_I = n_O$, i.e.\ $|I| = |O|$. Then, the flow-demand matrix is square, and Corollary~\ref{algebraic interpretation corollary} gives rise to a simple and efficient algorithm for Pauli flow detection.

\begin{theorem}\label{algo easy}
    Let $\Gamma = (G,I,O,\lambda)$ be a labelled open graph with $n_I = n_O$. Then, there exists an $\bigO(n^3)$ algorithm which either finds a Pauli flow or determines that no Pauli flow exists.
\end{theorem}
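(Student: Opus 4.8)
The plan is to turn Corollary~\ref{algebraic interpretation corollary} directly into an algorithm, exploiting the fact that when $n_I = n_O$ the flow-demand matrix $M$ is square (of size $(n-n_O)\times(n-n_O)$). First I would build the adjacency matrix $Adj_G$ and from it the flow-demand matrix $M$ and the order-demand matrix $N$, following Definitions~\ref{flow-demand matrix definition} and~\ref{order-demand matrix definition}; each entry is read off in constant time, so this costs $\bigO(n^2)$. Since $M$ is a square matrix over $\FF_2$, it has a right inverse if and only if it is invertible, in which case that right inverse is the unique two-sided inverse $M^{-1}$. So I would run Gaussian elimination on $M$, which in $\bigO(n^3)$ field operations either reports that $M$ is singular or produces $C := M^{-1}$. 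If $M$ is singular, then by Corollary~\ref{algebraic interpretation corollary} no correction matrix with $MC = Id_{\comp{O}}$ exists, hence $\Gamma$ has no Pauli flow and the algorithm halts.

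Otherwise $C = M^{-1}$ is the \emph{only} candidate satisfying $MC = Id_{\comp{O}}$, so it remains to test the second condition of the corollary on this single matrix. I would compute the product $NC$ by schoolbook matrix multiplication over $\FF_2$ in $\bigO(n^3)$ time, and then check whether $NC$, read as the adjacency matrix of a directed graph on $\comp{O}$, is acyclic --- e.g.\ by attempting a topological sort / depth-first search, which on a graph with $\bigO(n)$ vertices and $\bigO(n^2)$ edges takes $\bigO(n^2)$ time (this also detects any loop, i.e.\ a nonzero diagonal entry). If $NC$ is a DAG, then by Corollary~\ref{algebraic interpretation corollary} the pair $(c,\prec_c)$ recovered from $C$ (by $c(v) = \{u \in \comp{I} \mid C_{u,v} = 1\}$) and from $NC$ (by $v \trl_c w \Leftrightarrow (NC)_{w,v} = 1$, then taking the transitive closure) is a Pauli flow, and the algorithm returns it; the transitive closure of a relation on $\comp{O}$ can be computed in $\bigO(n^3)$. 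If $NC$ is not a DAG, then since no other $C$ could work, $\Gamma$ has no Pauli flow. Every step costs $\bigO(n^3)$ or less, so the total running time is $\bigO(n^3)$.

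Correctness is essentially immediate from Corollary~\ref{algebraic interpretation corollary}; for this case there is no genuinely hard step, but the one point that needs care --- and the thing I would state explicitly --- is the \emph{uniqueness} of the candidate $C$: it is precisely because $M$ is square that a right inverse, if it exists, is unique, so that checking the DAG condition for the single matrix $M^{-1}$ is equivalent to checking it over all right inverses of $M$. (In the general case $n_I < n_O$ treated in the next subsection, this is exactly what fails, which is why more work is needed there.) The only mild implementation subtlety I anticipate is keeping the row/column labellings of $M$, $N$, and $C$ consistent --- $M$ and $N$ have rows indexed by $\comp{O}$ and columns by $\comp{I}$, while $C$ has rows indexed by $\comp{I}$ and columns by $\comp{O}$ --- but this affects neither correctness nor the asymptotics.
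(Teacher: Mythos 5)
Your proposal is correct and follows essentially the same route as the paper's own proof: build $M$ and $N$, invert the square matrix $M$ by Gaussian elimination (declaring no flow if it is singular), multiply to get $NC$, test acyclicity, and recover $(c,\prec_c)$, with the key observation being uniqueness of the right inverse for a square $M$. The only cosmetic difference is that you bound the transitive-closure step by $\bigO(n^3)$ where the paper claims $\bigO(n^2)$, which does not affect the overall $\bigO(n^3)$ bound.
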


\begin{proof}
    Proceed as follows:\begin{enumerate}
        \item Construct the flow-demand matrix $M$ according to Definition~\ref{flow-demand matrix definition}.
        \item Construct the order-demand matrix $N$ according to Definition~\ref{order-demand matrix definition}.
        \item Check if $M$ is invertible.\begin{itemize}
            \item If not, return that there is no flow.
        \end{itemize}
        \item Otherwise, compute the unique inverse $C$ of $M$.
        \item Compute the matrix product $NC$.
        \item Check if $NC$ is a DAG.\begin{itemize}
            \item If not, return that there is no flow.
        \end{itemize}
        \item Otherwise, return the tuple $(C, NC)$: these form the matrix encodings of the focused extensive correction function $c$ and the corresponding induced relation $\trl_c$.
    \end{enumerate}

    The above procedure correctly finds flow: by Corollary~\ref{algebraic interpretation corollary}, any focused extensive correction matrix $C$ must be a right inverse of $M$. However, since $M$ is square, any right inverse is a two-sided inverse and is unique. Then, the only other condition that must be checked is whether $NC$ forms a DAG, as in the procedure above.

    To see that the complexity of the above procedure is $\bigO(n^3)$, consider:\begin{itemize}
        \item Steps 1--2 can be implemented in $\bigO(n^2)$, i.e.\ in time linear in the size of the constructed matrices.
        \item Steps 3--4 can be achieved by performing Gaussian elimination, which runs in $\bigO(n^3)$ time.
        \item Step 5 requires matrix multiplication, with a na\"ive implementation again taking $\bigO(n^3)$ time.
        \item Checking if a graph is a DAG can be done in $\bigO(n^2)$ by running depth-first search algorithm or similar (see \cite[Section 20.4]{cormenIntroductionAlgorithmsFourth2022}).
    \end{itemize}

    Thus, the total runtime is dominated by Gaussian elimination and matrix multiplication running in $\bigO(n^3)$, with the remaining steps running in $\bigO(n^2)$. If one requires the full partial order $\prec_c$ rather than only $\trl_c$, then the transitive closure of $\trl_c$ must be computed. This again can be done in $\bigO(n^2)$ time, so it does not affect the overall complexity.
\end{proof}

The corresponding pseudocode can be found in Appendix~\ref{pseudocodes easy}. The above theorem already gives us a faster method for finding extended gflow and Pauli flow than previously known procedures: $\bigO(n^4)$ from \cite{backensThereBackAgain2021} for extended gflow and $\bigO(n^5)$ from \cite{simmonsRelatingMeasurementPatterns2021} for Pauli flow.
Yet it may still be possible to speed our algorithm up further.

Faster methods for matrix inversion and multiplication exist, for instance by utilization of Strassen's algorithm \cite{strassenGaussianEliminationNot1969}. Using such methods reduces the complexity to $\bigO\left( n^{\log_2 7} \right) \approx \bigO\left( 2^{2.807} \right)$. Due to the constant factors involved, the method becomes beneficial only for sufficiently large matrices, i.e.\ $n$ must be on the order of hundreds or thousands to notice any difference \cite{huangStrassenAlgorithmReloaded2016}. Even faster algorithms for matrix multiplication and inversion exist (for instance \cite{williamsNewBoundsMatrix2024}), but they are \textit{galactic}, meaning impractical due to their requirements on the size of $n$.

Moreover, we assumed nothing about the structure of the simple graph underlying $\Gamma$. Depending on its structure (and thus the properties of the matrices derived from it), faster methods may exist. For instance, when the underlying graph is sparse, so is the flow-demand matrix. Finding the inverse of a sparse matrix can be solved even faster than fast matrix multiplication, for example, see \cite{casacubertaFasterSparseMatrix2022}.

The following lower bound on the problem of flow detection allows us to argue that, in general, further improvements to the detection of the Pauli flow problem when $n_I = n_O$ require better methods for finding the inverse of a matrix over a field $\mathbb{F}_2$.

\begin{theorem}\label{lower bound}
    The problem of finding Pauli flow is at least as computationally expensive as the problem of finding the inverse of a matrix over $\mathbb{F}_2$.
\end{theorem}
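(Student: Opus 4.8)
The plan is to reduce matrix inversion over $\FF_2$ to Pauli flow finding by building, from an arbitrary invertible matrix $A$, a labelled open graph $\Gamma_A$ whose (unique) focused Pauli flow encodes $A^{-1}$. The construction should be such that the flow-demand matrix $M_{\Gamma_A}$ is essentially $A$ itself (or a trivial modification of it), so that the unique right inverse $C$ guaranteed by Theorem~\ref{algo easy} in the $n_I=n_O$ case is precisely $A^{-1}$, and we can read it off from the flow data in linear time.

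Concretely, given an $n\times n$ invertible matrix $A$ over $\FF_2$, I would take $2n$ vertices: a set $I=\{i_1,\dots,i_n\}$ of inputs, a set $O=\{o_1,\dots,o_n\}$ of outputs, put all edges $i_k o_\ell$ for which $A_{k\ell}=1$ (a bipartite graph between $I$ and $O$), and label every input vertex with $XY$ (this is permitted since inputs may only be $X$, $Y$, or $XY$; see Definition~\ref{def:labelling}). Then $\comp{O}=I$ and $\comp{I}=O$, both of size $n$, so $n_I=n_O$ and the flow-demand matrix is square. By Definition~\ref{flow-demand matrix definition}, since every non-output $v=i_k$ has $\lambda(v)=XY$, the row $M_{i_k,*}$ just encodes the neighbourhood of $i_k$ among non-inputs, i.e.\ $M_{i_k,o_\ell}=Adj_{i_k,o_\ell}=A_{k\ell}$ (the diagonal-exclusion clause is vacuous here because rows are indexed by $I$ and columns by $O$, which are disjoint). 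Hence $M_{\Gamma_A}=A$ exactly. Likewise, by Definition~\ref{order-demand matrix definition}, every row of the order-demand matrix $N_{\Gamma_A}$ is identically zero, because $\lambda(i_k)=XY$ and $i_k\in I$ so the "$v$ column exists" proviso fails — there is no column labelled by an input. (If one prefers to avoid relying on that proviso, one can instead argue that the relevant submatrix is empty.) Therefore $N_{\Gamma_A}C=0$ for every $C$, which is trivially the adjacency matrix of a DAG.

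Now apply Corollary~\ref{algebraic interpretation corollary} (or directly Theorem~\ref{algo easy}): $\Gamma_A$ has Pauli flow iff there is a $C$ with $M_{\Gamma_A}C=Id_{\comp{O}}$ and $N_{\Gamma_A}C$ a DAG; since $M_{\Gamma_A}=A$ is invertible the first condition has the unique solution $C=A^{-1}$, and the second is automatic. So $\Gamma_A$ has a (unique, focused) Pauli flow, and the correction matrix $C$ output by any Pauli-flow-finding algorithm is exactly $A^{-1}$. The reduction itself — writing down $\Gamma_A$ from $A$ and extracting $A^{-1}$ from the returned correction function via $C_{u,v}=[u\in c(v)]$ — costs $\bigO(n^2)$, which is dominated by the size of the matrices involved, so it is a genuine reduction in the sense that any speedup for Pauli flow finding yields one for $\FF_2$ matrix inversion. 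For general (not necessarily invertible) $A$, the same $\Gamma_A$ has Pauli flow iff $A$ is invertible, so a flow-finding algorithm also decides invertibility; one can combine this with the standard fact that inversion and "decide-and-invert" have the same complexity up to constants.

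The main thing to get right is not any hard calculation but the bookkeeping around the vertex labelling: I must make sure the graph $\Gamma_A$ is a legal labelled open graph (inputs labelled only from $\{X,Y,XY\}$), that $I\cap O$ can be taken empty, that $\comp{O}$ and $\comp{I}$ have the claimed index sets so that $M_{\Gamma_A}$ really equals $A$ with the intended row/column labelling, and that the "provided the $v$ column exists" clauses in Definitions~\ref{flow-demand matrix definition} and~\ref{order-demand matrix definition} indeed behave as described when every non-output is an input. A secondary point is to state the reduction in the right direction and model of computation (Turing/arithmetic over $\FF_2$) so that "at least as computationally expensive" is literally correct — e.g.\ noting the reduction is many-one and uses only $\bigO(n^2)$ additional work, hence cannot itself be the bottleneck. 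None of this is deep, but it is where an incorrect proof would go wrong, so I would spell it out carefully rather than waving at it.
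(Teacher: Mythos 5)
Your proposal is correct and is essentially the paper's own reduction: the same bipartite graph on $I=\{i_1,\dots,i_n\}$ and $O=\{o_1,\dots,o_n\}$ with edges given by the nonzero entries of $A$, yielding $M_{\Gamma_A}=A$ and $N_{\Gamma_A}=\mathbf{0}$, so that flow existence is equivalent to invertibility and the correction matrix is $A^{-1}$. The only (immaterial) difference is that you label the inputs $XY$ and invoke the ``$v$ column exists'' proviso to zero out the order-demand rows, whereas the paper labels them $X$, which zeroes those rows directly by definition.
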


\begin{proof}
    Let $M$ be an arbitrary $n \times n$ matrix over $\mathbb{F}_2$. Let $\Gamma = (G,I,O,\lambda)$ be defined as follows:\begin{align*}
        I &= \{ i_1, \dots, i_n \}\\
        O &= \{ o_1, \dots, o_n \}\\
        V &= I \cup O\\
        E &= \{ (i_k, o_\ell) \mid M_{k,\ell} = 1 \text{ for } k,\ell \in \{1, \dots, n\} \}\\
        G &= (V,E)\\
        \lambda(i_k) &= X \quad (\forall k \in \{1, \dots, n\})
    \end{align*}
    Then, the flow-demand matrix $M_{\Gamma}$ equals $M$ and the order-demand matrix $N_{\Gamma}$ is identically $0$. Hence, by Theorem~\ref{algebraic interpretation corollary}, $\Gamma$ has Pauli flow if and only if $M$ is invertible and the correction function is encoded by the inverse of $M$.
\end{proof}

For the moment, no better lower bound for finding the inverse of a matrix than $\Omega(n^2)$ is known\footnote{Over real or complex numbers, there exists a bound $\Omega\left(n^2 \log_2 n\right)$, but it does not apply to finite fields (see \cite[Section 28.2]{cormenIntroductionAlgorithmsFourth2022})}. Therefore, finding Pauli flow must take at least $\Omega(n^2)$ operations. In particular, it can be no better than the best-known method for finding causal flow from \cite{mhallaFindingOptimalFlows2008a} with a runtime of $\bigO(m)$ where $m$ is the number of edges.

\subsection{Different number of inputs and outputs}

If $n_O < n_I$, then the flow-demand matrix has more rows than columns and hence it never has a right inverse, so there is no flow. Thus, we only need to consider the case $n_I < n_O$, where the situation is more complex. The lower bound proved in Theorem~\ref{lower bound} applies. We also still obtain a $\bigO(n^3)$ algorithm for flow detection and finding. However, the algorithm is no longer as simple as computing the inverse and verifying whether one matrix product is a DAG.

While the focused correction matrix must be a right inverse of the flow-demand matrix, this right inverse is not unique.
To identify a right inverse that yields a DAG when multiplied by the order-demand matrix, or conclude that no such right inverse exists, we modify the flow-finding algorithm from the previous section in two ways.
Firstly, we perform a change of basis that gives a more convenient parametrisation of the right inverses.
Secondly, we use an approach similar to the one from \cite{mhallaFindingOptimalFlows2008a}, which was subsequently extended in \cite{backensThereBackAgain2021, simmonsRelatingMeasurementPatterns2021} -- we define the correction function in layers of the partial order, starting with the last layer and moving towards the first vertices in the order. In the first step, we define correction sets for all vertices that are final in the order, i.e.\ vertices $v$ such that for all vertices $u$ we have $\neg v \prec u$. Next, we find correction sets for vertices that can only be succeeded by vertices from the first layer et cetera. If, at any point, we cannot correct any vertices, there is no flow. The methods used in \cite{mhallaFindingOptimalFlows2008a,backensThereBackAgain2021} required Gaussian elimination for each layer of vertices, while in \cite{simmonsRelatingMeasurementPatterns2021} up to $\bigO(n)$ Gaussian eliminations per layer were necessary. Since the number of layers could be $\bigO(n)$, it leads to upper bounds of, at best, $\bigO(n^4)$. We improve on this approach by reducing number of operations performed at each layer to $\bigO(n^2 s)$ where $s$ is the number of vertices in the layer. This leads to a total cost of $\bigO(n^3)$ operations.

The following theorem and its proof describe the Pauli flow-finding algorithm in prose and explain its functioning as well as its complexity.
A pseudocode presentation of the same algorithm may be found in Appendix~\ref{pseudocodes hard}.

\begin{theorem}\label{algo hard}
    Let $\Gamma = (G,I,O,\lambda)$ be a labelled open graph with $n_I \le n_O$, i.e.\ $|I|\le|O|$. Then, there exists an $\bigO(n^3)$ algorithm which either finds a Pauli flow or determines that no Pauli flow exists.
\end{theorem}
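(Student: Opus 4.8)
The plan is to build on Corollary~\ref{algebraic interpretation corollary}, which reduces Pauli flow to finding a right inverse $C$ of the flow-demand matrix $M$ such that $NC$ is a DAG. First I would perform a change of basis on $M$: since $M$ has rank $n-n_O$ (else no right inverse exists and we report no flow), Gaussian elimination on the columns produces an invertible matrix $R$ (of size $(n-n_I)\times(n-n_I)$) with $MR = \begin{pmatrix} Id_{\comp{O}} & 0 \end{pmatrix}$. Any right inverse of $M$ then has the form $C = R\begin{pmatrix} Id_{\comp{O}} \\ W \end{pmatrix}$ for an arbitrary $(n_O - n_I)\times(n-n_O)$ matrix $W$ over $\FF_2$; equivalently, writing $C_0 = R\begin{pmatrix} Id \\ 0\end{pmatrix}$ for a fixed particular right inverse and letting the columns of $RK$ (where $K = \begin{pmatrix}0\\Id\end{pmatrix}$) span $\ker M$, every right inverse is $C = C_0 + (\text{kernel combination})$. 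By Theorem~\ref{focused sets are kernel of M} these kernel contributions correspond exactly to adding focused sets to correction sets. So the search space for $C$ is parametrised by the free matrix $W$, and the task becomes: choose $W$ so that $NC = NC_0 + NRK W$ is the adjacency matrix of a DAG.

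Next I would run the layer-by-layer strategy of \cite{mhallaFindingOptimalFlows2008a, backensThereBackAgain2021, simmonsRelatingMeasurementPatterns2021}, but organised so each layer costs only $\bigO(n^2 s)$ for $s$ vertices in that layer. A vertex $v$ can be placed in the current (latest-so-far) layer precisely when its correction set can be chosen so that $(NC)_{w,v}=0$ for every $w$ not yet assigned to a later layer, i.e.\ so that the column $(NC)_{*,v}$ is supported only on already-processed vertices together with $v$ itself, and additionally $(NC)_{v,v}=0$ to avoid a self-loop. Because of the parametrisation, fixing the column $C_{*,v}$ up to a choice of which focused sets (kernel vectors) to add, the constraint "$(NC)_{*,v}$ vanishes on the set $U$ of not-yet-processed vertices except possibly with the diagonal handled" is a system of linear equations in the free parameters; we solve it, and if it has no solution for any remaining vertex, we report no flow. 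The key efficiency point is that, rather than redoing a full Gaussian elimination per layer, I would maintain a single reduced form of the relevant constraint matrix and update it incrementally as vertices move from $U$ into the processed set — each such move changes a bounded number of rows/columns, so the amortised cost over all layers is $\bigO(n^3)$.

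Correctness would follow from Corollary~\ref{algebraic interpretation corollary} together with an invariant argument: after processing layers $L_1, \dots, L_k$ (in order from last to first), the partial assignment of correction sets guarantees that, restricted to $\comp{O}\setminus (L_1\cup\cdots\cup L_k)$, the matrix $NC$ has no edges pointing into the processed block and no self-loops among processed vertices, hence the relation $\trl_c$ restricted to processed vertices is acyclic; and any Pauli flow, via Lemma~\ref{all pauli are first} and the minimality of $\prec_c$, would have to place its vertices into layers compatibly, so the greedy choice cannot miss a flow. When all of $\comp{O}$ is exhausted we have a valid $C$ with $NC$ a DAG; when some layer is empty while vertices remain, no flow exists. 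For the complexity I would tally: computing $M$, $N$ is $\bigO(n^2)$; the initial column-reduction of $M$ and extraction of $R$, $C_0$, kernel basis is $\bigO(n^3)$; forming $NC_0$ and the "update generator" $NRK$ is $\bigO(n^3)$; and the layered loop is $\bigO(n^3)$ by the incremental-elimination bookkeeping, for $\bigO(n^3)$ overall.

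The main obstacle I expect is the bookkeeping that keeps each layer at $\bigO(n^2 s)$ rather than $\bigO(n^3)$: one has to argue carefully that moving a batch of $s$ vertices from "unprocessed" to "processed" can be reflected in the maintained reduced constraint system with $\bigO(n^2 s)$ work (roughly $\bigO(n^2)$ per vertex for the rank-one-style updates and re-pivoting), and that the solvability test for each candidate vertex reuses this shared structure instead of solving a fresh $\bigO(n^3)$ system. A secondary subtlety is handling the diagonal entries $(NC)_{v,v}$ — these encode the split conditions \ref{P4b}, \ref{P5b}, \ref{P6b} and must be forced to $0$ as an extra linear constraint when $v$ is placed, which interacts with the focused-set freedom and needs to be folded into the same linear system cleanly.
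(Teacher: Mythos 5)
Your overall strategy is the same as the paper's: reduce to finding a right inverse $C$ of $M$ with $NC$ a DAG via Corollary~\ref{algebraic interpretation corollary}, parametrise all right inverses by a basis change built from one particular right inverse together with a basis of $\ker M$ (your $R$ with $MR = [\,Id \mid 0\,]$ is exactly the paper's $C' = [\,C_0 \mid F\,]$), and then run a last-layer-first greedy search over the free block $W$, testing for each unassigned vertex whether the linear system forcing its column of $NC$ to vanish on the unprocessed rows is solvable. The correctness argument you sketch (maximal vertices of any flow's DAG have all-zero columns, so the greedy test cannot miss a flow; an empty layer certifies non-existence) is also the paper's.

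The one genuine gap is in the complexity claim for the layered loop. You assert that moving a batch of $s$ vertices into the processed set ``changes a bounded number of rows/columns'' of the maintained reduced system; this is not true as stated, and the amortised $\bigO(n^2)$ cost per vertex is precisely the nontrivial point of the whole argument. Deleting the row of a solved vertex $v$ from a system already in row echelon form is not a local update: after elimination, up to $\Theta(n)$ current rows may be linear combinations involving the original $v$-row, and re-eliminating from scratch costs $\bigO(n^2(n_O-n_I))$ per layer, which only yields $\bigO(n^3(n_O-n_I))$ overall. The paper's mechanism (Steps~\ref{step:LS-and-ILS} and \ref{step:find-v-dependent-rows}--\ref{step:swap-rk}) is to keep a pristine copy $\ILS$ of the initial system alongside the working copy $\LS$, augmented by an identity block recording which original rows each current row is a sum of. To delete row $v$, one locates the $k$ current rows involving the original $v$-row, adds the last such row $r_k$ to the other $k-1$ (this preserves row echelon form and confines the dependence to $r_k$), adds the original $v$-row from $\ILS$ to $r_k$ to cancel it, and then repairs row echelon form by reducing and repositioning only $r_k$, at cost $\bigO(n(n_O-n_I))$; the total is $\bigO(n^2)$ per vertex and hence $\bigO(n^3)$ overall. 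Without some such mechanism your bound is unproven; with it, your proposal coincides with the paper's proof. (Your secondary worry about the diagonal entries $(NC)_{v,v}$ resolves itself in this setup: the row of $v$ is only deleted after $v$ is solved, so the solvability test for $v$ automatically includes the equation forcing $(NC)_{v,v}=0$.)
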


\begin{proof}
    The proof consists of providing an algorithm, showing its correctness, and verifying the complexity bound. We interleave the three parts: for each step of the algorithm, we first provide the procedure and then explain it and analyse its complexity.
    \begin{enumerate}
        \item\label{step:M}
            \proc{Construct the flow-demand matrix $M$ according to Definition~\ref{flow-demand matrix definition}. Recall this matrix has size $(n - n_O) \times (n - n_I)$.}
        
        \item\label{step:N}
            \proc{Construct the order-demand matrix $N$ according to Definition~\ref{order-demand matrix definition}.}

        \item\label{step:M right-invertibility}
            \proc{Check whether $M$ is right-invertible, i.e.\ whether $\rank M = n-n_O$. \begin{itemize}
                \item If $M$ is not right-invertible, return that there is no flow.
            \end{itemize}}
            
            \expl{By Corollary~\ref{algebraic interpretation corollary}, flow existence necessarily means that $M$ is right-invertible. $M$ is right-invertible if all of its rows are linearly independent, i.e.\ if $\rank M$ equals the number of rows of $M$, that is $n-n_O$. The check for right invertibility can be achieved by performing Gaussian elimination, which runs in time $\bigO(n^3)$.}

        \item\label{step:find-inverse}
        \proc{Find any right inverse of $M$ and call it $C_0$.}

        \expl{By the previous step, a right inverse exists. It can be found for instance by Gaussian elimination and backtracking, again in $\bigO(n^3)$ and has size $(n - n_I) \times (n - n_O)$.

        Even if there is a Pauli flow, the matrix $C_0$ will not necessarily be such that $NC_0$ is a DAG.
        Yet by combining it with information about the kernel of $M$, we will be able to parametrise all the right inverses of $M$:
        Any right inverse of $M$ must necessarily be of the form $C_0 + F'$ where $F'$ is a matrix whose columns are vectors in $\ker M$. Such matrices work as then $M(C_0+F') = Id_{\comp{O}} + \mathbf{0} = Id_{\comp{O}}$. They are also necessary, as if $MC = Id_{\comp{O}}$, then $F' = C - C_0$ must satisfy both $C = C_0 + F'$ and $\mathbf{0} = MC - MC_0 = M(C - C_0) = MF'$.

        In the next three steps, we will first find the kernel of $M$ and then use it to perform a change of basis, that gives a simpler form to the possible right inverses.
        This simplifies the search for a right inverse that satisfies all the Pauli flow conditions.}

        \item\label{step:find-kernel}
        \proc{Find a matrix $F$ whose columns form a basis of $\ker M$.}

        \expl{This can again be found with Gaussian elimination.
        By the rank-nullity theorem, the dimension of $\ker M$ is $n_O-n_I$, hence $F$ has size $(n - n_I) \times (n_O - n_I)$.}
        
        \item\label{step:c'} \proc{Let $C' = \left[ C_0 \mid F \right]$.}

        \expl{Note that $C'$ is a square matrix of size $(n-n_I) \times (n-n_I)$.
        We know $MC_0 = Id$, so the columns of $C_0$ form a basis of $\supp{M}$. Similarly, the columns of $F$ form a basis of $\ker M$. Thus, the union of the columns of both forms a basis of $\mathbb{F}^{n-n_I}$, i.e.\ the columns of $C'$ are linearly independent. Therefore, $\rank C' = n-n_I$ and $C'$ is invertible.
        From now on, we will use diagrams to illustrate the sizes of the matrices we work with.
        When the rows or columns correspond to sets of vertices, we also indicate them on the diagrams.
        \begin{align*}
            \tikzfig{m_no_prime}\quad\tikzfig{c_prime}
        \end{align*}
        }

        \item\label{step:basis-change} \proc{Compute $\rowbasechange{N}{\mathcal{B}} = NC'$.}

        \expl{As explained in Step~\ref{step:find-inverse}, the right inverse $C_0$ is not necessarily unique. However, we know the form of all possible right inverses, which can be parametrised in terms of $C_0$ and the columns of $F$.

        The problem is to find some right inverse $C$ such that $NC$ is a DAG. Brute force checking all possible right inverses cannot be performed, as the number of right inverses grows exponentially in $n_O - n_I$. Instead, we simplify the form of right inverses by the basis change just performed.

        To see this, note that the equivalent basis change on $M$ yields $\rowbasechange{M}{\mathcal{B}} = MC' = \left[ MC_0 \mid MF \right] = \left[ Id_{\comp{O}} \mid \mathbf{0} \right]$, since $C_0$ is the right inverse of $M$ and $F$ is the basis of $\ker M$. This makes all right inverses of $\rowbasechange{M}{\mathcal{B}}$ very simple: their first $n - n_O$ rows form $Id_{\comp{O}}$ and the remaining rows can contain any values. In the following picture, the part of the matrix that can have any values is denoted $\starpart$.
        \begin{equation}\label{eq:image-c-b}
            \tikzfig{m_b}\quad\tikzfig{c_b}
        \end{equation}
        Now, suppose that we find $\colbasechange{C}{\mathcal{B}}$ such that $\rowbasechange{M}{\mathcal{B}} \colbasechange{C}{\mathcal{B}} = Id_{\comp{O}}$ and $\rowbasechange{N}{\mathcal{B}} \colbasechange{C}{\mathcal{B}}$ is a DAG. Then, we can find the desired correction matrix $C$ encoding a focused extensive correction function $c$ by computing $C'\colbasechange{C}{\mathcal{B}}$, since:\begin{gather*}
            \rowbasechange{M}{\mathcal{B}} \colbasechange{C}{\mathcal{B}} = (MC') \colbasechange{C}{\mathcal{B}} = M (C'\colbasechange{C}{\mathcal{B}})\\
            \rowbasechange{N}{\mathcal{B}} \colbasechange{C}{\mathcal{B}} = (NC') \colbasechange{C}{\mathcal{B}} = N (C'\colbasechange{C}{\mathcal{B}})
        \end{gather*}
        and so $M (C'\colbasechange{C}{\mathcal{B}}) = Id_{\comp{O}}$ and $N (C'\colbasechange{C}{\mathcal{B}})$ is a DAG.

        This approach also captures any possible solution as given a working $C$ one can find a working $\colbasechange{C}{\mathcal{B}}$ by computing $\left(C'\right)^{-1}C$. This is to say, that instead of solving the problem given $M$ and $N$, we can solve the problem given $\rowbasechange{M}{\mathcal{B}}$ and $\rowbasechange{N}{\mathcal{B}}$.}

        \item\label{step:NL-and-NR} \proc{Define $N_L$ and $N_R$ as the submatrices of $\rowbasechange{N}{\mathcal{B}}$ given by first $n-n_O$ columns and the remaining $n_O - n_I$ columns:
        \begin{align*}
            \tikzfig{n_b}
        \end{align*}
        }
        \expl{From the condition $\rowbasechange{M}{\mathcal{B}} \colbasechange{C}{\mathcal{B}} = Id_{\comp{O}}$, we already know the first $n - n_O$ rows of $\colbasechange{C}{\mathcal{B}}$. This means, that we are able to express the product $\rowbasechange{N}{\mathcal{B}} \colbasechange{C}{\mathcal{B}}$ as follows:\begin{align*}
            \rowbasechange{N}{\mathcal{B}} \colbasechange{C}{\mathcal{B}} = N_L Id_{\comp{O}} + N_R \starpart = N_L + N_R \starpart
        \end{align*}
        Thus, the problem becomes to find a $(n_O-n_I)\times (n-n_O)$ matrix $\starpart$ such that $N_L + N_R \starpart$ is a DAG:
        \begin{align*}
            \tikzfig{dag}
        \end{align*}
        To find $\starpart$, we will use a layer-by-layer approach, as in many existing flow-finding algorithms.
        This means we first identify vertices that are maximal in the partial order, and then work `down the order' from there.
        A na\"ive version of the layer-by-layer approach would be as follows.
        \begin{itemize}
         \item Take the system of linear equations $[N_R \mid N_L]$, where $N_R$ are the coefficients and $N_L$ are the attached vectors, i.e.\ the desired values.
         Recall the columns of $N_L$ are labelled by non-output vertices, as are the columns of $\starpart$.

         This means we can alternatively consider $[N_R \mid N_L]$ as a collection of $n-n_O$ independent linear systems, one for each $v\in\comp{O}$, where each system has the same coefficients for the unknowns but generally different constants (and thus different solutions).
         Each of these systems consists of $n-n_O$ equations in $n_O-n_I$ unknowns.
         For most interesting computations, we expect the labelled open graph to be dominated by internal vertices, i.e.\ $V\setminus(I\cup O)$ is larger than $I\cup O$.
         In that case, the linear systems are generally overdetermined.

         Indeed, if all linear systems are solvable, then $N_L + N_R \starpart$ is the all-zero matrix: a trivial DAG.
         Yet, if a flow exists, there must be vertices that are maximal in the associated partial order.
         The columns corresponding to those vertices in the adjacency matrix of the DAG are all-$0$s: thus these vertices must be associated with solvable linear systems.
         \item Perform Gaussian elimination to find the set $L$ of vertices whose associated system of linear equations is solvable. (If none exist, there can be no flow.)
         These are the vertices that are not followed by any not-yet-solved vertices in the partial order represented by the DAG $NC = \rowbasechange{N}{\mathcal{B}} \colbasechange{C}{\mathcal{B}}$.
         \item For each vertex $v\in L$, remove the linear system associated with $v$ from future consideration, i.e.\ ignore the column labelled $v$ in $N_L$ and in $P$.
         Recall that the rows of $N_L$ and $N_R$ are also labelled by non-output vertices.
         Knowing that $v$ is maximal in the partial order among the remaining vertices, we can now remove the row $v$ from consideration in each linear system, eliminating one equation given by $N_R$ coefficients. The removal of the equation means that it is now possible to get a different number than initially required for the equation corresponding to the removed vertex. In other words, it is now possible to add to the partial order a constraint expressing that a vertex solved later precedes some vertex that was solved in earlier steps.

         If there remain systems of linear equations that have not been solved yet, go back to the previous step to look for a new set of vertices whose associated linear systems have become solvable.
        \end{itemize}

        This approach is somewhat similar to the ones in \cite{mhallaFindingOptimalFlows2008a,backensThereBackAgain2021,simmonsRelatingMeasurementPatterns2021} (except that, there, larger matrices must be considered in each step). Since we only have to Gaussian eliminate with respect to the columns of $N_R$ (of which there are $n_O - n_I$), we get an upper bound for the na\"ive approach of $\bigO(n^3 (n_O - n_I))$. However, we can do better by noticing that the matrices constructed for subsequent layers are very similar, and thus we can `reuse' previous Gaussian elimination steps.

        We thus continue with the more involved version of the layer-by-layer algorithm.}

        \item\label{step:LS-and-ILS} \proc{Construct two independent\footnote{By independent, we mean that changes to one do not affect the other.} copies $\LS, \ILS$ of the following matrix.
        \begin{align*}
            \tikzfig{initial_system}
        \end{align*}}
        \expl{The first block of the system forms the coefficients of the system of linear equations. The second block consists of the attached vectors, i.e.\ the desired values for which we want to solve the system. Finally, the third block starts as the identity and will be used to keep track of Gaussian elimination steps throughout the process.
        
        $\LS$ stands for `linear system' and $\ILS$ for `initial linear system'.}
        
        \item\label{step:gaussian-eliminate-LS} \proc{Perform Gaussian elimination on $\LS$ (with respect to the first $(n_O - n_I)$ columns).}

        \expl{The idea of the third block, i.e.\ initially the identity matrix, is to know which rows from the initial system $\ILS$ are used in each of the rows of $\LS$, i.e.\ the performed row operations can be `recovered' by reading the third block. The complexity of Gaussian elimination in this step is $\bigO\left(n^2(n_O-n_I)\right)$, as we effectively have to perform Gaussian elimination of an $(n-n_O) \times (n_O-n_I)$ matrix (all rows, but only the first $n_O-n_I$ columns), yet the cost of each row operation is $(n_O-n_I)+(n-n_O)+(n-n_O) = 2n-n_O-n_I \in \bigO(n)$, as this is the actual size of each row.}

        \item\label{step:initialise-L-P} \proc{Define the (initially empty) set of already solved vertices $\solved = \emptyset$ and initialize the matrix of the solutions $\starpart$ as an (initially empty) $(n_O-n_I) \times (n - n_O)$ matrix with columns corresponding to non-outputs $\comp{O}$.}

        \item\label{step:while-loop} \proc{Repeat the following loop of substeps until $\solved = \comp{O}$, i.e.\ all vertices are solved.
        We will refer to this loop as the `\texttt{while} loop'.}

        \expl{Each run of the \texttt{while} loop corresponds to solving a single layer of vertices. The loop invariant is that the first $(n_O-n_I)$ columns of $\LS$ are in row echelon form (but not necessarily row-reduced echelon form).
        The loop invariant holds at the beginning by Step~\ref{step:gaussian-eliminate-LS}.}
        
        \begin{enumerate}
            \item\label{step:find-L} \proc{Identify a set $\tosolve$ of vertices $v \in \comp{O} \setminus \solved$ whose associated system of linear equations can be solved at this step.
            \begin{itemize}
                \item If the set $\tosolve$ turns out to be empty, return that there is no flow.
            \end{itemize}}
            \expl{This can be done in $\bigO(n^2)$, as $\LS$ is in row echelon form:\begin{itemize}
                \item The system of linear equations associated with a vertex $v$ can be solved if it has not been solved yet and if its column in the second block (which determines the constants in each equation) has only zeros where it intersects rows for which all the coefficients in the first block are $0$s:
                \begin{align*}
                    \tikzfig{solving_system_alt}
                \end{align*}
                Here, $*$ denotes matrix entries of arbitrary value.

                \item This means, the submatrix marked in grey must be checked for all-$0$ columns. Checking this submatrix takes $\bigO(n^2)$ as its size is bounded above by $n \times n$.
            \end{itemize}
            
            If there remain vertices under consideration but we cannot solve any of the associated systems of linear equations, then Pauli flow cannot exist.}

            \item\label{step:solve-L} \proc{Solve the linear systems associated with all vertices $v \in \tosolve$.}

            \expl{Since the matrix is in row echelon form, finding a solution for the linear system associated with $v$ is fast and is done by backtracking over the coefficient matrix (i.e.\ the first block) and the column of the solved vertex.}
            
            \item\label{step:record-solution-in-P} \proc{For each $v\in L$, place the solution to the system of linear equations associated with $v$ in the $v$ column of $\starpart$.}

            \expl{Solving the system of linear equations associated with one vertex takes at most $\bigO\left(n(n_O-n_I)\right)$ steps due to row echelon form, as there are only $n_O-n_I$ columns of the coefficient block. Hence the cost per layer (i.e.\ per round of the loop) of this step is $\bigO\left(ns(n_O-n_I)\right)$, where $s$ is the number of vertices in the layer, i.e.\ $s = |\tosolve|$.
            }

            \item\label{step:for-loop} \proc{Bring the linear system $\LS$ to the row echelon form that would be achieved by Gaussian elimination if the row and column vectors corresponding to vertices in $\tosolve$ where not included in the starting matrix.}

            \expl{This is the step where we remove certain rows and columns from future consideration.
            In the case of columns, these are simply columns of constants: we ignore those systems of linear equations that have already been solved.
            In the case of rows, this involves setting an entire row of coefficients to 0 in a specific way.
            This step may therefore break row echelon form.

            To remove the rows and columns, we iterate over all vertices in $\tosolve$; we will refer to this inner loop as the `\texttt{for} loop'. Each vertex appears in $L$ during at most one round of the outer \texttt{while} loop. This means that we can spend $\bigO(n^2)$ operations per vertex on modifying the set of systems of linear equations, while staying within the overall limit of $\bigO(n^3)$.

            Note that if one layer (i.e.\ one round of the \texttt{while} loop) contains a substantial part of $\comp{O}$, then working out this layer in the \texttt{while} loop may take a number of operations which is not in $\bigO(n^2)$. However, this does not increase the overall complexity to $\bigO(n^4)$, as the cost for each vertex is still $\bigO(n^2)$ and each vertex appears only once.

            The process of modifying the systems of linear equations is performed as follows.}\begin{enumerate}
            
                \item\label{step:update-S} \proc{Let $v \in \tosolve$ be the vertex currently considered in the \texttt{for} loop. Add $v$ to the set $\solved$.}

                \expl{The addition to set $\solved$ ensures that a vertex will not be solved for a second time. Next, we must remove the $v$ row from all the systems of linear equations.}
                
                \item\label{step:find-v-dependent-rows} \proc{Find all the rows of the linear system which contain a $1$ in the $v$ column of the third block, denote them by $r_1, r_2, \dots, r_k$.}

                \expl{These are the rows that depend on the original $v$-labelled row.
                They can be identified by iterating over the $v$ column in the third block, which gives a complexity of $\bigO(n)$.}
                
                \item\label{step:row-additions-r1-rk-1} \proc{Add row $r_k$ to rows $r_1, r_2, \dots, r_{k-1}$.}

                \expl{This removes the dependence on the original $v$-labelled row from $r_1, r_2, \dots, r_{k-1}$, meaning only $r_k$ now depends on the original $v$ row.

                By the loop invariant, the system is in row echelon form initially. Adding a later row to earlier rows preserves row echelon form.

                The additions take $\bigO(nk)$ steps, as there are at most $\bigO(n)$ columns in the system, and $k-1 \in \bigO(k)$ row operations are performed. Since $k$ is bounded by the number of rows, we get $\bigO(nk) \in \bigO(n^2)$. At the end, the matrix is in row echelon form, and only row $r_k$ uses the original row of vertex $v$ anywhere.}

                \item\label{step:row-addition-rk} \proc{Take the $v$-labelled row from $\ILS$, the initial linear system, and add this row to $r_k$.}

                \expl{In this way, we remove the dependency of $r_k$ on the initial row of $v$.

                This operation can break row echelon form. Crucially, however, only $r_k$ can break the row echelon form. We can correct this one row in the following step.}

                \item\label{step:simplify-rk} \proc{Add other rows of the current linear system $\LS$ to row~$r_k$ to simplify the latter as much as possible.}

                \expl{In particular, the row $r$ is added to $r_k$ when $r_k$ has a $1$ in the column corresponding to the leading $1$ of row\footnote{It suffices to only reduce $r_k$ until a $1$ that cannot be eliminated appears in $r_k$, or until $r_k$ becomes identically $0$, whichever is first.} $r$. These are the same operations that one would perform in Gaussian elimination, except that we do not use $r_k$ itself to cancel 1s in other rows, we only use other rows to cancel 1s in $r_k$. This is because we only need the coefficient block to be in row echelon form (not row-reduced echelon form).

                Hence, we need to perform at most $n_O-n_I$ row operations, as this is the maximum possible number of rows that would need to be added to row $r_k$. Hence, this step is bounded by $\bigO\left(n(n_O-n_I)\right)$.}
                
                \item\label{step:swap-rk} \proc{Swap the rows as necessary to place what used to be the row $r_k$ in the correct spot to get a row echelon form for the coefficient block.}

                \expl{There are at most $n_O-n_I$ non-zero rows in the coefficient block of the matrix. This is due to row-echelon form and the rank of the coefficient submatrix being bounded by its number of columns.
                Moreover, at most one swap will involve a row whose coefficient part is identically $0$: if $r_k$ is not identially $0$ but it is initially in the block of all-$0$ rows, then a single swap suffices to place it adjacent to the block of non-trivial rows.
                Whether $r_k$ is identically $0$ or not, all subsequent swaps are done only between non-trivial rows and row $r_k$. Hence, at most $\bigO\left(n(n_O-n_I)\right)$ operations are required. After that, the $\LS$ is in the row echelon form that would be achieved if there was no $v$ row in the initial system (or, more precisely, if the $v$ row of the initial system was identically $0$).}
            \end{enumerate}

            Combining the steps above and assuming a layer of size $s$, the total run time of the process for modifying the system of linear equations is $\bigO(n^2s)$, as desired.
        \end{enumerate}
        
        Since each vertex appears in at most one layer (i.e.\ one iteration of the \texttt{while} loop), and each substep has a complexity at most $\bigO(n^2 s)$, the entire \texttt{while} loop runs in total time $\bigO(n^3)$.

        \item\label{step:build-c-B} \proc{Construct $\colbasechange{C}{\mathcal{B}}$ by stacking $Id_{\comp{O}}$ over $\starpart$.}

        \expl{Once the \texttt{while} loop has been completed, we know that all vertices have been solved. Thus, a matrix $\starpart$ such that $N_L + N_R \starpart$ is a DAG has been found, and we get $\colbasechange{C}{\mathcal{B}} = \left[ \frac{Id_{\comp{O}}}{\starpart} \right]$, cf.\ the illustration in \eqref{eq:image-c-b}.}

        \item\label{step:output} \proc{Return the tuple $C'\colbasechange{C}{\mathcal{B}}, NC'\colbasechange{C}{\mathcal{B}}$ as the Pauli flow.}

        \expl{We return the correction function $c$ in matrix form and the relation $\trl_c$, also in matrix form, cf.\ Step~\ref{step:basis-change} for why the matrices above are the correct ones. Computing and outputting the matrix products takes at most $\bigO(n^3)$ operations, assuming the standard matrix multiplication algorithm is used.}
    \end{enumerate}

    As previously, if one requires the partial order $\prec_c$, then the transitive closure of $\trl_c$ must be computed, which can be done in $\bigO(n^2)$ steps. All steps together give the desired bound of $\bigO(n^3)$.
\end{proof}

We provide pseudocode for this algorithm in Appendix~\ref{pseudocodes hard}.
A worked example of running the algorithm on the labelled open graph from Figure~\ref{fig:loG} is given in Appendix~\ref{run example}.

With the above procedure, we proved that Pauli flow on a labelled open graph can be found in $\bigO(n^3)$ time, where $n$ is the number of vertices in the underlying graph. However, extra care in bounding the complexity of individual steps could result in a slightly better overall bound: The only situation in which a round of the inner \texttt{for} loop (cf.\ Step~\ref{step:for-loop}) may actually require $\bigO(n^2)$ operations, rather than $\bigO(n(n_O-n_I))$ operations, occurs when there are many rows which depend on a vertex that has just been solved, i.e.\ when the value $k$ in Step~\ref{step:find-v-dependent-rows} is large. Therefore, it is possible that with extra care one could show only $\bigO\left( n^2 (n_O-n_I) \right)$ operations are needed for the entire \texttt{while} loop of Step~\ref{step:while-loop}.
Then, if there are methods faster than $\bigO(n^3)$ for finding the right inverse, kernel, and product of $\bigO(n) \times \bigO(n)$ matrices, a better bound than $\bigO(n^3)$ could be obtained for the entire algorithm even in the case of $n_I < n_O$.

Another possible optimisation is to notice that the order-demand matrix $N$ has many rows which are identically $0$. After the basis change to $\rowbasechange{N}{\mathcal{B}}$, these rows are still identically $0$. Therefore, removing them from the matrix at the start is beneficial: then, the algorithm has to consider a matrix with fewer rows. The rows that are always identically $0$ in the order-demand matrix are those of planar-measured vertices and $XY$-measured inputs. Hence, the number of rows of $N$ can be reduced from $n-n_O$ to $\left| \measplanar \setminus I \right|$.

Finally, we point out that finding the solutions for the individual systems of linear equations associated with the vertices in $\tosolve$ and the removal of the rows corresponding to these vertices does not have to be split into two separate loops. However, without breaking these steps into two loops, an insufficiently careful method of finding the solution could impose an order relation $v \prec u$ on two vertices from the same layer\footnote{In other words: while the flow returned by our algorithm is guaranteed to have the useful property of being `maximally delayed' \cite{mhallaFindingOptimalFlows2008a,backensThereBackAgain2021,simmonsRelatingMeasurementPatterns2021}, such a change could break that guarantee.}.

\section{Summary and future work}\label{sec:summary and future work}

We have provided a simplified interpretation of Pauli flow and improved flow-finding algorithms. Given a labelled open graph $\Gamma$, we introduced the notions of a `flow-demand matrix' $M$ and an `order-demand matrix' $N$. We proved that the existence of a Pauli flow on $\Gamma$ corresponds to the existence of right inverse $C$ of $M$, such that $NC$ forms the adjacency matrix of a DAG. This provides a new algebraic interpretation of Pauli flow, thus extending previously known algebraic interpretations for $XY$ only measurements \cite{mhallaWhichGraphStates2014a}, and $X$ and $Z$ only measurements \cite{mitosekPauliFlowOpen2024}.

Using our new algebraic interpretation, we developed two $\bigO(n^3)$ algorithms for finding flow. The algorithm for the general case works for any number of inputs and outputs and is an improvement of the layer-by-layer approach used in \cite{mhallaFindingOptimalFlows2008a,backensThereBackAgain2021,simmonsRelatingMeasurementPatterns2021}.
In the special case of equal numbers of inputs and outputs, the algebraic interpretation gives a new and simpler proof that a focused Pauli flow (if it exists) must be unique.
In this case, there is a simpler flow-finding algorithm, which differs from other existing flow-finding algorithms\footnote{A version for $XY$-only measurements could be derived from \cite{mhallaWhichGraphStates2014a} but is not suggested there -- possibly because it gives no improvement on the complexity of the previously-known algorithm for finding $XY$-only gflow \cite{mhallaFindingOptimalFlows2008a}.}. Further, we reduced the problem of finding flow on labelled open graphs with equal numbers of inputs and outputs to, and from, the problem of finding the inverse of a matrix and performing matrix multiplication. We thus argue that further improvements to flow-finding algorithms must necessarily lead to or come from new methods for these standard linear algebra problems.

Furthermore, in the case of an equal number of inputs and outputs, we showed that the Pauli flow can always be reversed, extending the proof from \cite{mhallaWhichGraphStates2014a}.
When the number of inputs and outputs does not agree, we showed that the space of focused sets of a Pauli flow -- i.e.\ the sets of vertices that have a trivial net correction effect --  is isomorphic to the kernel of the flow-demand matrix. Hence, we obtained an interpretation of the space of focused sets different to the one from \cite{simmonsRelatingMeasurementPatterns2021}.

The algebraic interpretation and our new algorithms both contribute to further theoretical understanding of Pauli flow and offer many applications.

\subsection{Future work}

In the remaining part, we discuss ideas for future work.

\paragraph{Other notions of MBQC} In the introduction, we outlined the correspondence between Pauli flow and \textit{strong uniform stepwise deterministic} (also called robustly deterministic) quantum computation, the standard notion of determinism for the one-way model. However, other relevant notions of determinism exist, for instance, see \cite{mhallaWhichGraphStates2014a}. It may be interesting to check if algebraic interpretations can be established for those notions of determinism as well.

Similarly, there are other types of flow defined for different fragments of quantum computation. For instance in the case of MBQC schemes over qudits of odd prime dimension (instead of qubits), a `$Zd$ flow' was introduced as a necessary and sufficient condition for strong determinism \cite{boothOutcomeDeterminismMeasurementbased2023}. An $\bigO(n^4)$ algorithm for finding such flow was given in the same paper. We speculate that an algebraic interpretation could exist for $Zd$ flow and may lead to an $\bigO(n^3)$ algorithm.

\paragraph{Pauli flow on open graphs with unknown measurement labels} In \cite{mitosekPauliFlowOpen2024}, one of the authors of this paper showed that given an (unlabelled) open graph one could decide the existence of a measurement labelling resulting in the labelled open graph having Pauli flow. This was done using an early (and unpublished) version of our algebraic interpretation for the special case of $X$ and $XY$ measurements. Additionally, a different version of the algebraic interpretation, applying to $X$ and $Z$ measurements only, was used to show that the problem of finding measurement labellings compatible with Pauli flow is in the complexity class $\RM{RP}$ (random polynomial time).

Here, we improved the algebraic interpretation by providing a different way of dealing with $Z$-measured vertices. We hope this new version of algebraic interpretation could lead to a complexity class $\RM{P}$ for deciding the existence of a measurement labelling resulting in the labelled open graph having Pauli flow.

\paragraph{Circuit extraction} It is straightforward to translate a quantum circuit into a robustly deterministic one-way model computation (for example, see \cite{broadbentParallelizingQuantumCircuits2009}).
For the reverse process, it is straightforward to translate a one-way computation with causal flow into a circuit, or to translate a one-way computation with some more general flow property into a quantum circuit with $\bigO(n)$ ancillas, where $n$ is the number of qubits in the original computation.
The problem of translating a unitary one-way computation with extended gflow or Pauli flow into an ancilla-free quantum circuit (i.e.\ a circuit with the minimal number $n_O-n_I$ of qubit initialisations) has proved more complicated.

It was finally resolved over recent years by a series of papers using the ZX calculus, a graphical language for quantum computation \cite{coeckeInteractingQuantumObservables2011, coeckePicturingQuantumProcesses2017, vandeweteringZXcalculusWorkingQuantum2020}. One of the essential problems in ZX calculus is \emph{circuit extraction}, i.e.\ the problem of translation from a diagram to an equivalent quantum circuit. In general, this problem is $\counting[P]$-hard \cite{debeaudrapCircuitExtractionZXDiagrams2022c}. However, if the ZX diagram can be interpreted as a labelled open graph and exhibits a flow property such as Pauli flow, then polynomial time algorithms for circuit extraction were found \cite{backensThereBackAgain2021, simmonsRelatingMeasurementPatterns2021, staudacherMulticontrolledPhaseGate2024}.

The method from \cite{backensThereBackAgain2021} for gflow only requires that the flow exists, but it does not need to find it. However, the Pauli-flow circuit extraction \cite{simmonsRelatingMeasurementPatterns2021} algorithm first needs to find a Pauli flow and then performs operations to transform a labelled open graph to a quantum circuit. Our algorithms improve the first part of this extraction procedure. Therefore, it would be interesting to check whether the new flow-finding algorithm leads to new state-of-the-art methods for circuit extraction, both from MBQC schemes and ZX diagrams.

\paragraph{Circuit optimization} One of the most important uses of ZX calculus is circuit optimization (for example, see\cite{duncanGraphtheoreticSimplificationQuantum2020, staudacherReducing2QuBitGate2023}). An unoptimized diagram can be transformed into ZX diagram, then simplified with ZX rewrites, and finally, a new optimized circuit can be extracted. As outlined above, this method requires the final diagram to exhibit Pauli flow. To ensure this, current methods start with a diagram exhibiting Pauli flow and maintain Pauli flow in each rewrite step. The tricky part is which rewrite rules to apply: They must both preserve the existence of flow, and lead to a \textit{better} circuit in the end -- here, \textit{better} can mean different things, such as better $T$-count, a smaller number of $2$-qubit gates, or another metric. Rewrites rules preserving flow have been studied in, for example, \cite{mcelvanneyCompleteFlowpreservingRewrite2022a, mcelvanneyFlowpreservingZXcalculusRewrite2023a}. The current process of finding flow-preserving rewrite rules is tedious, as it requires verification of all Pauli flow conditions by hand. Our new algebraic interpretation may simplify this process and hence lead to the discovery of new flow-preserving rewrite rules.

\section*{Acknowledgements}
The authors thank Korbinian Staudacher for the helpful discussions about Pauli flow. PM also thanks Michał Łupiński for explaining relevant technical aspects of linear algebra.

\phantomsection

\addcontentsline{toc}{chapter}{Bibliography}

\begin{sloppypar}
\bibliographystyle{eptcs}
\bibliography{librefwin}
\end{sloppypar}

\newpage
\appendix

\section{Pseudocode for matrix constructions and flow-finding in case of equal numbers of inputs and outputs}\label{pseudocodes easy}

The first algorithm relates to Definitions~\ref{flow-demand matrix definition} and~\ref{order-demand matrix definition}.
The second algorithm relates to Theorem~\ref{algo easy}.

\algrenewcommand\algorithmicdo{}
\algrenewcommand\algorithmicthen{}
\begin{algorithm}[H]
\caption{Construction of flow and order-demand matrices}\label{M and N pseudo}
\begin{algorithmic}[1]
\Statex Returns the flow-demand matrix of a given labelled open graph
\Procedure{Flow-DemandMatrix}{$G,I,O,\lambda$}
    \State $M = {Adj}_G \mid_{\comp{I}}^{\comp{O}}$ \Comment{Construction of reduced adjacency matrix}
    \For {$v \in \comp{O}$} \Comment{For all measured vertices}
        \If {$\lambda(v) \in \{Z, YZ, XZ \}$}
            \State $M_{v,*} \mathrel{*}= 0$ \Comment{Multiply rows of $Z, YZ, XZ$ measurements by $0$}
        \EndIf
        \If {$\lambda(v) \in \{ Y, Z, YZ, XZ \}  \wedge v \notin I$}
            \State $M_{v,v} = 1$ \Comment{Set intersections of rows and columns for $Y, Z, YZ, XZ$ measurements to $1$}
        \EndIf
    \EndFor
    \State \Return $M$
\EndProcedure
\[\]
\Statex Returns the order-demand matrix of a given labelled open graph
\Procedure{Order-DemandMatrix}{$G,I,O,\lambda$}
    \State $N = {Adj}_G \mid_{\comp{I}}^{\comp{O}}$ \Comment{Construction of reduced adjacency matrix}
    \For {$v \in \comp{O}$}
        \If {$\lambda(v) \in \{ X, Y, Z, XY \}$}
            \State $N_{v,*} \mathrel{*}= 0$ \Comment{Multiply rows of $XY$ and Pauli measurements by $0$}
        \EndIf
        \If $\lambda(v) \in \{ XY, XZ \} \wedge v \notin I$
            \State $N_{v,v} = 1$ \Comment{Set intersections of rows and columns for $XY, XZ$ measurements to $1$}
        \EndIf
    \EndFor
    \State \Return $N$
\EndProcedure
\end{algorithmic}
\end{algorithm}

\begin{algorithm}[H]
\caption{Finding flow when $n_I = n_O$}\label{nI equal nO algo}
\begin{algorithmic}[1]
\Statex Checks if a labelled open graph $\Gamma = (G,I,O,\lambda)$ with $n_I = n_O$ has Pauli flow and, if it exists, returns the focused extensive correction function $c$ (in the form of a matrix) and the corresponding $\trl_c$ (also in the form of a matrix)
\Procedure{FindFlowSimple}{$G,I,O,\lambda$}
    \State $M = \Call{Flow-DemandMatrix}{G,I,O,\lambda}$
    \State $N = \Call{Order-DemandMatrix}{G,I,O,\lambda}$
    \State $C = \Call{Inverse}{M}$
    \If {$C$ is $\mathbf{None}$}
        \State \Return `NO FLOW EXISTS'
    \EndIf
    \State $R = NC$
    \If {$R$ is not a DAG}
        \State \Return `NO FLOW EXISTS'
    \EndIf
    \State \Return $(C, R)$
\EndProcedure
\end{algorithmic}
\end{algorithm}

\section{Pseudocode for case of different numbers of inputs and outputs}\label{pseudocodes hard}

This algorithm relates to Theorem~\ref{algo hard}. We list the corresponding steps from Theorem~\ref{algo hard} in the comments. \textit{GE} indicates that the step can be performed using Gaussian elimination.

\begin{algorithm}
\caption{Finding flow in the general case}\label{general algo}
\begin{algorithmic}[1]
\Statex Checks if a labelled open graph $\Gamma = (G,I,O,\lambda)$ has Pauli flow and, if it exists, returns the focused extensive correction function $c$ (in the form of a matrix) and the corresponding $\trl_c$ (also in the form of a matrix)
\Procedure{FindFlowGeneral}{$G,I,O,\lambda$}
    \State $M = \Call{Flow-DemandMatrix}{G,I,O,\lambda}$ \Comment{Step~\ref{step:M}}
    \State $N = \Call{Order-DemandMatrix}{G,I,O,\lambda}$ \Comment{Step~\ref{step:N}}
    \If {$\rank M \ne n-n_O$} \Comment{GE; Step~\ref{step:M right-invertibility}}
        \State \Return `NO FLOW EXISTS'
    \EndIf
    \State $C_0 = \text{any right inverse of $M$}$ \Comment{GE and backtracking; Step~\ref{step:find-inverse}}
    \State $F = \text{matrix with columns forming basis of $\ker M$}$ \Comment{GE; Step~\ref{step:find-kernel}}
    \State $C' = \left[ C_0 \mid F \right]$ \Comment{Step~\ref{step:c'}}
    \State $\rowbasechange{N}{\mathcal{B}} = NC'$ \Comment{Step~\ref{step:basis-change}}
    \State $N_L = \text{first $n-n_O$ columns of $\rowbasechange{N}{\mathcal{B}}$}$ \Comment{Step~\ref{step:NL-and-NR}}
    \State $N_R = \text{last $n_O-n_I$ columns of $\rowbasechange{N}{\mathcal{B}}$}$
    \State $\ILS, \LS = \left[ N_R \mid N_L \mid Id_{\comp{O}} \right]$ \Comment{Two independent copies of the same linear system; Step~\ref{step:LS-and-ILS}}
    \Statex \Comment{From now, we refer to the three parts of $\ILS$ and $\LS$ as the first block, the second block, and the third block}
    \State $\text{bring first block of $\LS$ into row echelon form}$ \Comment{GE; Step~\ref{step:gaussian-eliminate-LS}}
    \State $\solved = \emptyset$ \Comment{Step~\ref{step:initialise-L-P}}
    \State $\text{initialize $(n_O-n_I) \times (n-n_O)$ matrix $\starpart$}$ \Comment{Columns of $P$ correspond to non-outputs $\comp{O}$}
    \While {$\solved \ne \comp{O}$}: \Comment{Step~\ref{step:while-loop}}
        \State $\text{find the first row $r_z$ in $\LS$ whose first $n_O-n_I$ entries equal $0$}$ \Comment{Step~\ref{step:find-L}}
        \State $\tosolve = \{ v \in \comp{O} \setminus \solved \mid \text{column $v$ in the first $\LS$ block has all entries from row $r_z$ on equal to $0$} \}$
        \If {$\tosolve = \emptyset$}
            \State \Return `NO FLOW EXISTS'
        \EndIf
        \For {$v \in \tosolve$}: \Comment{Steps~\ref{step:solve-L} and \ref{step:record-solution-in-P}}
            \State $\starpart_{*,v} = \Call{SolveLinearSystem}{\text{first block of $\LS$, $v$ column in second block of $\LS$}}$
        \EndFor
        \For {$v \in \tosolve$}: \Comment{Check text after Theorem~\ref{algo hard} for why we need two loops; Step~\ref{step:for-loop}}
            \State $\solved = \solved \cup \left\{ v \right\}$ \Comment{Step~\ref{step:update-S}}
            \State $R = \left[ r \mid \text{$r$ is a row whose intersection with the $v$ column in the third block of $\LS$ is $1$} \right]$
            \Statex \Comment{$R$ is an ordered list that corresponds to $r_1, r_2, \dots, r_k$ in Step~\ref{step:find-v-dependent-rows}}
            \State $r_{last} = \text{last element of $R$}$ \Comment{corresponds to $r_k$}
            \For {$r \in R[:-1]$} \Comment{Iterate over all but last element; Step~\ref{step:row-additions-r1-rk-1}}
                \State $\text{add $r_{last}$ of $\LS$ to row $r$ in $\LS$}$
            \EndFor
            \State $\text{add row $v$ of $\ILS$ to row $r_{last}$ in $\LS$}$ \Comment{Step~\ref{step:row-addition-rk}}
            \For $r \in \text{rows of $\LS$ except $r_{last}$}$: \Comment{Iterate from top; the loop covers Step~\ref{step:simplify-rk}}
                \If {$\text{row $r$ of $\LS$ has first $n_O-n_I$ entries $0$}$}
                    \State Break
                \EndIf
                \State $y = \text{column of the leading $1$ in row $r$ of $\LS$}$
                \If {$\text{intersection of $r_{last}$ and column $y$ in $\LS$ contains $1$}$}
                    \State $\text{add row $r$ to $r_{last}$ in $\LS$}$
                \EndIf
            \EndFor
            \State swap $r_{last}$ with other rows to bring first block of $\LS$ back into row echelon form
            \Statex \Comment{Step~\ref{step:swap-rk}}
        \EndFor
    \EndWhile
    \State $\colbasechange{C}{\mathcal{B}} = \begin{bmatrix}
        Id_{\comp{O}}\\
        \starpart
    \end{bmatrix}$ \Comment{Step~\ref{step:build-c-B}}
    \State \Return $\left(C' \colbasechange{C}{\mathcal{B}}, N C'\colbasechange{C}{\mathcal{B}}\right)$ \Comment{Step~\ref{step:output}}
\EndProcedure
\end{algorithmic}
\end{algorithm}

\FloatBarrier
\section{Worked example of the Pauli-flow finding algorithm for more outputs than inputs}\label{run example}

This section contains a worked example of the complicated parts of the algorithm of Theorem~\ref{algo hard}, using the running example from the main body of the paper.

Consider the labelled open graph $\Gamma$ from Figure~\ref{fig:loG}. The flow-demand matrix $M$ and order-demand matrix $N$ are presented in Figure~\ref{fig:mats}.
Suppose we do not yet know the correction function of Figure~\ref{fig:pf} or the matrix $C$ from Figure~\ref{fig:C}.
Instead, suppose we have found some right inverse $C_0$ of $M$ as well as a matrix $F$ whose columns form a basis of $\ker M$.
At Step~\ref{step:c'} of the algorithm, these are composed into the matrix $C' = \left[ C_O \mid F \right]$ shown below:\begin{gather*}
    C' = [C_0 \mid F] = \begin{array}{|c|ccccc||c|}
        \toprule
         &   i& a& b& e& d& F_1\\
        \midrule
        a&   0& 1& 0& 0& 0& 0\\
        b&   1& 0& 0& 0& 0& 0\\
        e&   1& 0& 1& 0& 1& 1\\
        d&   0& 0& 0& 0& 1& 0\\
        o_1& 1& 0& 0& 1& 1& 1\\
        o_2& 0& 0& 0& 0& 0& 1\\
        \bottomrule
    \end{array}
\end{gather*}

In general, $F$ will have $n_O - n_I$ columns. Here, this number is one. In the following, we use $F_1$ as the label for the single column of the matrix $F$.

Then, for Step~\ref{step:basis-change}, the matrices $\rowbasechange{M}{\mathcal{B}} = MC'$ and $\rowbasechange{N}{\mathcal{B}} = NC'$ are as follows\footnote{The matrix $\rowbasechange{M}{\mathcal{B}}$ is not required for the algorithm, we provide it here for completeness.}:\begin{gather*}
    \begin{array}{|c|cccccc|}
        \toprule
        \rowbasechange{M}{\mathcal{B}}& i& a& b& e& d& F_1\\
        \midrule
        i& 1& 0& 0& 0& 0  & 0\\
        a& 0& 1& 0& 0& 0  & 0\\
        b& 0& 0& 1& 0& 0  & 0\\
        e& 0& 0& 0& 1& 0  & 0\\
        d& 0& 0& 0& 0& 1  & 0\\
        \bottomrule
    \end{array}
    \qquad
    \begin{array}{|c|cccccc|}
        \toprule
        \rowbasechange{N}{\mathcal{B}}& i& a& b& e& d& F_1\\
        \midrule
        i& 0& 0& 0& 0& 0& 0\\
        a& 0& 1& 0& 0& 1& 1\\
        b& 0& 0& 0& 0& 0& 0\\
        e& 1& 0& 1& 0& 1& 1\\
        d& 0& 0& 0& 0& 0& 0\\
        \bottomrule
    \end{array}
\end{gather*}

In Steps~\ref{step:NL-and-NR} and~\ref{step:LS-and-ILS}, we break $\rowbasechange{N}{\mathcal{B}}$ into $N_L$ and $N_R$, and obtain the following linear systems $\LS, \ILS$. We give columns from the third block primed labels, to distinguish them from the columns in the second block:\begin{gather*}
    \ILS = \LS = \begin{array}{|c|c|ccccc|ccccc|}
    \toprule
     & F_1 & i & a & b & e & d & i' & a' & b' & e' & d' \\
    \midrule
    i & 0 & 0 & 0 & 0 & 0 & 0 & 1 & 0 & 0 & 0 & 0 \\
    a & 1 & 0 & 1 & 0 & 0 & 1 & 0 & 1 & 0 & 0 & 0 \\
    b & 0 & 0 & 0 & 0 & 0 & 0 & 0 & 0 & 1 & 0 & 0 \\
    e & 1 & 1 & 0 & 1 & 0 & 1 & 0 & 0 & 0 & 1 & 0 \\
    d & 0 & 0 & 0 & 0 & 0 & 0 & 0 & 0 & 0 & 0 & 1 \\
    \bottomrule
    \end{array}
\end{gather*}

In Step~\ref{step:gaussian-eliminate-LS}, we apply Gaussian elimination to the first $n_O - n_I$ columns, i.e.\ in this case to the first column only. This means adding the second row to the fourth and then swapping the first two rows. We drop previous row labels, as they no longer make sense after performing row operations. (Nevertheless, the relationship to the original vertex-labelled rows continues to be encoded in the third block.) We obtain the following updated $\LS$ linear system:\begin{equation}\label{eq:LS-first-row-echelon}
    \begin{array}{|c|c|ccccc|ccccc|}
    \toprule
     & F_1 & i & a & b & e & d & i' & a' & b' & e' & d' \\
    \midrule
    1 & 1 & 0 & 1 & 0 & 0 & 1 & 0 & 1 & 0 & 0 & 0 \\
    2 & 0 & 0 & 0 & 0 & 0 & 0 & 1 & 0 & 0 & 0 & 0 \\
    3 & 0 & 0 & 0 & 0 & 0 & 0 & 0 & 0 & 1 & 0 & 0 \\
    4 & 0 & 1 & 1 & 1 & 0 & 0 & 0 & 1 & 0 & 1 & 0 \\
    5 & 0 & 0 & 0 & 0 & 0 & 0 & 0 & 0 & 0 & 0 & 1 \\
    \bottomrule
    \end{array}
\end{equation}

Importantly, $\ILS$ remains unchanged. Next, in Step~\ref{step:initialise-L-P}, we initialize the matrix of solutions $\starpart$ and the set of solved vertices $\solved$:\begin{gather*}
    P = \begin{array}{|c|ccccc|}
    \toprule
    &    i& a& b& e& d\\
    \midrule
    F_1& *& *& *& *& *\\
    \bottomrule
    \end{array}
    \qquad\qquad
    \solved = \emptyset
\end{gather*}

Recall the coefficient block of the system (i.e.\ just the first column) is in row echelon form as a result of Step~\ref{step:gaussian-eliminate-LS}. We now go into the \texttt{while} loop of Step~\ref{step:while-loop}.

In the first loop run, we can identify $e$ and $d$ as the vertices whose corresponding linear systems can be solved (Step~\ref{step:find-L}). This is because the coefficient block is $0$ from the second row onwards and $e, d$ are the only vertices whose columns are also $0$ from the second row onwards. It means that $\tosolve = \{ e, d \}$. We find the solutions of the associated linear systems (Step~\ref{step:solve-L}): for $c$ we must set the variable corresponding to the $F_1$ column to $0$ and for $d$ it must necessarily be $1$. Thus the updated matrix $\starpart$ looks as follows (Step~\ref{step:record-solution-in-P}):\begin{gather*}
    \begin{array}{|c|ccccc|}
    \toprule
    &    i& a& b& e& d\\
    \midrule
    F_1& *& *& *& 0& 1\\
    \bottomrule
    \end{array}
\end{gather*}

Now, we must transform the system of linear equations to the form it would have if $e$ and $d$ had never been included (Step~\ref{step:for-loop}).

Consider $e$ first. In Step~\ref{step:find-v-dependent-rows}, we find which rows of the linear system in \eqref{eq:LS-first-row-echelon} use the original row $e$ from the $e'$ column -- in this case, it is only the fourth row. Thus the fourth row is also the last row using the original $e$ row. (If other rows were using $e$, we would add the fourth row to all other rows using $e$ to remove their dependency on the original $e$ row in Step~\ref{step:row-additions-r1-rk-1}.) Next, in Step~\ref{step:row-addition-rk}, we add the $e$ row of $\ILS$ to the fourth row of \eqref{eq:LS-first-row-echelon}, obtaining the following system (note, that this addition breaks the row echelon form in the coefficient block):\begin{equation}\label{eq:LS-after-row-addition}
    \begin{array}{|c|c|ccccc|ccccc|}
    \toprule
     & F_1 & i & a & b & e & d & i' & a' & b' & e' & d' \\
    \midrule
    1 & 1 & 0 & 1 & 0 & 0 & 1 & 0 & 1 & 0 & 0 & 0 \\
    2 & 0 & 0 & 0 & 0 & 0 & 0 & 1 & 0 & 0 & 0 & 0 \\
    3 & 0 & 0 & 0 & 0 & 0 & 0 & 0 & 0 & 1 & 0 & 0 \\
    4 & 1 & 0 & 1 & 0 & 0 & 1 & 0 & 1 & 0 & 0 & 0 \\
    5 & 0 & 0 & 0 & 0 & 0 & 0 & 0 & 0 & 0 & 0 & 1 \\
    \bottomrule
    \end{array}
\end{equation}

After that, in Step~\ref{step:simplify-rk}, we bring the coefficient block back into row echelon form: we cancel as much in the coefficient block of the fourth row as possible: here, we add the first row of \eqref{eq:LS-after-row-addition} to the fourth row, obtaining the following linear system:\begin{equation}\label{eq:LS-after-e}
    \begin{array}{|c|c|ccccc|ccccc|}
    \toprule
     & F_1 & i & a & b & e & d & i' & a' & b' & e' & d' \\
    \midrule
    1 & 1 & 0 & 1 & 0 & 0 & 1 & 0 & 1 & 0 & 0 & 0 \\
    2 & 0 & 0 & 0 & 0 & 0 & 0 & 1 & 0 & 0 & 0 & 0 \\
    3 & 0 & 0 & 0 & 0 & 0 & 0 & 0 & 0 & 1 & 0 & 0 \\
    4 & 0 & 0 & 0 & 0 & 0 & 0 & 0 & 0 & 0 & 0 & 0 \\
    5 & 0 & 0 & 0 & 0 & 0 & 0 & 0 & 0 & 0 & 0 & 1 \\
    \bottomrule
    \end{array}
\end{equation}

The coefficient block of the matrix is in row echelon form, so we do not have to perform any other operations. In particular, we do not need to swap any rows in Step~\ref{step:swap-rk}, though it might be necessary in general.

Next, we move to vertex $d$. Here the situation is even simpler. From column $d'$ of \eqref{eq:LS-after-e}, we read that only the fifth row uses the original $d$ row. After adding the $d$ row from $\ILS$ to the fifth row (Step~\ref{step:row-addition-rk}), we get the following system, where the coefficient block immediately is in row echelon form:\begin{equation}\label{eq:LS-after-d}
    \begin{array}{|c|c|ccccc|ccccc|}
    \toprule
     & F_1 & i & a & b & e & d & i' & a' & b' & e' & d' \\
    \midrule
    1 & 1 & 0 & 1 & 0 & 0 & 1 & 0 & 1 & 0 & 0 & 0 \\
    2 & 0 & 0 & 0 & 0 & 0 & 0 & 1 & 0 & 0 & 0 & 0 \\
    3 & 0 & 0 & 0 & 0 & 0 & 0 & 0 & 0 & 1 & 0 & 0 \\
    4 & 0 & 0 & 0 & 0 & 0 & 0 & 0 & 0 & 0 & 0 & 0 \\
    5 & 0 & 0 & 0 & 0 & 0 & 0 & 0 & 0 & 0 & 0 & 0 \\
    \bottomrule
    \end{array}
\end{equation}

While dealing with $e$ and $d$, in Step~\ref{step:update-S},  we also add them to the set of solved vertices $\solved$, i.e.\ $\solved = \{ e, d \}$. This ends the first loop run.

In Step~\ref{step:find-L} of the second loop run, we observe that $i$, $a$, and $b$ can all be solved. For $i$ and $b$ the value corresponding to variable $F_1$ must be $0$, and for $a$ it must necessarily be $1$ (Step~\ref{step:solve-L}). It leads to the following matrix $\starpart$ (Step~\ref{step:record-solution-in-P}):\begin{gather*}
    \begin{array}{|c|ccccc|}
    \toprule
    &    i& a& b& e& d\\
    \midrule
    F_1& 0& 1& 0& 0& 1\\
    \bottomrule
    \end{array}
\end{gather*}

While this is no longer necessary to find a solution, we nevertheless also show how the linear systems associated with these vertices would be removed from consideration here as this is what the algorithm will do. First, we `remove' the row of $i$ in Steps~\ref{step:find-v-dependent-rows}--\ref{step:swap-rk}. From the $i'$ column of \eqref{eq:LS-after-d}, we know that only the second row uses the $i$ row from $\ILS$. Adding the $i$ row from $\ILS$ to the second row of the linear system in \eqref{eq:LS-after-d}, we obtain:\begin{equation}\label{eq:LS-after-i}
    \begin{array}{|c|c|ccccc|ccccc|}
    \toprule
     & F_1 & i & a & b & e & d & i' & a' & b' & e' & d' \\
    \midrule
    1 & 1 & 0 & 1 & 0 & 0 & 1 & 0 & 1 & 0 & 0 & 0 \\
    2 & 0 & 0 & 0 & 0 & 0 & 0 & 0 & 0 & 0 & 0 & 0 \\
    3 & 0 & 0 & 0 & 0 & 0 & 0 & 0 & 0 & 1 & 0 & 0 \\
    4 & 0 & 0 & 0 & 0 & 0 & 0 & 0 & 0 & 0 & 0 & 0 \\
    5 & 0 & 0 & 0 & 0 & 0 & 0 & 0 & 0 & 0 & 0 & 0 \\
    \bottomrule
    \end{array}
\end{equation}

The coefficient block is in row echelon form. We move to vertex $a$. From the $a'$ column, we read that only the first row uses the $a$ row from $\ILS$. We thus add the $a$ row from $\ILS$ to the first row in \eqref{eq:LS-after-i}, obtaining:\begin{equation}\label{eq:LS-after-a}
    \begin{array}{|c|c|ccccc|ccccc|}
    \toprule
     & F_1 & i & a & b & e & d & i' & a' & b' & e' & d' \\
    \midrule
    1 & 0 & 0 & 0 & 0 & 0 & 0 & 0 & 0 & 0 & 0 & 0 \\
    2 & 0 & 0 & 0 & 0 & 0 & 0 & 0 & 0 & 0 & 0 & 0 \\
    3 & 0 & 0 & 0 & 0 & 0 & 0 & 0 & 0 & 1 & 0 & 0 \\
    4 & 0 & 0 & 0 & 0 & 0 & 0 & 0 & 0 & 0 & 0 & 0 \\
    5 & 0 & 0 & 0 & 0 & 0 & 0 & 0 & 0 & 0 & 0 & 0 \\
    \bottomrule
    \end{array}
\end{equation}

The coefficient block is in row echelon form, so we move to $b$. Again, we find that only one row of \eqref{eq:LS-after-a} -- the third row -- uses the original $b$ row. After adding the $b$ row from $\ILS$ to the third row of the linear system in \eqref{eq:LS-after-a}, we get:\begin{gather*}
    \begin{array}{|c|c|ccccc|ccccc|}
    \toprule
     & F_1 & i & a & b & e & d & i' & a' & b' & e' & d' \\
    \midrule
    1 & 0 & 0 & 0 & 0 & 0 & 0 & 0 & 0 & 0 & 0 & 0 \\
    2 & 0 & 0 & 0 & 0 & 0 & 0 & 0 & 0 & 0 & 0 & 0 \\
    3 & 0 & 0 & 0 & 0 & 0 & 0 & 0 & 0 & 0 & 0 & 0 \\
    4 & 0 & 0 & 0 & 0 & 0 & 0 & 0 & 0 & 0 & 0 & 0 \\
    5 & 0 & 0 & 0 & 0 & 0 & 0 & 0 & 0 & 0 & 0 & 0 \\
    \bottomrule
    \end{array}
\end{gather*}

In the process, at Step~\ref{step:update-S}, vertices $i,a,b$ are added to set $\solved$. Hence, $\solved = \{ i,a,b,e,d \}$ and the loop ends here.

Finally, in Step~\ref{step:build-c-B}, we construct the matrix $\colbasechange{C}{\mathcal{B}} = \left[ \frac{Id_{\comp{O}}}{\starpart} \right]$, i.e.:\begin{gather*}
    \colbasechange{C}{\mathcal{B}} = \begin{array}{|c|ccccc|}
        \toprule
         &   i& a& b& e& d\\
        \midrule
        i&   1& 0& 0& 0& 0\\
        a&   0& 1& 0& 0& 0\\
        b&   0& 0& 1& 0& 0\\
        e&   0& 0& 0& 1& 0\\
        d&   0& 0& 0& 0& 1\\
        \midrule
        F_1& 0& 1& 0& 0& 1\\
        \bottomrule
    \end{array}
\end{gather*}

We finish the computation with Step~\ref{step:output} by computing the correction matrix $C=C'\colbasechange{C}{\mathcal{B}}$ encoding the correction function, and the product $NC = NC'\colbasechange{C}{\mathcal{B}}$ encoding the induced relation.
The two matrices are the same as those shown in Figure~\ref{fig:mats}.

\end{document}